\numberwithin{equation}{section}
\declaretheoremstyle[bodyfont=\it,qed=\qedsymbol]{noproofstyle}
\declaretheorem[name=Observation,numbered=no]{observation*}
\declaretheorem[numberlike=equation]{theorem}
\declaretheorem[name=Theorem,numbered=no]{theorem*}
\declaretheorem[numberlike=equation]{lemma}
\declaretheorem[name=Lemma,numbered=no]{lemma*}
\declaretheorem[numberlike=equation]{corollary}
\declaretheorem[name=Corollary,numbered=no]{corollary*}
\declaretheorem[name=Proposition,numbered=no]{proposition*}
\declaretheorem[numberlike=equation]{claim}
\declaretheorem[name=Claim,numbered=no]{claim*}
\declaretheorem[name=Conjecture,numbered=no]{conjecture*}
\declaretheorem[numberlike=equation]{question}
\declaretheorem[name=Question,numbered=no]{question*}
\declaretheoremstyle[bodyfont=\it,qed=$\lozenge$]{defstyle} 
\declaretheorem[numberlike=equation,style=defstyle]{definition}
\declaretheorem[unnumbered,name=Definition,style=defstyle]{definition*}
\declaretheorem[unnumbered,name=Example,style=defstyle]{example*}
\declaretheorem[numberlike=equation,style=defstyle]{notation}
\declaretheorem[unnumbered,name=Notation=defstyle]{notation*}
\declaretheorem[unnumbered,name=Construction,style=defstyle]{construction*}
\declaretheorem[numberlike=equation,style=defstyle]{remark}
\declaretheorem[unnumbered,name=Remark,style=defstyle]{remark*}
\renewcommand{\phi}{\varphi}
\renewcommand{\epsilon}{\varepsilon}
\newcommand{\shortECCC}[2]{\texttt{\href{http://eccc.hpi-web.de/report/\ifnumcomp{#1}{>}{93}{19}{20}#1/#2/}{eccc:TR#1-#2}}}
\newcommand{\parseECCC}[1]{% Takes a string of the form TRxx/xxx or
%                          % TRxx-xxx and returns short ECCC link
\StrSubstitute{#1}{TR}{}[\tmpstring]%
\IfSubStr{\tmpstring}{/}{ %assuming string is of the form TRxx/xxx
\StrBefore{\tmpstring}{/}[\ecccyear]%
\StrBehind{\tmpstring}{/}[\ecccreport]%
}{% assuming string is of the form TRxx-xxx
\StrBefore{\tmpstring}{-}[\ecccyear]%
\StrBehind{\tmpstring}{-}[\ecccreport]%
}%
\shortECCC{\ecccyear}{\ecccreport}}
\newcommand{\ehref}[1]{\href{mailto:#1}{#1}}
\renewcommand{\E}{\mathsf{E}}
\newcommand{\qedh}{\hfill\square}
\newcommand\inner[2]{\langle #1, #2 \rangle}
\newcommand{\outdim}{{\sf OuterDim}}
\newcommand{\indim}{{\sf InnerDim}}
\newcommand{\colsp}{{\sf colspace}}
\newtheorem{obs}{Observation}
\newcommand{\ssd}{{\sf SSD}}
\newcommand{\spar}{{sparsity}}
\newcommand{\rk}{{rank}}
\newcommand{\CSP}{{\sf CSP}}
\newcommand{\perm}{{perm}}
\newcommand{\rep}{{rep}}
\title{ Recent Progress on Matrix Rigidity \\
%{\large (The harsh $\mathbb{R}$eality of Rigid Matrices)}
}
\author{	C. Ramya\thanks{\ehref{c.ramya@tifr.res.in}. Research supported by a fellowship of the DAE, Government of India.}\\
	Tata Institute of Fundamental Research, Mumbai, India.}
\begin{document}
	
\maketitle

\begin{abstract}
The concept of {\em matrix rigidity} was introduced by Valiant(independently by Grigoriev) in the context of computing linear transformations.  A matrix is {\em rigid} if it is {\em far} (in terms of Hamming distance) from any matrix of {\em low} rank. Although we know rigid matrices exist, obtaining {\em explicit} constructions of rigid matrices have remained a long-standing open question.  This decade has seen tremendous progress towards understanding matrix rigidity. In the past, several matrices such as Hadamard matrices and Fourier matrices were conjectured to be rigid. Very recently, many of these matrices were shown to have low rigidity. Further, several explicit constructions of rigid matrices in classes such as $\E$ and $\P^{\NP}$ were obtained recently. Among other things, matrix rigidity has found striking connections to areas as disparate as communication complexity, data structure lower bounds and error-correcting codes. In this survey, we present a selected set of results that highlight recent progress on matrix rigidity and its remarkable connections to other areas in theoretical computer science.
\end{abstract}

\tableofcontents

\section{Introduction}
The concept of {\em matrix rigidity} was introduced by Valiant \cite{Val77} in the context of computing linear transformations by arithmetic circuits and was also studied independently by Grigoriev in \cite{Gri76}.

The rigidity of a matrix $A\in\mathbb{F}^{n\times n}$ for rank $r$ over $\mathbb{F}$ (denoted by $R_A^{\mathbb{F}}(r)$) is the minimum number of entries to be changed in $A$ so that rank of matrix $A$ becomes $r$. More formally,  $$R_A^{\mathbb{F}}(r) \triangleq \min_{C}\{ \spar(C) \mid C\in\mathbb{F}^{n\times n},~ \rk(A+C)\leq r \}$$ 
where $\spar$ of a matrix $C$ denotes the number of non-zero entries in $C$.

A matrix is {\em rigid} if it is far in terms of Hamming distance from any low rank matrix. {\em Matrix rigidity} is an interesting and intriguing concept in that sense that it intertwines a combinatorial property such as the sparsity of a matrix with an algebraic property namely the rank of a matrix.

%We will drop the superscript $\mathbb{F}$ when the field is clear from the context.

For instance, the rigidity of an $n\times n$ identity matrix $I_n$ for rank $r$ is exactly $(n-r)$. Trivially, for any matrix $A\in\mathbb{F}^{n\times n}$ and for any $r \leq n$, the rigidity of $A$ is at most $n^2$. In fact, it is not difficult to observe that for any matrix $A\in\mathbb{F}^{n\times n}$ and for any $r \leq n$, $R_A^{\mathbb{F}}(r) \leq (n-r)^2$. Moreover, over finite fields {\em most} matrices have high rigidity (rigidity close to the upper bound).  Further, over infinite fields, for every choice of $n$ there exists an $n\times n$ matrix $A$ such that $R_A(r)=(n-r)^2$ for any $r$. Although the existence of rigid matrices is quite straight-forward, the major goal is to prove a super-linear lower bound on the rigidity of explicit $n\times n$ matrices. We say a sequence of matrices $\{A_n\}_{n\in \mathbb{N}}$ is {\em explicit} if there exists a deterministic algorithm that on input $n$ (in unary) outputs $A_n$ in time $\poly(n)$. The following question was posed by Valiant in  \cite{Val77} and has remained a tantalizing open problem:

\begin{question}
\label{que:qu1}
Does there exist an explicit sequence of matrices $(A_n)_{n\in\mathbb{N}}$ with entries in $\mathbb{F}$ such that $R_{A_n}^{\mathbb{F}}(\epsilon n) = \Omega(n^{1+\delta})$ for some $\epsilon,\delta >0$?
\end{question}

%In this article, we will review progress made towards answering the above question. 
As mentioned earlier Question \ref{que:qu1} has connections to arithmetic circuits computing linear transformations. The study of linear transformations are central to linear algebra. Linear transformations such as the Discrete Fourier Transform, Fast Fourier Transform are of practical importance. A {\em linear circuit} is a directed acyclic graph C where every gate is either an input gate or computes a linear combination of its inputs. The {\em size} of a linear circuit is the number of edges in it and the {\em depth} of a linear circuit is the length of the longest path from the input to the output gate. Valiant \cite{Val77} observed that if
any linear transformation is computable by a {\em small}-size {\em small}-depth linear circuit then the
corresponding transformation matrix does not have {\em high} rigidity. In other words, for any $A\in\mathbb{F}^{n\times n}$ if $R_A(\epsilon n)\geq n^{1+\delta}$ for some $\epsilon,\delta>0$ then any linear circuit computing the transformation $A:x\mapsto A\cdot x$ must have either size $\Omega(n\log \log n)$ or depth $\Omega(\log n)$. Thus, rigidity lower bounds imply super-linear size lower bounds on linear circuits of logarithmic depth.  This brings us to the following question, a variant of Question \ref{que:qu1}:

\begin{question}
\label{que:rigidity-Valiant}
Does there exist an explicit sequence of matrices $(A_n)_{n\in\mathbb{N}}$ with entries in $\mathbb{F}$ such that $R_{A_n}^{\mathbb{F}}\left(\frac{n}{\log\log n}\right) = \Omega(n^{1+\delta})$ for some $\delta >0$?
\end{question}

The earliest works on matrix rigidity were due to Valiant\cite{Val77} and Razborov\cite{Raz89}.

The connections between communication complexity of boolean functions and matrix rigidity were first explored by Razborov\cite{Raz89}. Whenever we think of matrices in the  communication complexity setting the most natural candidates are communication matrices of boolean functions. Consider the two-party communication model with two parties {\em Alice} and {\em Bob} who want to jointly compute a boolean function  $f:\{0,1\}^{2n}\rightarrow \{0,1\}$ where the input is partitioned between the two parties. For any boolean function $f:\{0,1\}^{n} \times \{0,1\}^n\rightarrow \{0,1\}$ , the {\em communication matrix} $M_f$ is a $2^n\times 2^n$ where the rows and columns are indexed by strings in $\{0,1\}^n$ and $M_f[x,y]=f(x,y)$ for all $x,y\in \{0,1\}^n$.

Razborov in \cite{Raz89,Wun12} considered the complexity class $\PH^{cc}$, the communication complexity analogue of the polynomial hierarchy (see \cite{GPW18} for a formal definition of $\PH^{cc}$) and showed that for any function $f:\{0,1\}^n\rightarrow \{0,1\}$ in $\PH^{\cc}$, $R_{M_f}(2^{(\log n/\delta)^c})\leq \delta\cdot 2^{2n}$ where $\delta >0$ is arbitrary constant and $M_f$ is the $2^n\times 2^n$ communication matrix. Thus, lower bounds on the rigidity of explicit matrices immediately imply communication complexity lower bounds, a long-standing open question. This leads us to the following question which is quite similar to that of Question \ref{que:qu1} except for the parameters:

\begin{question}
\label{que:rigidity-Razborov}
For $N=2^n$, does there exist an explicit sequence of matrices $(A_N)_{N\in\mathbb{N}}$ with entries in $\mathbb{F}_2$ such that $R_{A_N}^{\mathbb{F}_2}(2^{(\log n/\delta)^c})\geq \delta\cdot 2^{2n}$ for some $\delta >0$?
\end{question}

Although we have not been able to obtain decisive answers to any of these questions, there has been considerable progress towards understanding Questions \ref{que:qu1}, \ref{que:rigidity-Valiant} and \ref{que:rigidity-Razborov} in the recent years. In fact, several interesting matrix families {\em were} conjectured to be rigid:

\begin{quote}    
"Many candidate matrices are conjectured to have rigidity as high as in Valiant’s question. Examples include Fourier transform matrices, Hadamard matrices, Cauchy matrices, Vandermonde matrices, incidence matrices of projective planes, etc." 
\hfill ---Page 15, \cite{Lok09}.
\end{quote}

In this article, we survey some of the recent developments  on the non-rigidity of some of the matrix families conjectured above. In particular, we review the following results:
\begin{itemize}
\item Non-rigidity of {\em Walsh-Hadamard matrix} by Alman and Williams\cite{AW17};
\item Non-rigidity of {\em generalized Hadamard matrices} due to Dvir and Liu\cite{DL19};
\item Non-Rigidity of certain matrices associated with functions over finite fields\cite{DE17}; and
\item Non-Rigidity of {\em Fourier} and {\em circulant} matrices\cite{DL19}.
\end{itemize}

Even though we are currently far away from answering Questions \ref{que:qu1}, \ref{que:rigidity-Valiant} and \ref{que:rigidity-Razborov}, several {\em semi-explicit} constructions of rigid matrices were obtained quite recently. In this regard, we survey the following results:

\begin{itemize}
\item Rigidity of Random Toeplitz matrices by Goldreich and Tal\cite{GT18};
\item Sub-exponential time constructions of rigid matrices\cite{KV19}; and
\item Explicit rigid matrices in the class $\P^{\NP}$ based on constructions of probabilistically checkable proofs($\PCP$s)\cite{AC19,BHPT20}.
\end{itemize}

However, the parameters in the above mentioned results are different from each other. The first two of the above mentioned results are towards answering Question \ref{que:qu1} while the construction in \cite{AC19} is in the spirit of answering Question \ref{que:rigidity-Razborov}.

Despite consistent efforts in obtaining rigid matrices, answering Question \ref{que:qu1} seems to be a distant dream. This difficulty is justified by understanding connections between explicit constructions of rigid matrices and other {\em hard} problems in theoretical computer science such as explicit constructions of error-correcting codes, communication complexity lower bounds as well as data structure lower bounds. In this regard, we discuss in detail the following recent connections between matrix rigidity and 
data structure lower bounds as well as linear codes:
\begin{itemize}
\item Proving data structure lower bounds is a fundamental open problem in theoretical computer science. A major goal  has been to understand {\em time-space tradeoffs}. That is, in the static setting how does one optimize space such that data structure queries can be answered quickly. In \cite{DGW19}, the authors show that a super-logarithmic lower bound on the query time of a linear data structure with linear space implies an answer to Question \ref{que:rigidity-Valiant} where the matrix of high rigidity is constructible in the class $\P^{\NP}$. Though 
constructions of rigid matrices in $\P^{\NP}$ are now available to us via $\PCP$s.

\item In the theory of error-correcting codes, {\em linear codes} are particularly useful. One can verify that {\em asymptotically good codes} yield generator matrices of high rigidity. We review a result by Dvir \cite{Dvir11} which states that if the generating matrix of a locally decodable code is not rigid, then it
defines a locally self-correctable code with rate close to one.
\end{itemize}

Before we delve into proving upper and lower bounds on matrix rigidity, let us investigate the computational complexity of computing the rigidity of a given matrix. Consider the problem ${\sf RIGID}(A,\mathbb{F},s,r)$ of deciding if $R_A^{\mathbb{F}}(r)\leq s$ given a matrix $A\in\mathbb{F}^{n\times n}$ and $s,r\in \mathbb{Z}^{+}$. Note that we can guess a matrix $S\in \mathbb{F}^{n\times n}$ of sparsity at most $s$ and test if $\rk(A-S)\leq r$.

\begin{itemize}
    \item Over finite fields, this problem is in the class $\NP$ and in fact ${\sf RIGID}(A,\mathbb{F}_q,s,r)$ is known to be $\NP$-complete\cite{Des07}.
    \item Given a matrix $A\in\mathbb{R}^{n\times n}$ let $m=\rk(A)$. We can brute force over all matrices of sparsity at most $s$ and test if there is a setting of these $s$ entries to real numbers such that the $rank(S)\leq r-m$.  This is in ${\sf PSPACE}$ as computing the minimum rank of a pattern matrix is the class $\exists\mathbb{R}$ (existential theory of reals). Hence,  ${\sf RIGID}(A,\mathbb{R},s,r)$ is in ${\sf PSPACE}$ (where the underlying computational model can handle real numbers).
 \item ${\sf RIGID}(A,\mathbb{Q},s,r)$ is not known to be decidable.
\end{itemize}

In the parameterized regime, ${\sf RIGID(A,\mathbb{F}_q,s,r)}$ is known to be fixed parameter tractable when $\mathbb{F}=\mathbb{F}_q$. The computational complexity of several variants of ${\sf RIGID(A,\mathbb{F},s,r)}$ has been studied extensively in \cite{MS10}.

\paragraph*{Organization of the article.} Rest of the article is organized as follows. In Section \ref{sec:past-progress}, we begin with some basic facts on rigidity. The goal of Section \ref{sec:past-progress} is to understand certain important classical progress made towards understanding rigidity so that successive sections are more accessible to the reader. For details on statements and proofs in Section \ref{sec:past-progress}, we refer the reader to an excellent survey by Satya Lokam \cite{Lok09} and references therein. In Section \ref{sec:lower-bounds}, we review recent explicit constructions of matrices that achieve rigidity to a large extent possible.  As mentioned earlier, several well-known families of matrices were recently ruled out from having rigidity and we survey these results in Section \ref{sec:upper-bounds}. In Sections \ref{sec:data-structures} and \ref{sec:codes} we investigate the connections between rigid matrices, static data structure lower bounds and error-correcting codes. We conclude with some open problems in Section \ref{sec:discussion}.

%That is, by a counting argument, we get that over any finite field $\mathbb{F}_q$, the fraction of $n \times n$ matrices over $\mathbb{F}_q$ with rigidity at most $\frac{(n-r)^2}{\log n}$ for rank $r$ is $O(1/n)$. As a result, the fraction of $n \times n$ matrices over $\mathbb{F}_q$ with rigidity at most $\frac{(n-r)^2}{\log n}$ for rank $r$ approaches 1 as $n$ increases. Hence, almost all matrices over $\mathbb{F}_q$ have rigidity $\Omega\left(\frac{(n-r)^2}{\log n}\right)$ for rank $r$. 

\section{Past Progress on Matrix Rigidity}
\label{sec:past-progress}

To begin with, we prove a straight-forward upper bound on the rigidity of any matrix.

\begin{lemma}
\label{lem:rigid-upperbound}
Let $A$ be an $n\times n$ matrix with entries from $\mathbb{F}$. For any $r\leq n$, $R_A^{\mathbb{F}}(r)\leq (n-r)^2$. 
\end{lemma}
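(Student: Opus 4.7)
The plan is to exhibit a low-rank matrix $A'$ within Hamming distance $(n-r)^2$ of $A$, so that the ``correction'' matrix $C = A' - A$ has sparsity at most $(n-r)^2$ and $\operatorname{rank}(A+C) \le r$.

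First I would dispose of the easy case: if $\operatorname{rank}(A) \le r$ already, then $C = 0$ works and $R_A^{\mathbb{F}}(r) = 0 \le (n-r)^2$. So I may assume $\operatorname{rank}(A) \ge r$, which guarantees the existence of some $r \times r$ submatrix of $A$ that is invertible. After permuting rows and columns (which does not change rigidity), I place this invertible submatrix in the top-left corner and write $A$ in block form
\[
A \;=\; \begin{pmatrix} B & X \\ Y & Z \end{pmatrix},
\]
where $B \in \mathbb{F}^{r\times r}$ is invertible, $X \in \mathbb{F}^{r\times (n-r)}$, $Y \in \mathbb{F}^{(n-r)\times r}$, and $Z \in \mathbb{F}^{(n-r)\times (n-r)}$.

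Next I would construct the approximant
\[
A' \;=\; \begin{pmatrix} B & X \\ Y & Y B^{-1} X \end{pmatrix},
\]
i.e., replace only the bottom-right $(n-r)\times(n-r)$ block of $A$ by the Schur-complement-style expression $Y B^{-1} X$. The key computation is that $A'$ has rank exactly $r$: the last $n-r$ block rows of $A'$ equal $Y B^{-1}$ times the first $r$ block rows, since $Y B^{-1} \cdot (B \mid X) = (Y \mid Y B^{-1} X)$. Therefore the row span of $A'$ is contained in the span of its first $r$ rows, giving $\operatorname{rank}(A') \le r$.

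Finally I would observe that $C := A' - A$ is zero outside the bottom-right $(n-r)\times(n-r)$ block, so $\operatorname{sparsity}(C) \le (n-r)^2$, and $\operatorname{rank}(A+C) = \operatorname{rank}(A') \le r$, which by definition yields $R_A^{\mathbb{F}}(r) \le (n-r)^2$. There is essentially no ``hard step'' here; the only mild subtlety is justifying that a permutation of rows and columns does not affect the rigidity (since such a permutation affects neither rank nor the number of nonzero entries of the correction), and handling the degenerate case $\operatorname{rank}(A) < r$ separately as above.
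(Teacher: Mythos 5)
Your proof is correct and follows essentially the same approach as the paper: reduce to the case $\operatorname{rank}(A)\ge r$, permute so that an invertible $r\times r$ block sits top-left, and then replace the bottom-right $(n-r)\times(n-r)$ block so that the last $n-r$ rows become linear combinations of the first $r$. The only difference is cosmetic: you write the replacement explicitly as the Schur-complement expression $YB^{-1}X$, whereas the paper phrases it implicitly in terms of the coefficients $\alpha$ of the row dependencies; your version is slightly more precise in that it makes clear one needs an \emph{invertible} $r\times r$ submatrix (not merely $r$ independent rows and $r$ independent columns), which the paper glosses over.
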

\begin{proof}
Let $A\in \mathbb{F}^{n\times n}$. If $\rk(A)\leq r$ then $R_{A}(r)=0$. Therefore, assume $\rk(A)>r$. Then there exists $r$ linearly independent rows $R_1,\ldots,R_r$ and $r$ linearly independent columns $C_1,\ldots,C_r$ in $A$. The rows and columns of $A$ can be 
permuted such that $R_1,\ldots,R_r$ and $C_1,\ldots,C_r$ are the first $r$ rows and columns of $A$ respectively (denoted by sub-matrix $A_{11}$ in Figure \ref{fig:rigidity-upperbound}). Let 
$R_1',\ldots,R_{n-r}'$
and $C_1',\ldots,C_{n-r}'$ be rows and columns of $A_{21}$ and $A_{12}$ respectively. Observe that there exists constants $\alpha_{i_1},\ldots,\alpha_{i_r}$ in $\mathbb{F}$ such that for all $i\in [n-r]$,  $R_i'=\alpha_{i_1}R_1+\ldots+ \alpha_{i_r}R_r$.  
\begin{figure}[H]
\centering
\includegraphics[scale=0.7]{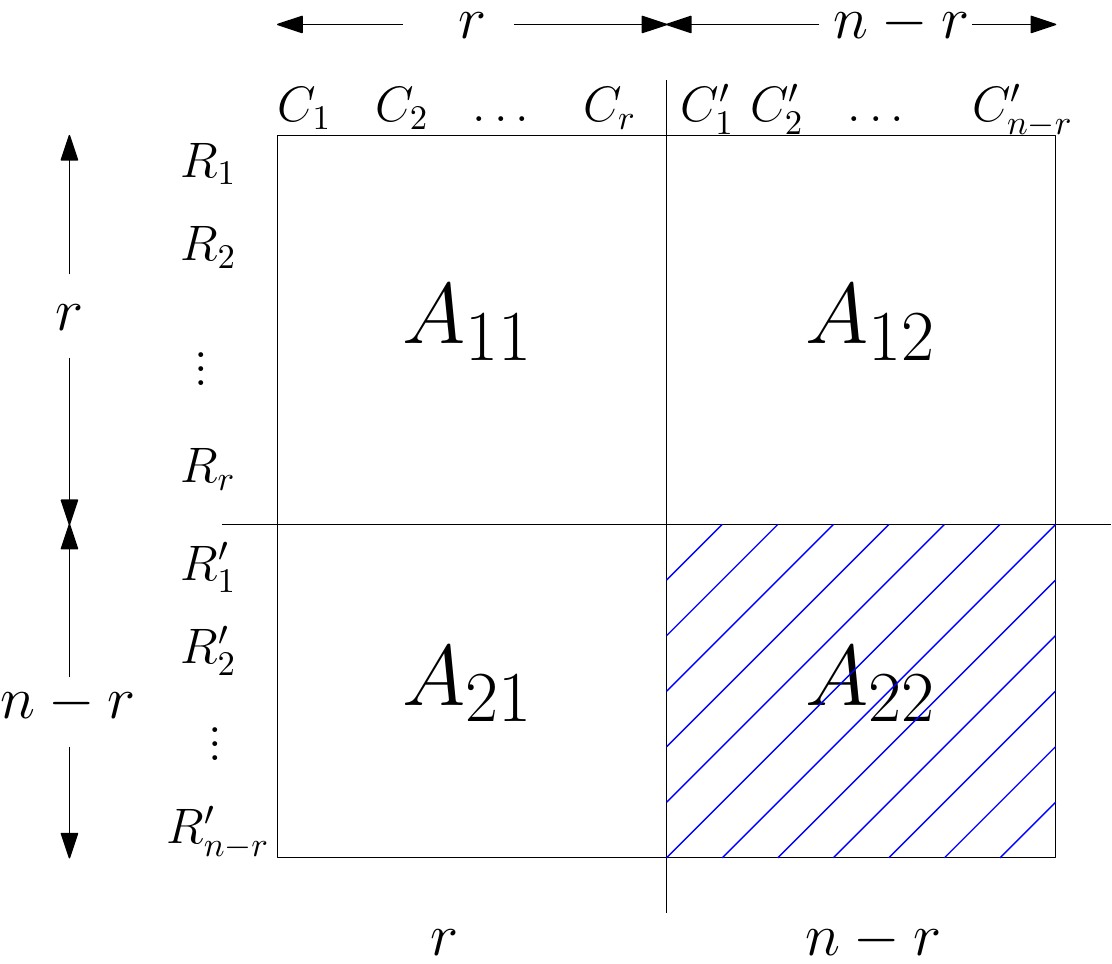}
\caption{Rigidity upper bound on a matrix $A$}
\label{fig:rigidity-upperbound}
\end{figure}

Now, by altering the entries of the sub-matrix $A_{22}$ based on the values of $\alpha$'s, we can ensure that every row of $A$ is a linear combination of the rows of the sub-matrix $[A_{11} ~\vert~ A_{12}]$ implying that
rank of $A$ is $r$. Since $A_{22}\in \mathbb{F}^{(n-r)\times (n-r)}$, we get $R_A^{\mathbb{F}}(r)\leq (n-r)^2$.\\
\end{proof}

In fact,  over finite fields, for {\em most} matrices the above rigidity upper bound is tight.

\begin{lemma}
Let $\mathbb{F}_q$ be a finite field. The fraction of $n \times n$ matrices over $\mathbb{F}_q$ with rigidity at most $(n-r)^2/\log n$ for rank $r$ is $O(1/n)$.
\end{lemma}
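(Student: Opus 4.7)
The plan is a standard union bound over decompositions: a matrix $A\in\mathbb{F}_q^{n\times n}$ satisfies $R_A^{\mathbb{F}_q}(r)\le s$ precisely when $A = L + S$ for some $L$ with $\rk(L)\le r$ and $S$ with $\spar(S)\le s$. I will count the pairs $(L,S)$ and divide by the total $q^{n^2}$.

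For the low-rank factor, the number of matrices of rank at most $r$ over $\mathbb{F}_q$ is at most $n^{O(1)}\cdot q^{n^2-(n-r)^2}$; this follows by parametrizing a rank-$k$ matrix as $L = UV$ with $U\in\mathbb{F}_q^{n\times k}$ and $V\in\mathbb{F}_q^{k\times n}$, dividing by the $|\mathrm{GL}_k(\mathbb{F}_q)|$-fold redundancy, and summing over $k\le r$. For the sparse factor, the standard binomial tail bound gives
\[
\sum_{i=0}^{s}\binom{n^2}{i}(q-1)^i \;\le\; (s+1)\binom{n^2}{s}(q-1)^s \;\le\; n^{O(1)}\cdot\left(\frac{en^2}{s}\right)^{\!s}\! q^s.
\]

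Multiplying these two counts and dividing by $q^{n^2}$, with $s = (n-r)^2/\log n$, yields
\[
\Pr_A\!\left[R_A^{\mathbb{F}_q}(r)\le s\right] \;\le\; n^{O(1)}\cdot q^{s-(n-r)^2}\cdot\left(\frac{en^2}{s}\right)^{\!s}.
\]
Taking $\log_q$ of the right-hand side, the exponent works out to
\[
(n-r)^2\!\left[\frac{1}{\log n} - 1 + \frac{\log_q\!\bigl(en^2\log n/(n-r)^2\bigr)}{\log n}\right] + O(\log n),
\]
and in the regime $n-r=\Omega(n)$ relevant to Valiant's question, the logarithm inside is $O(\log\log n)$, so the bracket simplifies to $-1+o(1)$ and the whole exponent becomes $-\Omega(n^2)$. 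The resulting probability $q^{-\Omega(n^2)}$ is vastly stronger than the claimed $O(1/n)$.

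The principal subtlety is that the calculation is tight: the $1/\log n$ slack on $s$ is precisely what lets the binomial factor $(en^2/s)^s$ be absorbed into the rank exponent $q^{(n-r)^2}$. When $n-r$ is small compared to $n$ the bracket is less negative but still bounded away from zero as long as $(n-r)^2$ is sufficiently larger than $\log n$, and the degenerate case $(n-r)^2<\log n$ forces $s=0$, which reduces the statement to the Gaussian-binomial estimate for the fraction of matrices of rank $\le r$. In every regime where the lemma is nontrivial, the crude counting bound comfortably beats the $O(1/n)$ target.
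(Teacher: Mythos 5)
Your proof is essentially the paper's: decompose $A=L+S$, count matrices of rank $\le r$ (roughly $q^{n^2-(n-r)^2}$ up to polynomial factors), count matrices of sparsity $\le s$ (roughly $\binom{n^2}{s}q^s$), multiply, and compare to $q^{n^2}$. One small caveat: your claim that the bracket is ``bounded away from zero'' whenever $(n-r)^2 \gg \log n$ is slightly too strong when the field is small (say $q=2$) and $n-r=\Theta(\sqrt n)$, where $\log_q(en^2\log n/(n-r)^2)/\log n \to 1/\log_2 q = 1$ and the bracket approaches $0$ rather than a negative constant; the paper's own proof handles this by writing $s < c_2(n-r)^2/\log n$ for a sufficiently small constant $c_2$, a constant which the lemma statement itself silently omits — so this is an informality shared with the source rather than an error of yours.
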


\begin{proof}
 For any matrix $A\in \mathbb{F}_q^{n\times n}$, $R_A(r)\leq s$ if $A=S+L$ where $\rk(L)\leq r$ and $S$ has at most $s$ non-zero entries. Therefore, we first count the number of matrices  over $\mathbb{F}_q$ of rank at most $r$ and of sparsity at most $s$. The number of matrices over $\mathbb{F}_q$ of rank at most $r$ is at most $\binom{n}{r}\cdot q^{2r(n-r)}$. The number of matrices over $\mathbb{F}_q$ of sparsity at most $s$ is at most $\binom{n^2}{s}\cdot q^s$. Thus, the number of $n \times n$ matrices over $\mathbb{F}_q$ with rigidity at most $s$ for rank $r$ is at most $q^{2nr-r^2+s+2s\log_q n + n\log_q 2}$. When $r\leq n-c_1\sqrt{n}$ and $0\leq s < c_2(n-r)^2/\log n$ for some constants $c_1, c_2$, we have $q^{2nr-r^2+s+2s\log_q n + n\log_q 2} \leq O(1/n)\cdot q^{n^2}$. \\
\end{proof}

As a corollary, the fraction of $n \times n$ matrices over $\mathbb{F}_q$ with rigidity at most $\frac{(n-r)^2}{\log n}$ for rank $r$ approaches 1. Hence, almost all matrices over $\mathbb{F}_q$ have rigidity $\Omega\left(\frac{(n-r)^2}{\log n}\right)$ for rank $r$. Similarly, over fields of infinite characteristic one can show that  for every choice of $n$ there exists an $n\times n$ matrix $A$ such that $R_A(r)=(n-r)^2$ for any $r$.

 The best known lower bounds on rigidity for explicit matrices over finite fields is an $\Omega(\frac{n^2}{r}\log \frac{n}{r})$ for $\log^2 n \leq r \leq n/2$ due to Friedman \cite{Fri93}. The best known lower bounds on rigidity for explicit matrices is an $\Omega(\frac{n^2}{r}\log \frac{n}{r})$ for $\log^2 n \leq r \leq n/2$ due to  Shokrollahi, Spielman, and Stemann \cite{SSS97}. The lower bounds in \cite{SSS97} apply to any {\em totally regular matrix} and use the following combinatorial approach called the {\em untouched minor argument}: \\

\noindent\textbf{\textit{The untouched minor argument.}} 
\begin{quote}
Consider a matrix $A$ almost all of whose minors have  rank $\Omega(r)$. Then, even after changing a {\em few} entries in $A$, there is at least one minor in $A$ that is "untouched" and the rank of $A$ remains $\Omega(r)$. Thus, in order to reduce the rank of $A$ to less than $r$, every minor in $A$ must be altered requiring a large number of entries of $A$ to be changed.    
\end{quote}

Matrices all of whose minors are full-rank are called {\em totally regular} matrices. A standard example of a totally regular matrix is the Cauchy matrix $C=\{ c_{ij}\}_{i,j\in [n]}, c_{ij}=\frac{1}{x_i+y_j}$ for $2n$ distinct
elements $x_1,\ldots,x_n,y_1,\ldots,y_n \in \mathbb{F}$.

\begin{theorem}
\label{thm:totally-regular}
Let $M$ be any totally regular matrix and $\log^2 n \leq r \leq n/2$. Then, $R_{M}(r)=\Omega(\frac{n^2}{r}\log \frac{n}{r})$. 
\end{theorem}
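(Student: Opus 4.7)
The plan is to carry out a quantitative, probabilistic version of the untouched minor argument. Suppose toward contradiction that $M = L + C$ with $\mathrm{rank}(L) \leq r$ and $\mathrm{sparsity}(C) = s$; I would aim to lower-bound $s$ by $\Omega(n^2 \log(n/r)/r)$.

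First I would fix a parameter $t \geq r+1$ (to be optimized) and, for an arbitrary column subset $T \subseteq [n]$ of size $t$, restrict to $M|_T = L|_T + C|_T$. A row $i$ of $M|_T$ coincides with its counterpart in $L|_T$ precisely when $\mathrm{supp}(C_i) \cap T = \emptyset$; such rows lie in the $r$-dimensional row-span of $L|_T$. Total regularity of $M$, however, forces every $(r+1) \times (r+1)$ submatrix to be invertible, so any $r+1$ rows of $M|_T$ are linearly independent and cannot all lie in an $r$-dimensional subspace. Thus $|\{i : \mathrm{supp}(C_i) \cap T = \emptyset\}| \leq r$ for every such $T$.

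Next I would average this pointwise bound over a uniformly random $T$ of size $t$. Writing $d_i = |\mathrm{supp}(C_i)|$ so that $\sum_i d_i = s$, the probability that row $i$ is untouched is $\binom{n-d_i}{t}/\binom{n}{t}$, yielding
\[
\sum_{i=1}^n \frac{\binom{n-d_i}{t}}{\binom{n}{t}} \;\leq\; r.
\]
A short discrete second-difference calculation shows that $d \mapsto \binom{n-d}{t}/\binom{n}{t}$ is convex in the relevant range, so Jensen's inequality applied to the empirical average gives $n \cdot \binom{n - s/n}{t}/\binom{n}{t} \leq r$. Bounding the left side from below by $(1 - (s/n)/(n-t+1))^t$ and taking $t$-th roots yields $s \geq n(n - t + 1)(1 - (r/n)^{1/t})$. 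Setting $t = r+1$ and using $(r/n)^{1/(r+1)} = e^{-\log(n/r)/(r+1)}$ together with $1 - e^{-x} \geq x/2$ for $x \in [0,1]$ delivers $s = \Omega(n^2 \log(n/r)/r)$; the hypothesis $r \geq \log^2 n$ is precisely what ensures $\log(n/r)/(r+1) \leq 1$ so that the linear approximation to $1 - e^{-x}$ applies, and $r \leq n/2$ gives $n - r \geq n/2$.

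The main obstacle---and the step where the nontrivial idea enters---is closing the gap between the naive bound and the target. A direct union bound over $(r+1) \times (r+1)$ submatrices only gives $R_M(r) = \Omega((n/r)^2)$, missing a factor of roughly $(n/r)/\log(n/r)$. The improvement comes from averaging the untouched-rows inequality over random column subsets of the optimal size $t = r+1$ and exploiting convexity of the untouched-probability as a function of $d_i$. Verifying this convexity via the discrete second difference of $\binom{n-d}{t}$ is the one computation I expect to have to check carefully; everything else is a clean chain of inequalities.
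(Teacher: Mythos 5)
Your proof is correct and follows the same untouched-minor strategy as the paper: total regularity forces the bipartite graph of unchanged positions to be $K_{r+1,r+1}$-free (any $r{+}1$ rows agreeing with $L$ on some $(r{+}1)$-column set $T$ would give a full-rank $(r{+}1)\times(r{+}1)$ block of $M$ lying in the $r$-dimensional row span of $L|_T$), and a K\H{o}v\'ari--S\'os--Tur\'an-type count then forces $s=\Omega\bigl(\tfrac{n^2}{r}\log\tfrac{n}{r}\bigr)$. The only difference is presentational: the paper cites the Zarankiewicz bound as a black box, whereas you rederive it inline via averaging over random column subsets of size $t=r+1$ together with discrete convexity of $d\mapsto\binom{n-d}{t}$, which makes the argument self-contained while giving the identical parameter trade-off.
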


\noindent \textit{Proof Sketch.} Let $M$ be any totally regular matrix. For the sake of contradiction, assume that $R_{M}(r)=o(\frac{n^2}{r}\log \frac{n}{r})$. Then rank of $M$ can be reduced to $r$ by altering $o(\frac{n^2}{r}\log \frac{n}{r})$ entries of $M$. The entries of $M$ can be viewed as a bipartite graph $G_M=(U,V,E)$ with $|U|=|V|=n$ where $(u,v)\in E(G_M)$  if and only if  entry $M_{u,v}$ was not altered to reduce the rank of $M$. Intuitively as $G_M$ has many edges, it is likely that $G_M$ has a reasonably large complete bipartite subgraph. If fewer than $\frac{n^2}{r}\log \frac{n}{r}$ entries were changed in $M$, then $G_M$ has at least $n^2-(\frac{n^2}{r}\log \frac{n}{r})$ edges when $r\leq n/2$. In order to show this, we appeal to the {\em Zarankiewicz problem} in extremal graph theory that counts the maximum number of edges in any bipartite graph that forbids a reasonably large complete bipartite subgraph. If $r\geq \log^2 n$ then any bipartite graph with at least $n^2-(\frac{n^2}{r}\log \frac{n}{r})$ edges has a complete bipartite subgraph $K_{r+1,r+1}$. This immediately implies that there is an $(r+1)\times (r+1)$ sub-matrix $M'$ of $M$ that remains untouched. As $M$ be any totally regular, $\rk(M')= r+1$ a contradiction. Hence, $R_{M}(r)=\Omega(\frac{n^2}{r}\log \frac{n}{r})$. 

$\qedh$ %\textcolor{red}{why is an upper bound on $r$ required ?} \\

However, the untouched minor argument has its own limitations and cannot be improved to obtain better lower bounds so as to answer Question \ref{que:rigidity-Valiant}(see Section 2.2.1 in \cite{Lok09} for more details).

Recall that the major goal here is to prove an $\Omega(n^{1+\delta})$ lower bound on the rigidity of $n\times n$ matrices for rank $\epsilon n$ for some $\epsilon,\delta >0$. In fact, we now demonstrate $\Omega(n^2)$ lower bounds on the rigidity of certain matrices for rank $\Omega(n)$ from \cite{Lok06,Lok00}.

\begin{theorem}
\label{thm:ssd-rigidity}
Let $p_{11},p_{12},\ldots,p_{nn}$ be $n^2$ distinct primes. Let $P$ be the $n\times n$ matrix given by $P_{ij}=\sqrt{p_{ij}}$. Then, $R_{P}(r)\geq n(n-16r)$. 
\end{theorem}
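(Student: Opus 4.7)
The plan is to argue by contradiction. Suppose $R_P(r) < n(n-16r)$, so $P = L + S$ with $\rk(L) \leq r$ and $\spar(S) < n(n-16r)$. I will find index sets $I, J \subseteq [n]$ with $|I|, |J| > r$ such that $S$ vanishes on $I \times J$; then $P|_{I \times J} = L|_{I \times J}$ has rank at most $r$, while an algebraic property of $P$ will force its rank on $I \times J$ to be $\min(|I|, |J|) > r$, a contradiction.

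The algebraic ingredient is that every square submatrix of $P$ is non-singular. The determinant of a $k \times k$ submatrix $Q$ of $P$ expands as $\det(Q) = \sum_{\sigma \in S_k} \mathrm{sgn}(\sigma) \sqrt{\prod_{i=1}^{k} p_{i,\sigma(i)}}$. Because the $n^2$ primes $p_{ij}$ are pairwise distinct, unique factorization makes the $k!$ radicands $N_\sigma := \prod_i p_{i,\sigma(i)}$ into distinct positive square-free integers. The classical fact that $\{\sqrt{N} : N \text{ a positive square-free integer}\}$ is $\mathbb{Q}$-linearly independent then forces $\det(Q) \neq 0$. In particular, $P$ is totally regular, and every submatrix of $P$ of size larger than $r \times r$ has rank strictly greater than $r$.

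The combinatorial ingredient extracts the rectangle from the sparsity bound. Viewing $\mathrm{supp}(S)$ as a bipartite graph $G$ on $[n] \sqcup [n]$ with fewer than $n(n-16r)$ edges, the complement $\bar G$ has more than $16rn$ non-edges, and I need a copy of $K_{r+1, r+1}$ inside $\bar G$. My approach is a two-phase pruning. Phase one: iteratively delete any row or column of the current sub-block whose degree in $G$ exceeds a chosen threshold $k$. Each such deletion costs at least $k$ units from the sparsity budget, so the total number of deletions is strictly less than $n(n-16r)/k$, which for a properly chosen $k$ leaves more than $r$ rows and more than $r$ columns. Phase two: in the pruned sub-block every row and column has support at most $k$; picking any $r+1$ of the remaining rows, the union of their supports has size at most $(r+1)k$, so more than $r$ of the remaining columns are simultaneously zero on those $r+1$ rows, yielding the desired rectangle. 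Tuning $k$ so that the two phases fit together throughout the range $r \leq n/16$ is where the constant $16$ emerges.

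The main obstacle is precisely this combinatorial step. Standard Zarankiewicz bounds are too weak once $r = \omega(1)$, and a single averaging pass does not suffice, so the two-phase pruning has to be executed with care: the threshold $k$ must be tuned to simultaneously leave more than $r$ rows and columns after pruning and to produce more than $r$ clean columns in the final pigeonhole, and it is the two-sided slack in this balance that produces the leading constant $16r$ in the statement. Once the rectangle is in hand, total regularity closes the argument immediately.
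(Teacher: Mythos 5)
The algebraic half of your argument is fine: the $\mathbb{Q}$-linear independence of $\{\sqrt{N}: N \text{ square-free}\}$ does make $P$ totally regular. But the combinatorial half cannot be made to work, and in fact the statement it needs is false. You are trying to show: if $\mathrm{supp}(S)$ has fewer than $n(n-16r)$ entries, then $S$ has an all-zero $(r+1)\times(r+1)$ rectangle. Passing to the complement, this claims that any bipartite graph on $[n]\sqcup[n]$ with more than $16rn$ edges contains a $K_{r+1,r+1}$, which is refuted by lower bounds on Zarankiewicz numbers: already for $r=1$, the point-line incidence graph of a projective plane of order $q$ has $\approx n^{3/2}$ edges (with $n=q^2+q+1$) and no $K_{2,2}$, and $n^{3/2}\gg 16n$; for general $r$, $z(n,n;r+1,r+1)$ grows like $n^{2-\Theta(1/r)}$, far above $16rn$. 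You can also see the failure directly in your parameter balancing: to survive Phase 1 with more than $r$ rows and columns you need $k \geq n(n-16r)/(n-r-1) \approx n-16r$, while Phase 2 requires the number of surviving columns, at most $n$, to exceed $(r+1)k + r$, i.e.\ $k < (n-r)/(r+1)$. Combining these forces $n < 16r + O(1)$, so the two windows for $k$ are disjoint precisely when the theorem's conclusion is nontrivial.

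This is the limitation the paper itself flags after Theorem~\ref{thm:totally-regular}: for totally regular matrices the untouched-minor method tops out at $R_M(r) = \Omega\bigl(\tfrac{n^2}{r}\log\tfrac{n}{r}\bigr)$, which at $r=\epsilon n$ is only $\Theta(n)$ and nowhere near the quadratic bound $n(n-16r)$. The paper's proof of Theorem~\ref{thm:ssd-rigidity} therefore does not hunt for an untouched minor at all; it compares upper and lower bounds on the Shoup--Smolensky dimension $\ssd_{nr}(P)$, the $\mathbb{Q}$-dimension of the span of products of $nr$ distinct entries. If $P=S+L$ with $\spar(S)\leq s$ and $\rk(L)\leq r$, writing $L=UV$ with $U\in\mathbb{R}^{n\times r}$, $V\in\mathbb{R}^{r\times n}$ shows that a product of $nr$ entries of $L$ is a degree-$2nr$ polynomial expression over $\mathbb{Q}$ in the $2nr$ entries of $U,V$, so $\ssd_{nr}(L)$ is bounded by roughly $\binom{4nr}{2nr}$. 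On the other hand, at the $\geq n^2-s$ positions untouched by $S$ the entries of $L$ equal $\sqrt{p_{ij}}$, and products of $nr$ of these are square roots of distinct square-free integers, hence $\mathbb{Q}$-linearly independent; this gives $\ssd_{nr}(L)\geq\binom{n^2-s}{nr}$. Comparing the two bounds forces $n^2-s < 16nr$, i.e.\ $s \geq n(n-16r)$. It is this algebraic counting, not an untouched rectangle, that produces the quadratic bound; no sharper pruning can substitute for it.
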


\noindent {\em Proof Sketch.} The proof is an algebraic argument based on the {\em Shoup-Smolensky dimension} of a matrix. The Shoup-Smolensky dimension of a matrix $P$ of order $nr$ (denoted by $\ssd_{nr}(P)$) is the dimension of the vector space spanned by the set of products of $nr$ distinct elements of $P$. Observe that the elements of $P$ are algebraically independent. Hence, $\ssd(P)$ of order $nr$ is the number of polynomials in $n^2$ variables of degree $nr$ which is at most $\binom{n^2}{nr}$. A lower bound of $\binom{n^2-R_P(r)}{nr}$  on the Shoup-Smolensky dimension of the matrix $P$ follows from the fact that entries of $P$ are algebraically independent and is not very difficult to observe. 

$\qedh$

Note that the matrix $M$ in Theorem \ref{thm:totally-regular} is explicit while matrix $P$ in Theorem \ref{thm:ssd-rigidity} is not. Hence, on one hand
from Theorem \ref{thm:totally-regular} we have explicit matrices that are not-so rigid and on the other hand in Thorem \ref{thm:ssd-rigidity} we have matrices such as the ones constructed from distinct primes that are highly rigid but not explicit.

On first thoughts, it is intriguing to note Valiant's claim that if we answer Question \ref{que:rigidity-Valiant} in the affirmative then the linear transformation corresponding to matrix $A_n$ cannot be computed by linear circuits of size $O(n)$ and depth $O(log n)$. Similar to the proof of Theorem \ref{thm:totally-regular} Valiant's argument from \cite{Val77} that we now outline is also   graph-theoretic .

\begin{theorem}
\label{thm:valiant-lin-ckt}
If $A\in\mathbb{F}^{n\times n}$ has a linear circuit of size $O(s)$ and depth $O(d)$ then $R_A(\frac{s\epsilon}{\log d})\leq n\cdot 2^{O(d/2^\epsilon)}$ for every $\epsilon>0$.
\end{theorem}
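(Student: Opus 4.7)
The plan is Valiant's original graph-theoretic strategy: use a depth-reduction lemma on the DAG of the linear circuit to identify a small set of edges whose removal both leaves a shallow circuit (yielding a sparse matrix) and contributes only a low-rank correction to $A$.

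Let $G$ be the DAG computing $A$, with $|E(G)| = O(s)$ and depth $d$. The key combinatorial ingredient is Valiant's depth-reduction lemma: for any $\epsilon > 0$, there exists a subset $E' \subseteq E(G)$ with $|E'| = O(s\epsilon/\log d)$ such that $G \setminus E'$ has depth $O(d/2^\epsilon)$. One proves this by coloring each edge $(u,v)$ with the highest bit at which the binary depth labels of $u$ and $v$ differ, giving $\lceil \log d \rceil$ color classes; an iterative pigeonhole then shows that each successive halving of the depth costs only $O(s/\log d)$ edges, so $\epsilon$ halvings cost $O(s\epsilon/\log d)$ edges in total.

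Next, I decompose $A = L + S$, where $S$ is the matrix computed by the sub-circuit $G \setminus E'$ (with the same inputs, outputs, and weights) and $L := A - S$ captures the contributions of paths through $E'$. For each edge $e = (u,v) \in E'$ with weight $w_e$, let $r_u \in \mathbb{F}^n$ be the coefficient vector of the linear form computed at $u$ in $G$, and let $c_v \in \mathbb{F}^n$ have entry $(c_v)_j$ equal to the coefficient of the value at $v$ in output $j$. A routine expansion of path contributions shows that the rank-one matrix $w_e\, c_v r_u^{\top}$ is exactly the contribution of $e$ to $A$, so
\[
L \;=\; \sum_{e=(u,v)\in E'} w_e\, c_v\, r_u^{\top}, \qquad \rk(L) \leq |E'| = O(s\epsilon/\log d).
\]
Meanwhile, since $G \setminus E'$ has depth $O(d/2^\epsilon)$ and (without loss of generality) bounded fan-in $2$, each output gate reaches at most $2^{O(d/2^\epsilon)}$ input gates; hence every row of $S$ has at most $2^{O(d/2^\epsilon)}$ nonzero entries, i.e., $\spar(S) \leq n \cdot 2^{O(d/2^\epsilon)}$. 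Combining the rank bound on $L$ with the sparsity bound on $S$ yields $R_A(s\epsilon/\log d) \leq n \cdot 2^{O(d/2^\epsilon)}$.

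The main obstacle is the depth-reduction lemma itself: extracting a tight cost-versus-depth trade-off from the highest-bit-disagreement coloring requires a careful iterative averaging argument (a naive pigeonhole over a single color class does not suffice, since the specific classes forced by the target depth need not be the smallest). Once that lemma is in hand, the rest is essentially linear-algebraic bookkeeping based on the path-weight expansion of the linear circuit.
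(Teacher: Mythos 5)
Your proof is correct and follows essentially the same strategy as the paper's sketch: remove $O(s\epsilon/\log d)$ edges via Valiant's depth-reduction lemma to bring the depth down to $O(d/2^\epsilon)$, attribute a rank-one correction to each removed edge so that $\rk(L)\le |E'|$, and bound the sparsity of the residual matrix $S$ by the number of inputs reachable in the shallow remaining circuit. In fact you state the correct per-row sparsity $2^{O(d/2^\epsilon)}$; the paper's sketch has a typo claiming each row of $S$ has at most $d/2^\epsilon$ nonzero entries, which does not match the theorem's bound $n\cdot 2^{O(d/2^\epsilon)}$.
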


\noindent\textit{Proof Sketch.} The proof is based on a graph-theoretic argument that by removing a few edges the length of every path in a directed acyclic graph can be reduced by a factor of 2. That is, from any directed acyclic graph having $s$ edges and (every) path length bounded by $d$, by removing at most $s/\log d$ edges we can ensure that every path has length at most $d/2$. Let ${\cal C}$ be a linear circuit of size $s$ and depth $d$ computing the matrix $A\in\mathbb{F}^{n\times n}$. By repeating the above mentioned edge removal process $\epsilon$ times, ${\cal C}$ has at most $s\epsilon/\log d$ edges and every path in ${\cal C}$ has length at most $d/2^{\epsilon}$. As ${\cal C}$ is a linear circuit the linear function computed by output gate of ${\cal C}$ is a linear combination of the removed edges and the input gates. This implies that $A=S+L$ where $\rk(L)\leq \frac{s\epsilon}{\log d}$ and every row in $S$ has at most $ d/2^\epsilon$ many non-zero entries.

$\qedh$

By setting $s=n\log\log n$ and $d=\log n$ in Theorem \ref{thm:valiant-lin-ckt}, we can immediately conclude that for any $A\in\mathbb{F}^{n\times n}$ if $R_A(\epsilon n)= \Omega(n^{1+\delta})$ for some $\epsilon,\delta>0$ then any linear circuit computing the transformation $A:x\mapsto A\cdot x$ must have either size $\Omega(n\log \log n)$ or depth $\Omega(\log n)$. In essence, a positive answer to Question \ref{que:rigidity-Valiant} implies super-linear size lower bounds on linear circuits of logarithmic depth.

\section{Explicit constructions of Rigid Matrices}
\label{sec:lower-bounds}
In this section, we review semi-explicit constructions of rigid matrices starting with constructions in class  $\E^\NP$ proceeding towards constructions in $\P^{\NP}$. We use the term {\em semi-explicit} to broadly refer to matrices that require worse than polynomial time to construct them.

\subsection{Rigidity of Random Toeplitz matrices}
Observe that a random matrix is {\em rigid} with {\em high} probability. Goldreich and Tal\cite{GT18} showed that in order to obtain rigid matrices it is enough to look only inside the space of random Toeplitz matrices.

Let $a_{-(n-1)},\ldots,a_{n-1}$ be $2n-1$ elements in $\mathbb{F}$.
A {\em Toeplitz matrix} $T\in\mathbb{F}^{n\times n}$ is given by $T_{ij}=a_{j-i}$ for all $i,j\in [n]$. %All the diagonal elements of a Toeplitz matrix are the same (i.e., $T_{ij}=T_{i+1,j+1}=a_0$ for all $i,j\in [n]$). 
A {\em Hankel matrix} $H\in\mathbb{F}^{n\times n}$ is given by $H_{ij}={a_{i+j}}$ where $a_2,\ldots,a_{2n}$ are in $\mathbb{F}$. %A Hankel matrix $H$ has constant skew-diagonal elements (i.e., $H_{ij}=H_{i+k,j-k}$ for all $k=0,\ldots,j-1$). 
The matrices $T$ and $H$ mentioned below are examples of  $3\times 3$ Toeplitz and Hankel matrices respectively:
\begin{center}
    $T= \begin{bmatrix}
a_0 & a_1 & a_2 \\
a_{-1} & a_0 & a_1 \\
a_{-2} & a_{-1} & a_0
\end{bmatrix}$
and $H= \begin{bmatrix}
a_2 & a_3 & a_4 \\
a_3 & a_4 & a_5 \\
a_4 & a_5 & a_6
\end{bmatrix}$.
\end{center}
A matrix $T\in\mathbb{F}_2^{n\times n}$ is a {\em random} Toeplitz (resp., Hankel) matrix $T_{ij}=a_{j-i}$ (resp., $H_{ij}={a_{i+j}}$) where $a_{-(n-1)},\ldots,a_{n-1}$ (resp., $a_2,\ldots,a_{2n}$) are bits in $\{0,1\}$ chosen independently and uniformly at random. Goldreich and Tal\cite{GT18} show that with high probability, a random Toeplitz matrix (resp., Hankel matrix) in $\mathbb{F}_2^{n\times n}$ is rigid. Observe that a Hankel matrix is the mirror image of a Toeplitz matrix. Hence, rigidity of Hankel matrices translates directly to rigidity of Toeplitz matrices. In this section, we prove the following result from \cite{GT18}: 

\begin{theorem}
\label{thm:toeplitz}
Let $H\in \mathbb{F}_2^{n\times n}$ be a random Hankel matrix.  For every $r\in [\sqrt{n},n/32]$, ${\cal R}_{H}(r)=\Omega\left(\frac{n^3}{r^2\log n}\right)$ with probability $1-o(1)$.
\end{theorem}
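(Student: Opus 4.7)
The plan is a probabilistic argument via a union bound, exploiting that a random Hankel matrix $H = H_a$ depends on only $2n-1$ independent uniform bits $a_2,\ldots,a_{2n}$ through $H_{ij} = a_{i+j}$. By definition, $\mathcal{R}_H(r)\le s$ is equivalent to the existence of a matrix $L$ with $\mathrm{rank}(L) \le r$ at Hamming distance at most $s$ from $H$, and so I would begin with
\begin{equation*}
\Pr_H\bigl[\mathcal{R}_H(r) \le s\bigr] \;\le\; \sum_{L\,:\,\mathrm{rank}(L) \le r}\, \Pr_H\bigl[d(H, L) \le s\bigr].
\end{equation*}

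For the inner probability I would use the anti-diagonal decomposition. Denote $D_k = \{(i,j) : i+j = k\}$ for $2 \le k \le 2n$. Then $d(H_a, L) = \sum_k |\{(i,j) \in D_k : a_k \ne L_{ij}\}|$ is minimized by setting $a_k$ to $\mathrm{maj}_k(L)$, the majority value of $L$ on $D_k$; the minimum equals $\mathrm{opt}(L) := \sum_k (|D_k| - \mathrm{maj}_k(L))$. Hence $\Pr_H[d(H,L) \le s] = 0$ unless $\mathrm{opt}(L) \le s$, and otherwise equals the fraction of $a \in \mathbb{F}_2^{2n-1}$ consistent with the budget. Since the short corner anti-diagonals of lengths $1, 2, \ldots$ can only absorb $O(\sqrt s)$ of the budget (each length-$\ell$ anti-diagonal fully flipped costs $\ell$, and covering $K$ short anti-diagonals costs at least $\Theta(K^2)$), a careful count yields $\Pr_H[d(H,L) \le s] \le 2^{-(2n-1) + O(\sqrt s)}$.

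The crux is to count the rank-$\le r$ matrices $L$ with $\mathrm{opt}(L) \le s$, i.e.\ those that are ``nearly Hankel''. I would parameterize $L = UV^\top$ with $U, V \in \mathbb{F}_2^{n \times r}$ and encode near-constancy of $L$ on each $D_k$ as a system of bilinear equations on $(U,V)$: for $(i,j)$ and $(i',j')$ on the same anti-diagonal and outside the at-most-$s$ exceptional cells, one needs $\langle u_i, v_j\rangle = \langle u_{i'}, v_{j'}\rangle$. There are $\Omega(n^2)$ such equations acting on $2nr$ unknowns, and the intended bound for the count is roughly $2^{O(nr)}$ times a combinatorial factor $\binom{n^2}{s} \cdot 2^{s}$ for choosing the exceptional cells and their values.

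Combining the three ingredients and summing, one aims to establish
\begin{equation*}
\Pr_H[\mathcal{R}_H(r) \le s] \;\le\; 2^{O(nr) + \tilde{O}(s) - 2n + O(\sqrt s)},
\end{equation*}
which with $s = \Theta(n^3/(r^2 \log n))$ and $r \in [\sqrt n,\, n/32]$ becomes $o(1)$ for a suitable constant in front of $s$. The main obstacle is precisely the ``near-Hankel, rank-$\le r$'' count: the naive dimension count that writes $L$ as an arbitrary Hankel matrix plus an $s$-sparse correction gives $2^{2n-1+s}$ and completely swallows the $2^{-2n}$ probability gain. Overcoming this requires genuine use of the rank-$r$ factorization, together with a pigeonhole/averaging argument that identifies many long anti-diagonals on which the bilinear constancy constraint on $(U,V)$ is non-degenerate — these give honest dimension reductions on the admissible $(U,V)$ and cut the count down to $2^{O(nr)}$ (up to the combinatorial exception term), which is the technical heart of the Goldreich--Tal proof.
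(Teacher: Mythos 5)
Your union bound over all rank-$\le r$ matrices $L$ cannot close, even if every intermediate estimate you sketch were granted. The exponent you aim for, $O(nr) + \tilde O(s) - 2n + O(\sqrt{s})$, is strictly positive throughout the theorem's range: for $r\ge\sqrt n$ the term $O(nr)\ge \Omega(n^{3/2})$ already dwarfs the $2n$ bits of Hankel entropy, so the bound is vacuous for every $r$ under consideration, not $o(1)$. Reducing the candidate count to $2^{O(nr)}$, which you label the ``technical heart,'' does not help because $2^{O(nr)}$ is \emph{larger}, not smaller, than the $2^{2n-1}$ distinct Hankel matrices; a global union over low-rank candidates would need the surviving count to be $2^{2n-\omega(1)}$, and no parametrization $L=UV^{\top}$ with $U,V\in\mathbb{F}_2^{n\times r}$ and $r$ polynomially large will produce a count that small. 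Separately, the intermediate claim $\Pr_H[d(H,L)\le s]\le 2^{-(2n-1)+O(\sqrt s)}$ is not established: when $L$ is exactly balanced on an anti-diagonal $D_k$, the bit $a_k$ contributes the same Hamming cost either way and is a free coordinate, and such coordinates can be numerous (a priori up to $\Theta(s)$ of them) while keeping the best-case distance at most $s$, so the $O(\sqrt s)$ correction requires an argument you have not given.

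The paper's route is different in kind and sidesteps the global union bound entirely. Observation~\ref{obs:php} is a pigeonhole: if $H=S+L$ with $\spar(S)\le s$ and $\rk(L)\le r$, then in any partition of $H$ into $(n/2r)^2$ blocks of size $2r\times 2r$, some block $H'$ has a decomposition $H'=S'+L'$ with $\spar(S')\le s\cdot(2r/n)^2$ and $\rk(L')\le r$. The partition is designed so that each block is an ``$n/2r$-Hankel'' matrix seeing $\Theta(n)$ fresh random bits. The union is then only over the $(n/2r)^2$ blocks and the $\binom{(2r)^2}{\le s'}$ sparse corrections $S'$ (roughly $2^{O(s'\log r)}=2^{O(n/\log n\cdot\log r)}$ terms), and the probability $\Pr[\rk(H'-S')\le r]\le 2^{-n/16}$ is bounded directly via a greedy basis argument: union over the at most $2^{2r}$ choices of row basis $B$, select $t\ge n/4r$ non-basis rows spaced at least $(2r)^2/n$ apart, and show each has conditional probability at most $2^{-r}$ of landing in the span of the earlier basis rows, giving $2^{2r-rt}\le 2^{-n/16}$ for the fixed $B$. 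The decisive saving is replacing a union over all rank-$\le r$ matrices ($2^{\Theta(nr)}$ terms) by a union over bases inside a small block ($2^{O(r)}$ terms), and that is precisely the gap your proposal does not cross.
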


%\begin{remark}
% \textcolor{blue}{The Hankel matrix is upside down Toeplitz so proving rigidity of Hankel matrices translates directly to rigidity of Toeplitz matrices. }
%\end{remark}%This subsection is devoted to proving Theorem \ref{thm:toeplitz}. 

\noindent{\em Proof Sketch.} The high-level idea is to come up with a procedure ${\sf TEST}_{s,r}(H)$ which when given as input a Hankel matrix $H\in\mathbb{F}_2^{n\times n}$ does the following:

\begin{itemize}
    \item[(1)] If $H=S+L$ with $\spar(S)\leq s$ and $\rk (L)\leq r$ then reject $H$.  
    \item[(2)] If $H$ is  a random matrix then accept $H$ with probability $1-o(1)$.
\end{itemize}
If we succeed in obtaining such a test then for a random Hankel matrix $H\in\mathbb{F}_2^{n\times n}$, ${\sf TEST}_{s,r}(H)$ accepts $H$ with probability $1-o(1)$. Then with probability $1-o(1)$, ${\cal R}_H(r)=\Omega(\frac{n^3}{r^2\log n})$ when $r\in [\sqrt{n},n/32]$ and $s=\frac{n^3}{160 r^2\log n}$.

$\qedh$

The design of ${\sf TEST}_{s,r}(H)$ depends on the following simple observation that if $H$ is not rigid then there is a {\em super-sparse} sub-matrix of $H$ that witnesses the non-rigidity of $H$:

\begin{obs}
\label{obs:php}
Let $H\in \mathbb{F}^{n\times n}$ be a Hankel matrix such that $H=S+L$ for some $S,L\in \mathbb{F}_2^{n\times n}$ with $\spar(S)\leq s$ and $\rk (L)\leq r$. Then for every $2r\times 2r$ sub-matrix $H'$ of $H$ there exists  $S',L'\in\mathbb{F}_{2}^{2r\times 2r}$ such that $H'=S'+L'$ and $\spar(S')\leq \frac{s}{(n/2r)^2}$ and $\rk(L')\leq r$.
\end{obs}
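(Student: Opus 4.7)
The plan is to establish the observation by a straightforward pigeonhole (averaging) argument over a uniform tiling of $H$ into disjoint $2r \times 2r$ blocks. I read the statement as asserting the \emph{existence} of a $2r \times 2r$ submatrix $H'$ with the stated decomposition, which matches the label \texttt{obs:php} and the way the observation is used when designing ${\sf TEST}_{s,r}(H)$.

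First, I would partition the row index set $[n]$ into $n/(2r)$ consecutive intervals $I_1, \ldots, I_{n/(2r)}$ of length $2r$ and likewise partition the column indices into $J_1, \ldots, J_{n/(2r)}$ (assuming for cleanliness that $2r \mid n$; otherwise the last block is padded and the asymptotics are unchanged). This induces a tiling of $H$ into $(n/(2r))^2$ pairwise disjoint $2r \times 2r$ submatrices $H[I_p, J_q]$ whose entries collectively cover every entry of $H$ exactly once.

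Next, for each tile I would set $S_{p,q} := S[I_p, J_q]$ and $L_{p,q} := L[I_p, J_q]$, so that $H[I_p, J_q] = S_{p,q} + L_{p,q}$. Two facts then give the conclusion: (i) restricting a matrix to a submatrix cannot increase rank, so $\rk(L_{p,q}) \le \rk(L) \le r$ uniformly across all tiles; and (ii) since the tiles partition the entries of $S$,
$$\sum_{p,q} \spar(S_{p,q}) \;=\; \spar(S) \;\le\; s,$$
so by averaging there exists a pair $(p^*, q^*)$ with $\spar(S_{p^*,q^*}) \le s/(n/(2r))^2$. Taking $H' := H[I_{p^*}, J_{q^*}]$, $S' := S_{p^*, q^*}$, $L' := L_{p^*, q^*}$ yields the desired decomposition.

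There is no substantial obstacle here; the entire argument is a one-line pigeonhole and does not actually use the Hankel structure of $H$ (the Hankel hypothesis will matter later, when ${\sf TEST}_{s,r}$ exploits the fact that $H$ is determined by $2n-1$ free parameters). The only mild subtlety is the phrasing "for every $2r \times 2r$ sub-matrix"; I interpret it as "there exists," which is both what pigeonhole yields and what is required to reject non-rigid Hankel matrices in the subsequent tester.
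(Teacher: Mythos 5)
Your proof is correct and matches what the paper intends: the observation is stated without an explicit proof and is immediately invoked in the analysis of ${\sf TEST}_{s,r}$, and the implicit argument is exactly the averaging/pigeonhole step you give. You are also right to flag the wording: a literal reading of ``for every $2r\times 2r$ sub-matrix'' would require \emph{each} submatrix to inherit only $s/(n/2r)^2$ nonzeros of $S$, which is false in general; the correct reading is existential (equivalently, ``for any fixed partition of $H$ into $(n/2r)^2$ disjoint $2r\times 2r$ blocks, at least one block admits the decomposition''), and that is precisely the form the tester needs in order to reach the reject line whenever $H$ is non-rigid.

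One small point worth making explicit. You instantiate the pigeonhole with the naive tiling into consecutive $2r$-wide row and column intervals, whereas ${\sf TEST}_{s,r}$ actually partitions $H$ into blocks consisting of $2r$ consecutive columns paired with $2r$ rows spaced $n/2r$ apart (Figure 2), chosen so that each block sees $\Theta(n)$ of the $2n-1$ free Hankel parameters. Since your argument uses only that the blocks disjointly cover all $n^2$ entries, that $\spar(S)$ sums over the tiling, and that restricting $L$ cannot increase rank, the same averaging goes through verbatim for the shifted partition. This is worth stating because the conclusion you literally derive (some block of the naive tiling works) is not the instance the tester's correctness hinges on (some block of the shifted tiling works); the partition-agnostic nature of the pigeonhole bridges that gap. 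You also correctly observe that the Hankel structure is not used here and only enters downstream, in bounding $\Pr[\rk(H'-S')\leq r]$ for a random $H$.
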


%\begin{obs}
%\label{obs:php}
%Let $H\in \mathbb{F}^{n\times n}$ be a Hankel matrix. Partition $H$ into a set of $(n/2r)^2$ many matrices of dimension $2r\times 2r$ each. If $H=S+L$ for some $S,L\in \mathbb{F}_2^{n\times n}$ with $\spar(S)\leq s$ and $\rk (L)\leq r$ then by pigeonhole principle there is a sub-matrix $H'$ of $H$ such that $H'=S'+L'$ where $S',L'\in\mathbb{F}_{2}^{2r\times 2r}$ with $\spar(S')\leq \frac{s}{(n/2r)^2}$ and $\rk(L')\leq r$.
%\end{obs}

Based on Observation \ref{obs:php}, we design ${\sf TEST}_{s,r}(H)$:

\IncMargin{2.5em}
\begin{algorithm}[H]
  \SetKwInOut{Input}{Input}
%\Indm\Indmm
  \Input{Hankel matrix $H\in\mathbb{F}_2^{n\times n}$}
%\Indp\Indpp
  \BlankLine
  {\em Partition} $H$ into $(n/2r)^2$ many matrices of dimension $2r\times 2r$ each. \footnotemark
 %  \footnote{An arbitrary partition of $H$ may not work. We need to carefully partition $H$ so that the probability bounds work.} 
 \BlankLine
  Set $s'= \frac{s}{(n/2r)^2}$. \\
  \For{every such sub-matrix $H'$ of $H$}{ 
  \For{every $s'$-sparse matrix $S'$ in $\mathbb{F}_2^{n\times n}$}{ 
    \If{$\rk(H'-S')\leq r$}{\texttt{reject $H$}\label{alg1:line-reject}} 
    }}
    \BlankLine
    \texttt{Accept $H$}
\end{algorithm}
\footnotetext{An arbitrary partition of $H$ may not work. We need to carefully partition $H$ so that the probability bounds work.}
\DecMargin{2.5em}

%\IncMargin{2.5em}
%\begin{algorithm}[H]
 % \SetKwInOut{Input}{Input}
%\Indm\Indmm
  %\Input{Hankel matrix $H\in\mathbb{F}_2^{n\times n}$}
%\Indp\Indpp
  %\BlankLine
  %{\em Partition} $H$ into $(n/2r)^2$ many matrices of dimension $2r\times 2r$ each. \footnote{An arbitrary partition of $H$ may not work. We need to carefully partition $H$ so that the probability bounds work.} 
 %\BlankLine
  %Set $s'= \frac{s}{(n/2r)^2}$. \\
  %\For{every such sub-matrix $H'$ of $H$}{ 
  %\For{every $s'$-sparse matrix $S'$ in $\mathbb{F}_2^{n\times n}$}{ 
   % \If{$\rk(H'-S')\leq r$}{\texttt{reject $H$}\label{alg1:line-reject}} 
    %}}
    %\BlankLine
    %\texttt{Accept $H$}
%\end{algorithm}
%\DecMargin{2.5em}

If the given Hankel matrix $H$ in $\mathbb{F}_2^{n\times n}$ is not rigid then by Observation \ref{obs:php}, line (\ref{alg1:line-reject}) of Algorithm is reached for some $s'$-sparse sub-matrix $S'$ and ${\sf TEST}_{s,r}(H)$ rejects $H$.  Now, it remains to show that ${\sf TEST}_{s,r}(H)$ accepts a random Hankel matrices with high probability.
 
To complete the proof we show that on input $H\in\mathbb{F}_2^{n\times n}$ that is a random Hankel matrix, ${\sf TEST}_{s,r}(H)$ rejects $H$  with probability $o(1)$. 
\begin{eqnarray}
  \Pr_{H}[{\sf TEST}_{s,r}(H) \text{~rejects~} H] &=  \Pr[\exists H' \exists S'~\spar(S')\leq s' \text{~s.t.~} \rk(H'-S')\leq r] \nonumber \\
&\leq \left(\frac{n}{2r}\right)^2 \binom{(2r)^2}{\leq s'} \Pr[\rk(H'-S')\leq r]  \label{eqn:eqn2}
\end{eqnarray}
%Consider a tiling of $H$ by matrices of dimension $2r\times 2r$. As $H$ is Hankel every such sub-matrix will see only $2r$ random elements. We need strong probability bounds and hence we will carefully partition $H$ as mentioned below.
Now, for a moment assume that  $\Pr[\rk(H'-S')\leq r]$ is quite low (i.e., $\Pr[\rk(H'-S')\leq r] \leq 2^{-n/16}$). Plugging this into Equation (\ref{eqn:eqn2}), we get: $$\Pr_{H}[{\sf TEST}_{s,r}(H) \text{~rejects~} H] \leq \left(\frac{n}{2r}\right)^2 \cdot \binom{(2r)^2}{\leq s'}\cdot 2^{-n/16}.$$ 
When $s=\frac{n^3}{160r^2\log n}$, $s'= \frac{n}{40 \log n}$ and as $\sqrt{n} \leq r \leq n/32$ $\Pr_{H}[{\sf TEST}_{s,r}(H) \text{~rejects~} H]$ is $o(1)$.  So condition (2) of the proof outline is satisfied by ${\sf TEST}_{s,r}$.

In the rest of this subsection we will show that for  $\Pr[\rk(H'-S')\leq r] \leq 2^{-n/16}$ when $H$ is carefully partitoned. There are several ways of partitioning $H$ into $2r\times 2r$ sub-matrices. For instance, one straight-forward way would be to tile of $H$ by matrices of dimension $2r\times 2r$ (see area shaded in solid grey in Figure \ref{fig:hankelpartition}). However, $H'$ has only $4r$ elements chosen independently and u.a.r. from $\mathbb{F}_2$. Since, we know that a random matrix in $\mathbb{F}^{n\times n}$ has high rank with high probability, intuitively we want $H'$ to see a large number of random bits so that $\rk(H'-S')$ is {\em low} with {\em low} probability. By using a cleverer partitioning of the Hankel matrix $H$, we can obtain sub-matrices $H'$ that see $\Theta(n)$ random bits and show that $\Pr[\rk(H'-S')\leq r] \leq 2^{-n/16}$.

Partition $H$ into $(n/2r)^{2}$ many sub-matrices of dimension $2r\times 2r$ each such that each sub-matrix $H'$ has $2r$ consecutive columns and $2r$ rows that are at a distance $n/2r$ apart as shown in
Figure \ref{fig:hankelpartition}. As $H$ is Hankel, every row in $H'$ sees $n/2r$ random bits and $H'$ sees $\Theta(n)$ random elements in $\mathbb{F}_2$. Such a matrix is said to be an {\em $n/2r$-Hankel} matrix.

\begin{figure}[H]
\centering
\includegraphics[scale=0.7]{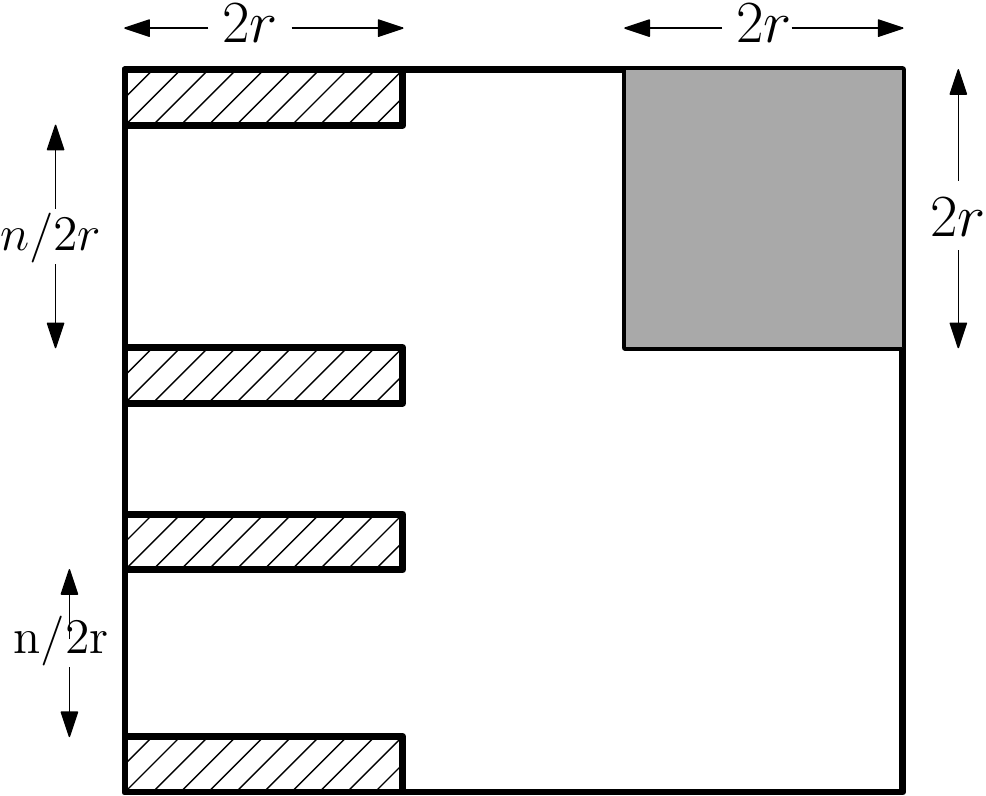}
\caption{Patition of $H$ into $2r\times 2r$ sub-matrices}
\label{fig:hankelpartition}
\end{figure}

Let $R_1,\ldots, R_{2r}$ be rows of $H'-S'$. If $\rk(H'-S')\leq r$, then there exists a basis $B=\{R_{i_1},\ldots,R_{i_r}\}$ such that any row in $H'-S'$ is spanned by a linear combination of the row-vectors in $B$. Let $J$ be the set of rows in $H'-S'$ that are not in $B$. In fact, if $\rk(H'-S')\leq r$, by a greedy procedure we can compute a basis $B$ such that for every row $R_j$ in $J$, $R_j\in span\{ R_k\mid R_k \in B, k<j \}$. Let $E$ denote the event that for every row $R_j$ in $J$, $R_j\in  span\{ R_k\mid R_k \in B, k<j \}$ (i.e., $R_j$ is in the linear span of rows above it). Then,
\begin{align*}
\Pr_{H}[\rk(H'-S')\leq r] &\leq \Pr[\exists B \text{~event~} E \text{~holds~}] \\
&\leq  \binom{2r}{\leq r} \Pr[\text{for a fixed $B$ event $E$ holds}] \\
&\leq 2^{2r}\cdot \Pr[\text{for a fixed $B$ event $E$ holds}].
\end{align*}
That is, for a fixed set $B$ of rows, we want to estimate the probability that every row not in $B$ is spanned by rows in $B$ occurring above it in the matrix $H'-S'$.
Let $J'$ be the set of rows not in the basis of $H'-S'$ that are sufficiently far apart. Let $J' = \{R_{j_1},\ldots,R_{j_t}\}\subseteq J$ be the set of rows such that the distance between $R_{j_p}$ and $R_{j_{p+1}}$ is at least $(2r)^2/n$ for all $p\in [t-1]$.  Note that $J' \geq |J|/(4r^2/n) = n/4r$. Now, if event $E$ holds, then  every row in $J'$ is spanned by $\{ R_k\mid R_k \in B, k<j \}$.  Let $E_{\ell}$ be the event that row $R_{j_{\ell}}$ in $J'$ is spanned by $\{ R_k\mid R_k \in B, k<j_{\ell} \}$. Now suppose for any $\ell \in [t]$, $\Pr[E_{\ell} \mid E_{1}, E_2,\ldots,E_{\ell-1} ] = 2^{-r}$, we get the following:

\begin{align*}
    \Pr_{H}[\rk(H'-S')\leq r] &\leq 2^{2r}\cdot \Pr[\text{for a fixed $B$, event $E_1\cap E_2 \cap \cdots \cap E_{t}$ holds}] \\
    &\leq 2^{2r}\cdot (\Pr[E_{\ell} \mid E_{1}, E_2,\ldots,E_{\ell-1} ])^{t} \\
    &\leq 2^{2r}\cdot 2^{-rt} \\
    &\leq 2^{-n/16}. 
\end{align*}
The last inequality follows as $r\leq n/32$ and $t\geq n/4r$. In the remaining part of this subsection, we show that for a fixed basis $B$, $\Pr[E_{\ell} \mid E_{1}, E_2,\ldots,E_{\ell-1} ] = 2^{-r}$ for any $\ell \in [t]$. For this, we will refer to the following figure:

\begin{figure}[H]
\centering
\label{fig:hankel-lincomb}
\includegraphics[scale=0.7]{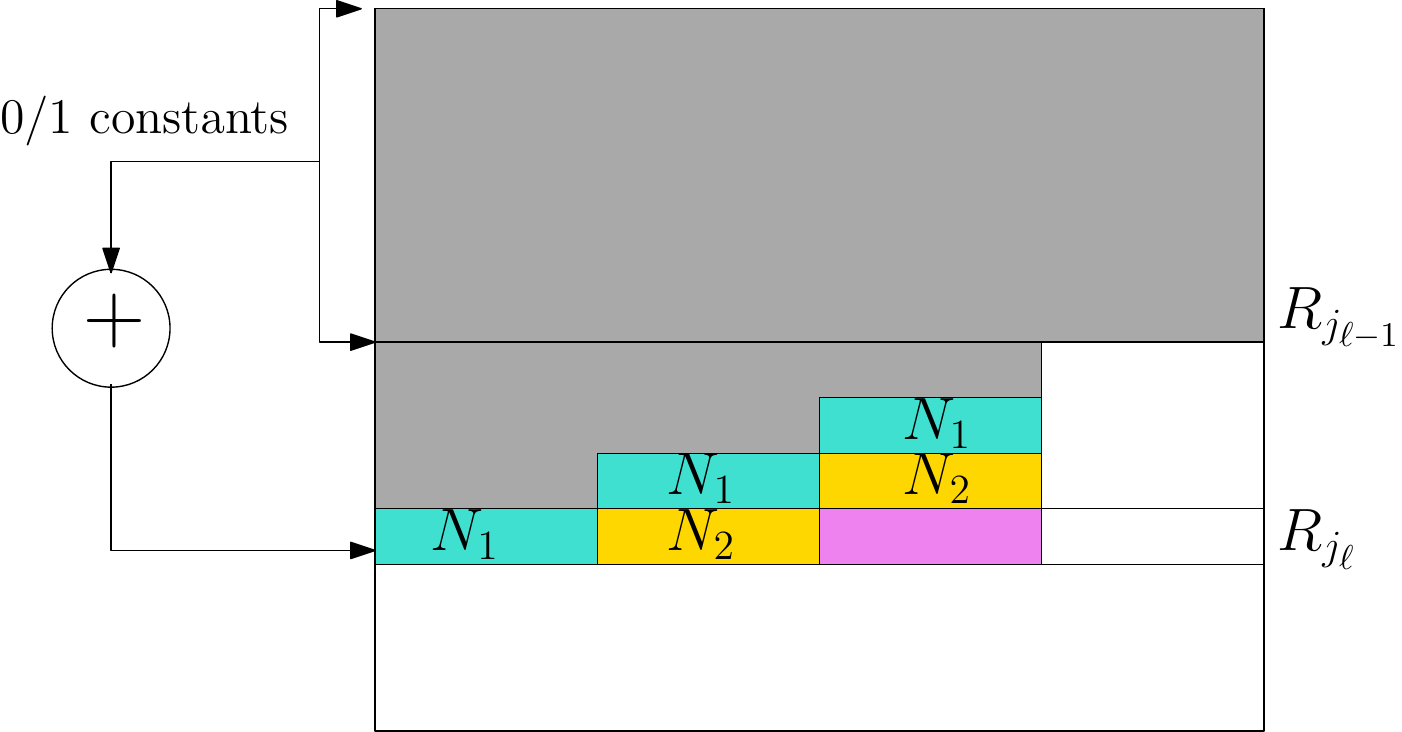}
\caption{Estimating $\Pr[E_{\ell} \mid E_{1}, E_2,\ldots,E_{\ell-1}]$}
\end{figure}

$\Pr[E_{\ell} \mid E_{1}, E_2,\ldots,E_{\ell-1}]$ is the probability that there exists a linear combination of the rows in $\{ R_k\mid R_k \in B, k<j_{\ell} \}$ such that $R_{\ell} = \sum_{k<j_{\ell}} \alpha_k R_k$. As the $\alpha_i's$ are from $\mathbb{F}_2$, there are at most $2^{|B|}\leq 2^r$ many linear combinations. Now, we need to estimate the probability that for a fixed linear combination of rows in $\{ R_k\mid R_k \in B, k<j_{\ell} \}$, $R_{\ell} = \sum_{k<j_{\ell}} \alpha_k R_k$. Once $\alpha_1,\ldots ,\alpha_{\ell}$ are fixed we can determine the elements in the block $N_1$. Block $N_1$ along with $\alpha_1,\ldots,\alpha_{\ell}$ completely determine $N_2$ . This way, once the linear combination is fixed, $R_{\ell}$ is a fixed row-vector in $\{0,1\}^{2r}$. Therefore, $\Pr[E_{\ell} \mid E_{1}, E_2,\ldots,E_{\ell-1}] \leq 2^r\cdot 2^{-2r} \leq 2^{-r}$ for any $\ell\in [t]$.

\begin{remark}
\label{rem:toeplitz}
For $r=o\left(\frac{n}{\log n\log \log n} \right)$, Theorem $\ref{thm:toeplitz}$ yields asymptotically better lower bound than the current best rigidity lower bound of $\Omega(\frac{n^2}{r}\log\frac{n}{r})$ for rank $r$. 
\end{remark}

\begin{remark}
A random $n\times n$ Toeplitz matrix can be constructed by using $2n$ random bits. Hence, Theorem $\ref{thm:toeplitz}$ gives an explicit construction of rigid matrices in the complexity class $\E^{\NP}$.
\end{remark}

\subsection{Construction of rigid matrices in sub-exponential time}

%In the previous subsection, we proved that {random} Toeplitz matrices are rigid with high probability. However, this gives a construction of explicit rigid matrices only in  $\E^{\NP}$. 
Having constructed rigid matrices in the class $\E^{\NP}$, in this section we discuss the following result of \cite{KV19} which gives an explicit family of rigid matrices  constructible in sub-exponential time.

\begin{theorem}
\label{thm:rigid-subexp}
Let $\mathbb{F}_q$ be a finite field and $\mathbb{E}$ be an extension of $\mathbb{F}_q$ of degree at most $\exp(O(n^{1-1/2d}\log n))$. There exists a family of matrices $(A_n)_{n\in \mathbb{N}}$ constructible in time $\exp(n^{1-\Omega(1/d)})$ such that any linear circuit over $\overline{\mathbb{F}_q}$  of depth $d$ computing $A_n$ has size at least $\Omega(n^{1+1/2d})$. \end{theorem}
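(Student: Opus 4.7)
My plan is to combine a counting argument with an efficient algebraic search.

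First, I would bound the ``easy'' set: let $\mathcal{E}_{d,s}$ denote the set of $n\times n$ matrices over $\overline{\mathbb{F}_q}$ computable by some depth-$d$ linear circuit of size at most $s = \Theta(n^{1+1/2d})$. A size-$s$ depth-$d$ circuit is specified by one of $(n+s)^{O(s)}$ possible DAG topologies together with $s$ scalar edge weights in $\overline{\mathbb{F}_q}$, and each output matrix entry is a polynomial of degree at most $2^d$ in those weights. Hence $\mathcal{E}_{d,s}$ is a finite union of $(n+s)^{O(s)}$ images of polynomial maps $\overline{\mathbb{F}_q}^{O(s)}\to\overline{\mathbb{F}_q}^{n^2}$, each of dimension $O(s)$ and degree $2^{O(d)}$. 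The ``bad'' set of easy matrices is therefore constrained algebraically and, crucially, of much smaller ``algebraic volume'' than the ambient space.

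Second, I would introduce an explicit one-parameter family of candidate hard matrices $A(\alpha)$ with $\alpha \in \mathbb{E}$, chosen so that the map $\alpha \mapsto A(\alpha)$ is ``algebraically generic''—a concrete choice is $A(\alpha)_{ij} = \alpha^{n(i-1)+j}$, which gives a rational normal curve embedding of $\mathbb{E}$ into $\mathbb{E}^{n^2}$. The set of \emph{bad} seeds $\{\alpha : A(\alpha) \in \mathcal{E}_{d,s}\}$ is then a zero-dimensional subset of the $\alpha$-line, cut out by a union of polynomials of total degree at most $\exp(O(s \log s))$. Setting the extension degree $[\mathbb{E}:\mathbb{F}_q] = \exp(\Theta(n^{1-1/2d}\log n))$ guarantees $|\mathbb{E}|$ vastly exceeds this degree bound, so a good $\alpha \in \mathbb{E}$ exists and the corresponding $A(\alpha)$ requires depth-$d$ linear circuits of size $\Omega(n^{1+1/2d})$.

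The main obstacle is finding such an $\alpha$ within the claimed time budget $\exp(n^{1-\Omega(1/d)})$: a brute-force enumeration of all small circuits for every candidate $\alpha$ already costs $\exp(\Omega(s\log s)) = \exp(\Omega(n^{1+1/2d}\log n))$, well above the budget. I expect the technical heart of the proof to be an algebraic membership test that avoids enumerating circuits explicitly: for each circuit topology, elimination of the $O(s)$ weight variables from the defining equations of ``$A(\alpha)$ is realized by this topology'' yields a single polynomial condition on $\alpha$ of degree $2^{O(d)}$, and checking whether the candidate $\alpha$ satisfies any such condition can be batched across topologies by exploiting depth-$d$ structure. Precisely balancing $s$, $d$, and the extension degree so that the total enumeration plus elimination cost fits inside $\exp(n^{1-\Omega(1/d)})$ is the most delicate step, and is what pins down the specific exponents $1 + 1/2d$ (size lower bound) and $1 - 1/2d$ (construction time) appearing in the theorem.
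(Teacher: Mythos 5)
Your approach diverges fundamentally from the paper's, and the obstacle you flag at the end is not a technical detail to be worked out—it is the actual gap that sinks the proposal. You frame the construction as a search: enumerate candidate seeds $\alpha$, test whether $A(\alpha)$ lies in the easy set $\mathcal{E}_{d,s}$, and output a good one. You correctly observe that brute-force circuit enumeration costs $\exp(\Omega(s\log s)) = \exp(\Omega(n^{1+1/2d}\log n))$, which exceeds the budget by an exponential margin, and then you ``expect'' an elimination-based membership test to save the day. But no such test is provided, and there is no reason to believe one exists within the budget: eliminating $O(s)$ weight variables, even restricted to a single depth-$d$ topology, produces resultants of degree $\exp(\Omega(s))$ and is itself an $\exp(\Omega(s))$ computation by any known method. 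Moreover, the number of topologies to ``batch across'' is already $(n+s)^{\Theta(s)}$, so the batching you hope for would need to be exponentially compressive before the elimination step even begins. You have essentially restated the hard part of the problem, not solved it.

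The paper avoids this entirely by never searching. It uses the \emph{Shoup-Smolensky dimension} $\ssd_t(\cdot)$ as a complexity measure: Lemma~\ref{lem:lin-ckt-ssd} shows that any matrix computed by a depth-$d$, size-$s$ linear circuit has $\ssd_t(M) \leq (e(2s/dt))^{dt}$, so a matrix with $\ssd_t(M) \geq \binom{n^2}{t}$ immediately forces $s = \Omega(n^{1+1/2d})$ when $t = n^{1-1/2d}$. The explicit construction then reduces to building, in time $n^{O(t)}$, a set $S$ of $n^2$ nonnegative integers all of whose $t$-subsets have distinct sums and with magnitude $n^{O(t)}$; this is achieved by reducing $\{1, 2, 4, \ldots, 2^{n^2-1}\}$ modulo a suitable prime found by the prime number theorem, and the matrix is $A_n[i,j] = \alpha^{e_{ij}}$ for $\alpha$ a root of an explicit irreducible polynomial of degree $\Theta(t \cdot n^{O(t)})$ over $\mathbb{F}_q$. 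There is no ``check against all small circuits'' step at any point. Incidentally, your candidate family $A(\alpha)_{ij} = \alpha^{n(i-1)+j}$ uses exponents $\{1,\ldots,n^2\}$, which do \emph{not} have distinct $t$-subset sums (e.g.\ $1+4=2+3$), so it would not even serve as the paper's hard matrix; the paper's exponents are specifically engineered to make $\ssd_t$ attain its maximum $\binom{n^2}{t}$.
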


Here, $\overline{\mathbb{F}_q}$ denotes the algebraic closure of $\mathbb{F}_q$. The results in \cite{KV19} work for any field $\mathbb{F}$. In this article, we only consider the case when $\mathbb{F}$ is a finite field. \\

For any matrix $A\in \mathbb{F}^{n\times n}$, if $A = S +L$ where $\spar(S)\leq s$ and $\rk(L)\leq r$ then the linear transformation $x \mapsto A\cdot x$ can be computed by a linear circuit of depth 2 and size $2nr +s$. Hence, the following corollary of Theorem \ref{thm:rigid-subexp} which gives an explicit family of rigid matrices  constructible in sub-exponential time is not very difficult to observe.

\begin{corollary}
Let $\mathbb{F}$ be any field. There exists a family of matrices $(A_n)_{n\in \mathbb{N}}$ constructible in time $2^{o(n)}$ such that ${\cal R}_{A_n}(n^{0.5-\epsilon})=\Omega(n^{1.24})$.
\end{corollary}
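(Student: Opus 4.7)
The corollary follows by instantiating Theorem~\ref{thm:rigid-subexp} at depth $d = 2$ and combining it with the standard reduction from matrix rigidity to depth-$2$ linear circuit size. The plan is to first produce a family $(A_n)_{n\in\mathbb{N}}$ from Theorem~\ref{thm:rigid-subexp} with $d = 2$, constructible in time $\exp(n^{1 - \Omega(1/2)}) = 2^{o(n)}$, such that every depth-$2$ linear circuit computing $A_n$ has size at least $c \cdot n^{1 + 1/4} = c \cdot n^{1.25}$ for some absolute constant $c > 0$. The required extension degree $\exp(O(n^{1/2} \log n))$ at $d = 2$ is comfortably absorbed into the $2^{o(n)}$ construction cost; for fields $\mathbb{F}$ not containing such an extension one appeals to the general-field version of Theorem~\ref{thm:rigid-subexp} noted in the text.

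Next, I would invoke the routine rigidity-to-circuit bridge. If $\mathcal{R}_{A_n}(r) \leq s$, then $A_n = S + L$ with $\mathrm{spar}(S) \leq s$ and $L = UV^\top$ of rank at most $r$, so the depth-$2$ linear circuit with $r$ middle gates computing the coordinates of $V^\top x$ and $n$ output gates computing $U(V^\top x) + Sx$ has at most $nr + (nr + s) = 2nr + s$ edges. Combining with the size lower bound above gives $2nr + s \geq c \cdot n^{1.25}$ whenever $\mathcal{R}_{A_n}(r) \leq s$. Taking $r = n^{0.5 - \epsilon}$ with $\epsilon$ chosen slightly above $0.25$ makes $2nr = 2 n^{1.5 - \epsilon}$ of strictly smaller order than $n^{1.25}$, so rearranging yields
\[
\mathcal{R}_{A_n}(n^{0.5 - \epsilon}) \;\geq\; c \cdot n^{1.25} - O(n^{1.5 - \epsilon}) \;=\; \Omega(n^{1.24}).
\]

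The substantive work is entirely packed into Theorem~\ref{thm:rigid-subexp}; for this corollary only parameter bookkeeping remains. The main thing to watch is the choice $d = 2$: any smaller $d$ is meaningless, while any larger $d$ would weaken the depth-$d$ size lower bound from $n^{1.25}$ towards $n$, leaving too little slack after subtracting the unavoidable $2nr$ term to reach the advertised $\Omega(n^{1.24})$ bound on sparsity. Once $d = 2$ is fixed, the rest is immediate.
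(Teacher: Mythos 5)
Your plan is correct and matches the paper's own route exactly: the paper states the identical bridge---if $A = S + L$ with $\spar(S) \leq s$ and $\rk(L) \leq r$ then $x \mapsto Ax$ has a depth-$2$ linear circuit of size $2nr + s$---and then observes that the corollary follows from Theorem~\ref{thm:rigid-subexp} at $d = 2$. Your added remark that $\epsilon$ must exceed $0.25$ for the $2nr$ term to be dominated is a worthwhile piece of bookkeeping that the paper leaves implicit (as stated, $r = n^{0.5-\epsilon}$ only gives a nontrivial bound for such $\epsilon$), and with that choice the argument in fact yields $\Omega(n^{1.25})$ rather than the slightly weaker $\Omega(n^{1.24})$ claimed.
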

%\begin{proof}
%For any matrix $A\in \mathbb{n\times n}$. If $A = S +L$ where $\spar(S)\leq s$ and $\rk(L)\leq r$ then the linear transformation $x \mapsto A\cdot x$ can be computed by a linear circuit of depth 2 and size $2nr +s$. If for some $\epsilon>0$ ${\cal R}_{A_n}(n^{0.5-\epsilon})=o(n^{1.24})$ then there is a depth 2 circuit of size at most $o(n^{1.25}$ which is a contradiction to Theorem \ref{thm:ssd-rigidity}. \end{proof}

Rest of this subsection is devoted to the proof of Theorem \ref{thm:rigid-subexp}. Following the standard template for proving arithmetic circuit lower bounds, the proof of Theorem \ref{thm:rigid-subexp}, proceeds by obtaining a {\em complexity measure} that is low for matrices computable by low-depth linear circuits of small size while obtaining explicit matrices for which the measure is large. Here, we use {\em Shoup-Smolensky dimension} of matrices as  a complexity measure.

The elements of any extension $\mathbb{E}$ of field $\mathbb{F}$ are univariate polynomials over $\mathbb{F}$ of appropriate degree and can be viewed as a vector of coefficients. Now, we formally define the Shoup-Smolensky dimension of a matrix:

\begin{definition}[Shoup-Smolensky dimension]
Let $\mathbb{F}$ be any field and $\mathbb{E}$ an extension of $\mathbb{F}$. Let
$M\in \mathbb{E}^{n\times n}$. For any $t\in \mathbb{N}$, $P_t(M) =\bigg\{\prod\limits_{(a,b)\in T} M_{ab}\mid T\in \binom{[n]\times [n]}{t}\bigg\} $ is the set of  products of $t$ distinct entries of $M$. The {\em Shoup-Smolensky dimension} of $M$ of order $t$ (denoted by $\ssd_t(M)$) is the dimension of the space spanned by the set $P_t(M)$ over $\mathbb{F}$.
\end{definition}

The Shoup-Smolensky dimension of $M$ of order $t$ is denoted by $\ssd_t(M)$ and is precisely $\dim_{\mathbb{F}}(span(P_t(M))$. First, we show that the Shoup-Smolensky dimension of matrices computable by small-size
and small-depth circuits is fairly low. 

\begin{lemma}
\label{lem:lin-ckt-ssd}
Let $M\in \mathbb{E}^{n\times n}$ be computable by linear circuit ${\cal C}$ of size $s$ and depth $d$. Then, for any $t\leq n^2/4$ such that $s\geq dt$, $\ssd_t(M)\leq {(e(2s/dt))}^{dt}$. 
\end{lemma}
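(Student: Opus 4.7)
The plan is to lift each matrix entry $M_{ij}$ to a formal polynomial in the $s$ edge weights of $\mathcal{C}$, use the depth bound $d$ to cap the degree of each such polynomial, and then bound $\ssd_t(M)$ by the dimension of the ambient space of polynomials of bounded degree in $s$ variables.

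First, I would label the $s$ edges of $\mathcal{C}$ by formal variables $w_1,\ldots,w_s$, and write $M_{ij}=p_{ij}(w_1,\ldots,w_s)$, where $p_{ij}\in \mathbb{F}[w_1,\ldots,w_s]$ is obtained by summing, over all directed paths in $\mathcal{C}$ from input $j$ to output $i$, the product of the edge variables along the path. By a straightforward induction on the depth of a gate, the polynomial computed at any gate of depth $\ell$ has total degree at most $\ell$ in $w_1,\ldots,w_s$; in particular $\deg(p_{ij})\le d$ for every $i,j\in[n]$. Consequently, for any $T\in \binom{[n]\times[n]}{t}$, the element $\prod_{(a,b)\in T} M_{ab}$ is the specialization at the actual edge weights of the formal polynomial $q_T:=\prod_{(a,b)\in T} p_{ab}\in \mathbb{F}[w_1,\ldots,w_s]$, and $\deg(q_T)\le dt$.

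Next, I would note that the $\mathbb{F}$-vector space of polynomials in $s$ variables of total degree at most $dt$ has dimension equal to the number of such monomials, namely $\binom{s+dt}{dt}$. The specialization map $\mathbb{F}[w_1,\ldots,w_s]\to \mathbb{E}$ sending each $w_k$ to its actual value is $\mathbb{F}$-linear, so the $\mathbb{F}$-span of $\{q_T:T\in\binom{[n]\times[n]}{t}\}$, a subspace of the degree-$\le dt$ polynomials, maps into an $\mathbb{F}$-subspace of $\mathbb{E}$ of dimension at most $\binom{s+dt}{dt}$. Since this image contains $P_t(M)$, this yields
$$
\ssd_t(M)\;\le\;\binom{s+dt}{dt}.
$$
Finally I would apply the standard bound $\binom{N}{k}\le (eN/k)^k$ with $N=s+dt,\ k=dt$, and use the hypothesis $s\ge dt$ (so $s+dt\le 2s$) to conclude
$$
\binom{s+dt}{dt}\;\le\;\left(\frac{e(s+dt)}{dt}\right)^{dt}\;\le\;\left(\frac{2es}{dt}\right)^{dt}\;=\;\left(e\cdot\frac{2s}{dt}\right)^{dt},
$$
which is the claimed bound. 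The hypothesis $t\le n^2/4$ only ensures that $t$-subsets of $[n]\times[n]$ exist and plays no role in the estimate itself.

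The argument is essentially mechanical once the polynomial lifting is in place; the only point that requires any verification is that the depth bound genuinely caps the degree of each $p_{ij}$ by $d$, and that is immediate from induction on gate depth. I do not anticipate any substantive obstacle beyond setting up the formalism correctly.
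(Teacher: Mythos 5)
Your proof is correct and follows essentially the same approach as the paper: the paper writes $M=P_1\cdots P_d$ as a product of layer adjacency matrices and expands each $M_{ij}$ as a sum of degree-$\le d$ monomials in the $\le s$ edge weights, which is exactly your path-sum formulation, and both then bound $\ssd_t(M)$ by $\binom{s+dt}{dt}$ and apply the same binomial estimate using $s\ge dt$.
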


\begin{proof}
Let matrix $M\in \mathbb{E}^{n\times n}$ be computable by linear circuit ${\cal C}$ of size $s$ and depth $d$ with layers $L_1,\ldots,L_{d+1}$. Then $M=P_1\cdots P_d$ where $P_i$ is the adjacency matrix of the graph ${\cal C}$ between layers $L_i$ and $L_{i+1}$. Then,

\begin{equation}
    M_{ij} = (P_1\cdots P_d)_{ij} 
    =\sum\limits_{k_1,\ldots,k_{d-1}} \left[(P_1)_{i,k_1}\cdot \prod\limits_{\ell=2}^{d-1}(P_\ell)_{k_\ell -1,k_\ell} \cdot (P_d)_{k_{d-1},j} \right] \label{eq:eq1}
\end{equation}

As ${\cal C}$ has size $s$, the total number of non-zero entries in all of $P_1,\ldots,P_d$ is at most $s$. From Equation (\ref{eq:eq1}), each entry of $M\in \mathbb{E}^{n\times n}$ is a sum of monomials of degree at most $d$ in the entries of matrices $P_1,\ldots,P_d$. Hence, every element of $P_t(M)$ is a sum of monomials of degree at most $dt$ in at most $s$ entries. Thus, 
\begin{align*}
    \ssd_t(M) &\leq \binom{s+dt}{dt}\\ 
    &\leq \left(\frac{e(s+dt)}{dt}\right)^{dt} \\
    &\leq  e^{dt} \left( 1+\frac{s}{dt} \right)^{dt} \leq {(e(2s/dt))}^{dt}
\end{align*}
as $s\geq dt$.
\end{proof}

Now, we want to construct $G\in \mathbb{E}^{n\times n}$ whose  Shoup-Smolensky dimension is large. For any $t\in \mathbb{N}$, clearly $\ssd_t(M)\leq \binom{n^2}{t}$ and we want $\ssd_t(G) \geq \binom{n^2}{t}$. Now, a simple way to achieve the maximum possible dimension for $\ssd_t(G)$ is to consider $G_{ij}=y^{e_{ij}}$ where the sum of any $t$ elements in $ \{e_{11},\ldots,e_{nn}\}\subseteq \mathbb{ N}$ of size $n^2$ is always distinct. As every element in $P_t(G)$ is the product of $t$ entries of $G$, $\ssd_t(G) \geq \binom{n^2}{t}$.

Recall that in the end we want to construct a family of  matrices in sub-exponential time. For this, we will require $G\in\mathbb{E}^{n\times n}$ to be constructed in sub-exponential time (i.e., time $n^{O(t)}$ which is sub-exponential when $t=n^{1-1/2d}$). This in turn implies that the entries of $G$ should be monomials of degree $n^{O(t)}$ and that every entry should be constructed in time $n^{O(t)}$. In summary, for any $t\in\mathbb{N}$, we require a set $S \subseteq \mathbb{N} $ satisfying the following conditions:
\begin{enumerate}
\item $|S|=n^2$ and every subset of $S$ of size $t$ has a distinct sum;
\item $S$ can be constructed in time $n^{O(t)}$; and
\item The maximum value of any element in $S$ is at most $n^{O(t)}$.
\end{enumerate}

To begin with, consider the following natural candidate set $S'=\{1,2,\ldots, 2^{n^2-1}\}$ for the set $S$. Clearly, $|S'|=n^2$ and every subset of $S'$ of size $t$ has a distinct sum but $S'$ does not satisfy condition $(3)$. A natural next step is to go modulo a prime $p$ so that set $S= \{ a \mod p \mid a\in S'\}$ satisfies conditions (1)-(3). To ensure (1), intuitively we want $p$ to be quite large. To ensure (2), we need $p$ to be not too large so that we can search for such a $p$ and construct $S$ in time $n^{O(t)}$.

In particular, we want a prime $p$ such that for any two sets $T,T'\subseteq S$ with $|T|=|T'|$, $\sigma_t = \sum\limits_{a\in T}a$ and $\sigma_{t'} = \sum\limits_{a\in {T'}}a$ are different. That is, $p$ does not divide $\prod\limits_{\substack{T,T'\subseteq S\\ |T|=|T'|}} (\sigma_t-\sigma_{t'})$ which is at most $(2^{n^2})^{n^{O(t)}}$ as every element of $S$ is at most $2^{n^2}$ and there are $n^{O(t)}$ subsets of $S$ of size $t$. Thus, by the {\em prime number theorem}, there are at most $\log ((2^{n^2})^{n^{O(t)}})$ distinct primes dividing $\prod\limits_{\substack{T,T'\subseteq S\\ |T|=|T'|}} (\sigma_t-\sigma_{t'})$. This proves the existence of such a prime $p$ and hence the existence of such a set $S \subseteq \mathbb{N}$ for any $t\in\mathbb{N}$ satisfying conditions (1)-(3). With  set $S$ in hand, we now complete the proof of Theorem \ref{thm:rigid-subexp}. \\

\noindent{\em Proof of Theorem $\ref{thm:rigid-subexp}$.}
Let $\mathbb{F}_q$ be any finite field.
Let $t=n^{1-1/2d}$ and $S=\{e_{11},\ldots,e_{nn}\}$ be the set constructed above. For the matrix
$G_n \in (\mathbb{F}[y])^{n\times n}$ given by $G_{ij}=y^{e_{ij}}$ where each $e_{ij}\in S$, we have $\ssd_t(G)\geq\binom{n^2}{t}$. So far, we have constructed a matrix $G_n\in (\mathbb{F}[y])^{n\times n}$. Since we want to obtain a matrix $G\in (\mathbb{E})^{n\times n}$ we need to project $y$ to some value preserving the Shoup-Smolensky dimension. For any $D$, an irreducible polynomial $g(z)$ of degree $D+1$ over $\mathbb{F}_q$ can be constructed in deterministic time $\poly(D,|\mathbb{F}_q|)$\cite{Sho88}. Let $\alpha$ be the root of $g(z)$ that is in $\mathbb{E} \triangleq \mathbb{F}_q[z]/ \langle g(z)\rangle$. Define $A_n \triangleq G_n\vert_{y=\alpha} $. Clearly, by the properties of the set $S$ constructed, any element of $P_t(A_n)$ is $\alpha^m$ where $m\leq t\cdot n^{O(t)}$ and every element of $P_t(A_n)$ corresponds to a distinct power of $\alpha$. Thus, by fixing $D=2\cdot t\cdot n^{O(t)}$, as $\{1,\alpha,\alpha^2,\ldots,\alpha^D\}$ are linearly independent over $\mathbb{F}_q$, $\ssd_t(A_n) = \ssd_t(G_n)\geq \binom{n^2}{t}$. Now, if $A_n$ is computable by a depth $d$ size $s$ linear circuit then by Lemma \ref{lem:lin-ckt-ssd},
\begin{align*}
   {(e(2s/dt))}^{dt}  \geq \binom{n^2}{t}
\end{align*}
If $s<n^{1+1/2d}/2$, the above equation contradicts binomial estimates. Hence $s=\Omega(n^{1+1/2d})$.

$\qedh$

\subsection{Construction of rigid matrices based on probabilistically checkable proofs}
\label{subsec:pcp-constructions}

There have been recent constructions of semi-explicit rigid matrices based on a striking connection between rigid matrices and probabilistically checkable proofs. Informally, a {\em probabilistically checkable proof}(PCP) for a language $L$ is a proof or a certificate for membership of $x$ in $L$ such that by {\em probabilistically} querying very {\em few} locations of the proof, if $x\in L$ the verifier can always be convinced of this fact while if $x\not\in L$ then with {\em high} probability the verifier will reject the proof. For a more formal definition and a huge body of work revolving around PCPs see \cite{Pra07} and references therein.

Along these lines, there have been two results one due to Alman and Chen\cite{AC19} and the other by Bhangale et.\ al in \cite{BHPT20} both of which are geared towards constructing $\PCP$s with nice properties that aid the construction of semi-explicit rigid matrices. We begin by stating the following construction from \cite{AC19}:

\begin{theorem}
\label{thm:rigid-pcp1}
There exists a matrix $A \in \mathbb{F}_2^{n\times n}$ constructible in $\P^{\NP}$ such that there exists a $\delta>0$ for all$\epsilon>0$ with $R_{A_n}^{\mathbb{F}_q}(2^{(\log n)^{1/4-\epsilon}}) \geq \delta\cdot n^2$.
\end{theorem}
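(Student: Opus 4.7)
The plan is to use PCPs as a compression tool that converts a language hard for small circuits into a matrix whose rigidity is provable by a ``rigidity implies speedup'' contradiction. The starting point is a canonical language $L$ that, by diagonalization/hierarchy arguments, is hard on average for nondeterministic circuits of a certain size, and whose truth table matrix (indexed by inputs) is computable in $\E^{\NP}$. The truth table is far too large to be a $\P^{\NP}$-constructible matrix directly, so the role of the PCP is to shrink the object: instead of tabulating $L$ on all inputs, we tabulate the $0/1$ behavior of a PCP verifier $V$ for $L$, indexed by (input, random string) or (input, proof position), giving a matrix $M_V \in \mathbb{F}_2^{n \times n}$ whose dimension can be made as small as desired while still being a faithful "witness table'' for $L$.

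The core step is the following contradiction: suppose $R_{M_V}^{\mathbb{F}_2}(2^{(\log n)^{1/4-\epsilon}}) < \delta n^2$, so that $M_V = L_0 + S$ with $\mathrm{rk}(L_0) \le r = 2^{(\log n)^{1/4-\epsilon}}$ and $\mathrm{spar}(S) \le \delta n^2$. Then matrix--vector products $M_V y$ can be evaluated in time $\tilde O(nr + \delta n^2)$, or by batching via fast rectangular matrix multiplication one can evaluate $M_V$ against many vectors at once in subquadratic amortized time per entry. Composing this with the PCP verifier's very short decision procedure (few queries into the ``proof'', small alphabet, good soundness) yields an unexpectedly fast algorithm for deciding $L$ on inputs of the corresponding size, much faster than the circuit lower bound allows; this contradiction forces $M_V$ to be rigid in the claimed regime. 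The sub-exponential rank threshold $2^{(\log n)^{1/4-\epsilon}}$ arises precisely as the quantitative bound one gets by balancing (a) the proof length and query complexity of the best available almost-linear-size PCP, (b) the speedup offered by the rank-$r$ part, and (c) the nontrivial savings from a batched matrix multiplication algorithm such as Coppersmith's rectangular MM or the Alman--Williams MULT/OV-style combinatorial reduction.

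To place the construction in $\P^{\NP}$ rather than $\E^{\NP}$, one exploits that the only information about $L$ needed to define $M_V$ is the verifier's accept/reject bit at $\mathrm{poly}(n)$ locations, each of which is a question of the form ``does there exist a proof such that $V$ accepts on input $x$ with randomness $r$ at position $p$'', i.e., an $\NP$-type predicate; thus $M_V$ can be written down entry-by-entry in $\P^{\NP}$ after hard-coding an appropriate representation of $L$. The hard $L$ itself is chosen by a Karp--Lipton/Kannan-style enumeration carried out with $\NP$ oracle calls, picking (in lexicographic order via $\P^{\NP}$) the first input length on which the universal witness-search succeeds. The main obstacle, and the step on which the whole theorem rests, is constructing a PCP with simultaneously: (i) nearly linear proof length, (ii) very few queries, (iii) very small alphabet (ideally Boolean), and (iv) strong soundness, tuned so that the speedup from low rigidity beats a known time lower bound for $L$ at the parameter $r = 2^{(\log n)^{1/4-\epsilon}}$. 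Every relaxation of any of these PCP parameters directly weakens the allowable rank, which is why the exponent $1/4$ appears instead of $1$ and why further progress in this line depends on sharper PCPs rather than on new rigidity arguments.
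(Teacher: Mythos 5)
Your high-level plan---use a PCP to shrink a hard language into a small matrix, then argue that low rigidity of that matrix would give an impossibly fast algorithm for the language---is the right shape, but several of the load-bearing choices in your sketch are not the ones the paper makes, and a few of them would actually cause the argument to fail.

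First, you make the wrong object rigid. You define $M_V$ as the table of the verifier's accept/reject behavior indexed by (input, randomness) or (input, proof position). In the paper the rigid matrix is the \emph{witness itself}: the $2^n$-bit PCP proof for a fixed unary input $1^n$, reshaped into an $N\times N$ matrix $W$ with $N\approx 2^{n/2}$. This matters because the entire contradiction rests on counting how many clauses of the CSP $\Phi$ (the verifier's decision predicate) the proof $W$ satisfies; that count equals $\spar(W)$ in the MAX-1-LIN case and $\spar(Q_1+Q_2)$ in the MAX-2-LIN case. A table of verifier outputs does not have this property, and there is no natural way to relate its rigidity to the acceptance probability of any proof. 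Second, your hardness source is wrong: you invoke a Karp--Lipton/Kannan-style circuit lower bound and ``average-case hardness for nondeterministic circuits.'' The paper needs nothing of the sort; it uses only the nondeterministic time hierarchy, fixing a unary $L\in\NTIME(2^n)\setminus\NTIME(2^n/n)$. This is also why the fast algorithm must be nondeterministic: the key algorithmic step is to \emph{guess} the low-rank factorization $W=A\cdot B$ (or $W'=A\cdot B$ for a nearby $W'$) and then verify, which you omit entirely; without the nondeterministic guess of the factors there is no speedup, since the witness is not given to you in factored form. Third, your speedup mechanism is a generic one (matrix--vector products, batched rectangular MM, Coppersmith/OV-style savings). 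The speedup actually used is very specific: the algorithm from Alman--Chen for computing the \emph{number of nonzeros} of a product $A\cdot B$ in time $n^{2-\Omega(1/\log r)}$ given the $n\times r$ and $r\times n$ factors. This is the only subroutine that is needed, and it is exactly tuned to the ``count satisfied clauses'' task; generic fast MM does not obviously give what is needed here. Fourth, and most importantly, your sketch would only prove a high \emph{rank} bound, not a rigidity bound. You never address what happens when $W$ is perturbed in a $\delta$-fraction of positions. The paper handles this via the \emph{smoothness} of the PCP (each proof position is queried with roughly uniform probability), which ensures that replacing $W$ by a nearby low-rank $W'$ changes the acceptance probability by only $O(\delta)$, preserving the completeness--soundness gap. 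You also omit \emph{rectangularity}, which is what guarantees that the matrices $Q_1, Q_2$ built from $W$ via the query structure of a MAX-2-LIN CSP inherit a low-rank factorization of rank $\le 2r$ from $W=A\cdot B$; without it the nondeterministic guess of $A, B$ cannot be turned into a sparsity computation at all. The exponent $1/4$ is then an artifact of the parameters of the specific almost-rectangular smooth PCP available in the Alman--Chen version, not of a trade-off against rectangular matrix multiplication as you suggest; the improvement to $2^{\log n/\Omega(\log\log n)}$ in the follow-up work comes from building a better smooth, rectangular PCP, which is consistent with your closing remark but not with your derivation of the $1/4$.
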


Very recently, Bhangale et.\ al in \cite{BHPT20} obtain the following strengthening of the parameters in the above theorem:

\begin{theorem}
\label{thm:rigid-pcp2}
There is a constant $\delta \in (0,1)$  such that there is an $\FNP$ machine  that for infinitely many $n$ on input $1^{n}$ 
outputs an  matrix $W_n$ in $\mathbb{F}_2^{n\times n}$ such that $R_{W_n}(2^{\log n/\Omega(\log \log n)}) \geq \delta\cdot n^2$. 
\end{theorem}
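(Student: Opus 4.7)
The plan is to extend the Alman--Chen template underlying Theorem~\ref{thm:rigid-pcp1} by replacing the $\PCP$ used there with a more efficient one, thereby enlarging the rank for which rigidity $\geq \delta n^2$ can be certified from $r = 2^{(\log n)^{1/4-\epsilon}}$ up to $r = 2^{\log n / \Omega(\log \log n)}$. The abstract object remains a diagonalization: an $\FNP$ machine should output $W_n$ such that no pair $(L,S)$ with $\mathrm{rank}(L) \leq r$ and $\spar(S) \leq \delta n^2$ satisfies $W_n = L + S$. The basic difficulty is that such pairs are described by roughly $2rn + \Theta(\delta n^2 \log n)$ bits, which exceeds the $n^2$ bits needed for $W_n$; hence brute-force enumeration or naive counting cannot work, and a $\PCP$-based compression of ``non-rigidity certificates'' is essential.

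I would organize the argument in four steps. Step~1: design a $\PCP$ verifier $V$ that, given oracle access to a matrix $A \in \mathbb{F}_2^{n\times n}$ together with a proof $\pi$, accepts with high probability if $A$ has a decomposition $A = UV^{T} + S$ with $U,V \in \mathbb{F}_2^{n\times r}$ and $\spar(S) \leq \delta n^{2}$, and rejects otherwise. The verifier should make only polylogarithmically many nonadaptive queries to $A$ and $\pi$. Step~2: bound, with soundness-discounted weight, the total number of matrices $A$ that some pair (low-rank factorization, proof) can make the verifier accept; if this total is strictly less than $2^{n^{2}}$, there must exist $W_n$ rejected by \emph{every} such pair. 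Step~3: realize the search via $\FNP$, by guessing $W_n$ and using the $\NP$ oracle to check that no low-rank $L$ together with any short $\PCP$ proof gets $V$ to accept on input $W_n$. Step~4: balance the parameters so that the $\PCP$'s proof length $\ell$, query complexity $q$, and the rank $r$ satisfy $2rn + \ell - \Omega(q) < n^{2}$, and choose $r = 2^{\log n / c\log\log n}$.

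The main obstacle is Step~1: producing a $\PCP$ of proximity for the compound statement ``$A$ admits a low-rank plus sparse decomposition'' that is simultaneously short (near-linear proof length in $n$), query-efficient ($\mathrm{polylog}(n)$ queries), and sufficiently sound that each accepting proof restricts $A$ to a vanishingly small sub-population of $\mathbb{F}_2^{n\times n}$. Alman--Chen invoke a generic $\PCP$ of proximity, and it is precisely the slack in that step that forces their weaker rank parameter. The BHPT strengthening presumably rests on a tailored ``rectangular'' $\PCP$ whose proof carries an $n\times n$ structure matching the target matrix, so that the verifier can exploit row-wise algebraic structure (for the rank constraint) and entry-wise combinatorial structure (for the sparsity constraint) simultaneously, with tighter soundness analysis. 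Once such a $\PCP$ is in hand, the counting argument, the $\FNP$ simulation, and the final optimization go through along the Alman--Chen lines without further conceptual obstacle.
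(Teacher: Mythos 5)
Your proposal takes a genuinely different route from the paper, and the route you chose has a gap that the paper's argument is specifically designed to avoid. You propose to build a PCP of proximity directly for the compound statement ``$A$ admits a rank-$r$ plus $\delta n^2$-sparse decomposition,'' then count accepted matrices, then have an $\FNP$ machine guess $W_n$ and use an $\NP$ oracle to rule out all accepting proofs. The paper does none of this. Instead it diagonalizes against the nondeterministic time hierarchy: it fixes a unary language $L \in \NTIME(2^n)\setminus\NTIME(2^n/n)$, takes a PCP verifier \emph{for $L$} (not for non-rigidity) whose proofs have length $\approx 2^n$, reshapes each proof as an $N\times N$ matrix $W$ with $N=2^{n/2}$, and argues that if \emph{every} such witness were $\delta$-close to a rank-$r$ matrix for all $1^n\in L$, then one could decide $L$ in $\NTIME(2^n/n)$ --- guess the low-rank factorization, use the PCP verifier's rectangular structure to push that factorization through to matrices $Q_1,\dots,Q_q$ encoding the verifier's queried values, run the Alman--Chen fast sparsity algorithm on those low-rank products to estimate the number of satisfied constraints, and compare against the soundness threshold. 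Smoothness of the PCP controls the effect of the $\delta n^2$ perturbed entries. The contradiction with the hierarchy theorem then yields rigid witnesses infinitely often, and the $\FNP$ machine is simply the nondeterministic verifier for $L$ that outputs its guessed witness.

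The concrete problem with your Step~1 and Step~2 is that nothing forces a PCP of proximity for ``$A$ is non-rigid'' to have the tight accounting you need. A PCPP distinguishes members from inputs that are \emph{far} from the language; you would need the set of accepted $A$'s (over all proofs) to be small, which is a union bound over proofs, and with proof length $\ell$ even near-linear in $n^2$ this union bound costs a factor $2^{\ell}$ that swamps $2^{n^2}$. More importantly, the ``rank $\leq r$'' clause of your compound statement does not have local structure that a constant- or polylog-query verifier can exploit: rank is a global property, and the verifier cannot certify a rank-$r$ factorization with $\mathrm{polylog}(n)$ queries to $A$ unless the proof itself carries the factorization, at which point the proof is $\Omega(rn)$ bits long and the counting fails as soon as $r$ is super-polylogarithmic. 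The paper's rectangular-PCP device is doing something quite different from what you ascribe to it: it is not making the rank constraint locally checkable, it is ensuring that the \emph{rank of the matrices the verifier implicitly constructs from the proof} stays within a constant factor of the rank you assumed for contradiction, so that the fast low-rank sparsity-counting algorithm applies. Finally, your Step~3 as stated does not typecheck with $\FNP$: an $\FNP$ machine is a nondeterministic poly-time machine, not an $\NP$-oracle machine, so ``guess $W_n$ and use the $\NP$ oracle to verify rigidity'' is a $\P^{\NP}$ (or $\FP^{\NP}$) construction, not an $\FNP$ one; the whole point of the hierarchy-theorem route is that rigidity is never verified directly --- it falls out of the impossibility of a fast simulation.
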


\begin{remark}
If you are wondering what the class $\FNP$ is, it is the function version of the class $\NP$. A relation $R(x,y)$ is in $\FNP$ if there exists a non-deterministic poly-time Turing machine $M$ that on input $x$ outputs $y$ such that $R(x, y) = 1$ or rejects when no such $y$ exists.
\end{remark}

Alman and Chen also provide a strengthening of the parameters (i.e., $R_{A_n}^{\mathbb{F}_q}(2^{(\log n)^{1-\epsilon}}) \geq \delta\cdot n^2$) in Theorem \ref{thm:rigid-pcp1} by assuming that $\NQP \subset \P/\poly$ and using the {\em easy witness lemma}. However, note that the statement of Theorem \ref{thm:rigid-pcp2} is an unconditional strengthening of Theorem \ref{thm:rigid-pcp1}. In this sub-section, we will see a proof sketch of Theorem \ref{thm:rigid-pcp2} and carefully delineate the connections between rigid matrices and probabilistically checkable proofs.

We begin with the three main hammers need to prove Theorem \ref{thm:rigid-pcp2}:
\begin{enumerate}
\item There exists a unary language $L\in \NTIME(2^n)\setminus \NTIME(2^n/n)$. This is essentially the non-deterministic time hierarchy theorem from \cite{Zak83}. \label{unary-ndth}
\item A faster algorithm to compute the sparsity of a given low-rank matrix $M$ developed in \cite{AC19}. Observe that given any $n\times n$ matrix $M$ the sparsity can be computed in time $n^2$. However if $M$ has {\em low} rank $r$ then it admits a product decomposition $M=A\cdot B$ with $A,B$ having dimensions $n\times r$ and $r\times n$ respectively. The faster algorithm in \cite{AC19} when given as input the matrices $A,B$ computes the sparsity of the matrix $M=A\cdot B$ in time $n^{2-\Omega(1/\log r)}$ for all $r=n^{o(1)}$. \label{lowrank-sparsity}
\item $\NTIME(2^n)$ has \PCP s with "nice" properties. See Theorem \ref{thm:pcp-nice} for actual statement. 
\label{ntime-pcp}
\end{enumerate} 
 
Before we sketch the proof of Theorem \ref{thm:rigid-pcp2}, we review the correspondence between \PCP s and constraint satisfaction problems(\CSP s). The $\PCP$ verifier $V$ can be viewed as an  instance $\Phi$ which is a set of functions  $(\phi_1,\phi_2,\ldots,\phi_m)$ on a set $V$ of $n$ variables. Each $\phi_i:\{0,1,\ldots,t-1\}^q\rightarrow \{0,1\}$ is a  constraint or clause whose arity is $q$. The probabilistically checkable proof $\pi$ is an assignment $\bar{a} \in \{0,\ldots,t-1\}^n$ to the $n$ variables. The query complexity of the verifier $V$ is the arity $q$ of the constraints. The randomness complexity of the verifier $V$ is log of the number of constraints (i.e., $\log m$).  We say that an assignment $\bar{a}$ satisfies constraint $\phi_i$ if $\phi_i(\bar{a})=1$. Let ${\sf val}(\Phi)$ denote the maximum over all assignments $\bar{a}$ of the fraction of clauses satisfied by $\bar{a}$ (i.e., $\max\limits_{\bar{a}}(\sum_{i=1}^m \phi_i(\bar{a}))/m$). The soundness error $s$ of the \PCP\ is ${\sf val}(\Phi)$.

As a first step, let us try to construct a high rank matrix using the three ingredients mentioned above. Note that constructing a high rank matrix is a trivial problem.  The goal however is to construct a matrix that has high rank even when a few entries are perturbed. The overall idea is to show that the length $2^n$ witnesses (viewed as a $2^{n/2} \times 2^{n/2}$ matrix) for the unary language $L\in \NTIME(2^n)\setminus \NTIME(2^n/n)$ cannot all be of low rank and there will exist high rank matrices infinitely often. As $\NTIME(2^n)$ has a \PCP\ with some {\em nice} properties there exists a verifier $V$ that randomly queries locations in the $N\times N$ witness matrix $W$ and run a decision predicate to decide if $1^n\in L$ or not. Here $N=2^{n/2}$. \footnote{We consider $N=2^{n/2}$ for simplicity of the argument. The square of the proof length is $2^n\poly(n)$.}

Let us assume for the moment that the decision predicate used by the verifier is the \CSP\ instance $\Phi$ consisting of the $N^2$ constraints $\phi_1=x_1, \phi_2=x_2,\cdots , \phi_{N^2}=x_{N^2}$. A proof is an assignment of 0's and 1's to the variables $x_1,x_2,\cdots, x_{N^2}$ which can be viewed as an $N\times N$ matrix $W$. As $L\in \NTIME(2^n)$, for every $1^n\in L$ there exists a witness $W_n$ that certifies the membership of $1^n $ in $L$. We begin with the following claim that asserts that every witness matrix $W_n$ cannot be of low rank and there will exists high rank matrices infinitely often.

\begin{claim}
\label{claim:1-lin}
The $N\times N$ witness matrix $W_n$ corresponding to every $1^n\in L$ cannot have low rank.
\end{claim}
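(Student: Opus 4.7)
The plan is to derive a direct contradiction with the nondeterministic time hierarchy theorem (ingredient \ref{unary-ndth}), using the faster low-rank sparsity algorithm (ingredient \ref{lowrank-sparsity}) as the key algorithmic lever. Suppose for contradiction that for every sufficiently large $n$ with $1^n \in L$, every valid witness $W_n \in \mathbb{F}_2^{N \times N}$ (with $N = 2^{n/2}$) has rank at most some slowly growing $r = r(N)$ over $\mathbb{F}_2$ (say $r = N^{o(1)}$). I will use this hypothesis to place $L$ into $\NTIME(2^n/n)$, contradicting the very choice of $L$.

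The $\NTIME(2^n/n)$ machine, on input $1^n$, first nondeterministically guesses matrices $A \in \mathbb{F}_2^{N \times r}$ and $B \in \mathbb{F}_2^{r \times N}$, implicitly specifying a candidate witness $W = A \cdot B$. This guess consumes only $2Nr$ bits, negligible compared to the budget $2^n/n$. The machine must then verify, without ever writing $W$ down in full, that $W$ satisfies the CSP $\Phi$ obtained from the PCP for $L$ on input $1^n$. In the simplified setting assumed by the claim, $\Phi$ consists of the $N^2$ single-variable constraints $\phi_i = x_i$, so checking $\Phi$ reduces to counting positions at which $W$ disagrees with an input-determined target pattern $T$ --- i.e.\ computing the sparsity of $W + T$ over $\mathbb{F}_2$. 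Since $T$ is an explicit and structurally simple matrix (and in particular can itself be taken to have low rank), $W + T$ still admits a rank-$O(r)$ factorization, and the faster sparsity algorithm of ingredient \ref{lowrank-sparsity} applied to this factorization runs in time $N^{2 - \Omega(1/\log r)} = 2^{n - \Omega(n/\log r)}$, which for $r = N^{o(1)}$ is $o(2^n/n)$. The total nondeterministic running time therefore fits inside $\NTIME(2^n/n)$, giving the contradiction and hence forcing some witness $W_n$ to have rank exceeding $r$.

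The main obstacle --- and the step where the full proof does most of its technical work --- is making the verification go through for the real PCP rather than just the toy unary CSP. The PCPs supplied by ingredient \ref{ntime-pcp} have $q$-ary constraints with a more intricate structure, so verification can no longer be expressed as a single sparsity computation on $W + T$. Instead, one has to decompose the constraint check into a bounded number of low-rank sparsity sub-queries, each of which can be fed to ingredient \ref{lowrank-sparsity}. This is exactly why the argument cannot use a black-box PCP theorem and needs the tailored structural properties baked into ingredient \ref{ntime-pcp}, and why strengthening the bound on $r$ all the way to $2^{\log n / \Omega(\log\log n)}$ as in Theorem \ref{thm:rigid-pcp2} requires the careful PCP construction of \cite{BHPT20}.
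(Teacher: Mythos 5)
Your argument follows the paper's proof exactly in structure: assume for contradiction that every witness $W_n$ has low rank, nondeterministically guess a rank-$r$ factorization $W = A\cdot B$, use the Alman--Chen low-rank sparsity subroutine to count satisfied clauses of the simplified CSP, and contradict the nondeterministic time hierarchy theorem by landing $L$ in $\NTIME(2^n/n)$.

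The one slip is a parameter claim. You write that for $r = N^{o(1)}$ the running time $N^{2-\Omega(1/\log r)}$ is $o(2^n/n)$, but $r = N^{o(1)}$ is only the regime in which the fast sparsity algorithm is \emph{applicable}; it does not by itself yield the needed time bound. Unpacking with $N = 2^{n/2}$, one gets $N^{2-\Omega(1/\log r)} = 2^{\,n - \Omega(n/\log r)}$, which falls below $2^n/n$ only when $\log r = O(n/\log n)$, i.e.\ $r \leq 2^{O(n/\log n)}$. That is the specific choice the paper makes ($r = 2^{n/\Omega(\log n)}$), and it is a strictly smaller regime than $N^{o(1)}$: for instance $r = 2^{n/\log\log n}$ is $N^{o(1)}$ but makes $\Omega(n/\log r) = \Omega(\log\log n)$, which is too small to beat the $\log n$ you need. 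This does not affect the logic of the contradiction, but the threshold for ``low rank'' must be stated as $r = 2^{n/\Omega(\log n)}$, not merely $N^{o(1)}$. (Separately, your ``target pattern $T$'' is harmless but unnecessary in the toy ${\sf MAX\mbox{-}1\mbox{-}LIN}$ case: with constraints $\phi_i = x_i$ the number of satisfied clauses is literally $\spar(W)$, i.e.\ $T = 0$.)
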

\noindent{\em Proof of Claim \ref{claim:1-lin}.} Suppose not, the $N\times N$ witness matrix $W_n$ corresponding to every $1^n\in L$ is always of low rank(say rank $r$). Then $W_n=A\cdot B$ for some $A \in \{0,1\}^{N\times r},B\in \{0,1\}^{r\times N}$. Consider the following algorithm for $L$ in $NTIME(2^n/n)$:

%\IncMargin{2.5em}
\begin{algorithm}[H]
\caption{Algorithm for $L$ in $\NTIME(2^n/n)$.}
\label{algo:ntime-1lin}
  \SetKwInOut{Input}{Input}
  \SetKwInOut{Output}{Output}
  \Input{$1^n$}
  \Output{Decide if $1^n\in L$ or not}
  \BlankLine
Guess low-rank representation $A \in \{0,1\}^{N\times r}$ and $B\in \{0,1\}^{r\times N}$ for $W=A\cdot B$.
\BlankLine
Compute the $\spar(W)$ using $A,B$ and fast algorithm in (\ref{lowrank-sparsity}).
\BlankLine
Accept $1^n$ iff $\spar(W)>s\cdot N^2$. \label{algo:ntime-1lin-accept}
\end{algorithm}
%\DecMargin{2.5em}
  
  Now, we need to argue that the above algorithm correctly decides $L$ using the \PCP\ verifier and also investigate its running time. Note that by (\ref{ntime-pcp}), there is a \PCP\ verifier for $L$ with soundness error $s$ and decision predicate $\Phi$.(Although the $\PCP$ verifier has interesting properties we will not need them at the moment.) Recall that we assumed the clauses of $\Phi$ are just  variables. Hence, the number of clauses satisfied in $\Phi$ by any witness $W_n$ is exactly the $\spar(W_n)$.

\begin{align*}
1^n \in L &\Rightarrow \exists \text{ proof $\pi$ such that \PCP\ verifier accepts $\pi$ with prob. $>s$} \\
&\Rightarrow \exists \text{ proof $\pi$ such that fraction of clauses satisfied in $\Phi$ is $>s$} \\
&\Rightarrow \exists \text{ proof $\pi$ such that number of clauses satisfied in $\Phi$ is $>s\cdot N^2$} \\
 &\Rightarrow \exists W_n \in \{0,1\}^{N\times N} \text{ such that }\spar(W_n)>s\cdot N^2. \\
  &\Rightarrow \text{Algorithm \ref{algo:ntime-1lin} accepts in line \ref{algo:ntime-1lin-accept}.} 
  \end{align*}
\begin{align*}
1^n \not\in L &\Rightarrow \forall \text{ proofs $\pi$  \PCP\ verifier accepts $\pi$ with prob. $<s$}\\
&\Rightarrow \forall \text{ proofs $\pi$ the fraction of clauses satisfied in $\Phi$ is $<s$} \\
&\Rightarrow \forall \text{ proofs $\pi$ the number of clauses satisfied in $\Phi$ is $<s\cdot N^2$} \\
 &\Rightarrow \forall W_n \in \{0,1\}^{N\times N} \spar(W_n)<s\cdot N^2. \\
 &\Rightarrow \text{Algorithm \ref{algo:ntime-1lin} rejects in line \ref{algo:ntime-1lin-accept}.}
\end{align*}

We will use the assumption that  every witness $W_n$ corresponding to $1^n\in L$ has low rank to argue about the running time. By using non-determinism to guess the low rank matrices $A,B$ and by using the fast algorithm to computing the sparsity of a low rank matrix we can ensure that $L\in \NTIME(2^n/n)$ for a suitable choose of rank $r=2^{n/\Omega(\log n)}$ which is a contradiction to the fact that $L\in \NTIME(2^n)\setminus \NTIME(2^n/n)$.

$\qedh$

Note that one simplifying assumption is that every clause of the $q\CSP$ corresponding to the $\PCP$ is just a variable(also known as ${\sf MAX\mbox{-}1\mbox{-}LIN}$). Now, let us relax this assumption a bit by assuming that the $\CSP$ corresponding to the $\PCP$ verifier is a set of $M^2$ clauses on $N^2$ variables where each of the form $(x_a \oplus x_b)$. As there are $M^2$ clauses without loss of generality we can assume that every clause in $\Phi$ is indexed by two variables $i,j \in [M]$. Similarly as there are $N^2$ variables we assume that every variable is 
indexed by two variables $a_1,a_2 \in [N]$. Let $c_{ij}$ be a clause for some $i,j\in [M]$ then $c_{ij} = (x_{a_1,a_2} \oplus x_{b_1,b_2})$ where $a_1,a_2,b_1,b_2\in [N]$. Let us call such an instance where every clause satisfies the above property as a ${\sf MAX\mbox{-}2\mbox{-}LIN}$ instance. In the previous case when every clause was a variable, we had that the number of clauses in $\Phi$ satisfied by a witness $W$ is exactly the sparsity of the witness viewed as a matrix.  In the case when each clause is an $\sf XOR$ of two variables we need to use $W$ to relate the number of clauses in $\Phi$ satisfied by a witness $W$  and the sparsity of the a low rank matrix. For this purpose, we define two matrices $Q_1, Q_2\in \{0,1\}^{M\times M}$
by:
\begin{align}
Q_1[i,j] &= W[a_1,a_2] \label{eq:q1-matrix}\\
Q_2[i,j] &= W[b_1,b_2] \label{eq:q2-matrix}
\end{align}
where $c_{ij} = (x_{a_1,a_2} \oplus x_{b_1,b_2})$ is a clause in $\Phi$. That is, $Q_1[i,j]$ and $Q_2[i,j]$ contain the assignment (according to witness $W$) to the first and second variables of the clause $c_{ij}$. Now, it is easy to observe that the number of clauses in the ${\sf MAX\mbox{-}2\mbox{-}LIN}$ instance $\Phi$ satisfied by a witness $W$ is the sparsity of the matrix $(Q_1+Q_2) \mod 2$.

\begin{obs}
Let $L\in \NTIME(2^n)\setminus \NTIME(2^n/n)$. Assume the $N\times N$ witness matrix $W_n$ corresponding to every $1^n\in L$ has low rank. Let $Q_1,Q_2\in \{0,1\}^{M\times M}$ be matrices obtained from $W$ as given in Equations \ref{eq:q1-matrix} and \ref{eq:q2-matrix}. If $(Q_1+Q_2)$ has a low-rank representation then (by guessing the low-rank representation for $(Q_1+Q_2)$) we can follow the outline of the Algorithm \ref{algo:ntime-1lin} to show $L\in \NTIME(2^n/n)$  which is a contradiction. This implies that $W$ has high rank.
\end{obs}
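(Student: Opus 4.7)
The plan is to construct an $\NTIME(2^n/n)$ algorithm for $L$ under both hypotheses (that every witness $W$ is low rank and that $Q_1+Q_2$ is low rank), contradicting item~(\ref{unary-ndth}). The algorithm is a direct adaptation of Algorithm~\ref{algo:ntime-1lin} to the MAX-2-LIN setting. On input $1^n$, the machine first non-deterministically guesses a low-rank factorization $W = A \cdot B$ with $A \in \{0,1\}^{N \times r}$, $B \in \{0,1\}^{r \times N}$, and then guesses a low-rank factorization $Q_1 + Q_2 = A' \cdot B'$ of the appropriate dimensions. It then invokes the fast sparsity algorithm of item~(\ref{lowrank-sparsity}) on $(A',B')$ to compute $\spar((Q_1+Q_2) \bmod 2)$, and accepts iff this value exceeds $s \cdot M^2$, where $s$ is the soundness parameter of the PCP from item~(\ref{ntime-pcp}).

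For correctness, the first thing to verify is that the XOR clause $c_{ij} = (x_{a_1,a_2} \oplus x_{b_1,b_2})$ is satisfied by the assignment encoded by $W$ if and only if $(Q_1+Q_2)[i,j] \equiv 1 \pmod{2}$; consequently the number of clauses of $\Phi$ satisfied by $W$ is exactly $\spar((Q_1+Q_2) \bmod 2)$. From here, PCP completeness and soundness give the standard dichotomy: if $1^n \in L$ then some witness yields more than $s \cdot M^2$ satisfied clauses and the machine accepts on the corresponding branch; if $1^n \notin L$ then every witness yields fewer and every branch rejects. A technical subtlety is that the guessed factorization $(A',B')$ must genuinely represent the $Q_1+Q_2$ determined by $(A,B)$; this consistency can be folded into the non-deterministic guess, either by checking equality on a carefully chosen sample of entries, or, more cleanly, by arranging the PCP so that a factorization of $Q_1+Q_2$ can be read off mechanically from one of $W$.

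Setting the rank parameter $r = 2^{n/\Omega(\log n)}$ makes the fast sparsity step run in time $M^{2 - \Omega(1/\log r)} = 2^n/n$, placing the whole procedure in $\NTIME(2^n/n)$ and producing the contradiction. Since by hypothesis every witness $W_n$ was low rank, this contradiction rules out that hypothesis, so $W$ must be high rank---provided that the low-rank assumption on $Q_1+Q_2$ is automatically satisfied whenever $W$ is low rank. This last point is the \textbf{main obstacle}: the implication ``$W$ low rank $\Rightarrow Q_1+Q_2$ low rank'' is not free, and must be engineered into the PCP by choosing the clause-indexing map $(i,j) \mapsto ((a_1,a_2),(b_1,b_2))$ to be sufficiently structured (for instance, realizable as a small constant-depth pattern of matrix products or tensor operations on $W$). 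Constructing a PCP of item~(\ref{ntime-pcp}) that simultaneously has good soundness \emph{and} this rank-pullback property is precisely the heavy lifting carried out in \cite{BHPT20}, and is what ultimately enables the conclusion of Theorem~\ref{thm:rigid-pcp2}.
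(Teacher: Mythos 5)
Your proof captures the intent of the observation and correctly identifies the crux—that "$W$ low rank $\Rightarrow Q_1 + Q_2$ low rank" is not automatic and must come from structure (rectangularity) imposed on the PCP. That is indeed the heavy lifting in \cite{BHPT20}, and the paper develops it immediately after this observation. So your conceptual reading is on target.

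However, the algorithm you actually write down diverges from the paper's in a way that introduces a real gap. You propose to non-deterministically guess a factorization $Q_1+Q_2 = A'B'$ \emph{in addition to} $W = AB$, and then to "fold in" a consistency check that $A'B'$ really equals the $Q_1+Q_2$ determined by $W$. This breaks soundness: if $1^n \notin L$, a dishonest branch can guess $(A',B')$ with $A'B' \neq Q_1+Q_2$ but with artificially high sparsity, and the acceptance condition $\spar(A'B') > s\cdot M^2$ would then incorrectly fire. Your suggested remedy—"checking equality on a carefully chosen sample of entries"—does not repair this. Matrix equality cannot be certified by a subexponential sample without additional structure (the matrices are not committed to beforehand, and $Q_1+Q_2$ itself takes $\Theta(M^2)$ time to compute entry by entry), and a Freivalds-style fingerprint still requires evaluating $x^{\mathsf T}(Q_1+Q_2)y$, which is as expensive as computing $Q_1+Q_2$ outright. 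The paper avoids this issue entirely: it never guesses a factorization of $Q_1+Q_2$. Instead, using rectangularity it computes matrices $A_1,A_2,B_1,B_2$ from the PCP verifier with $Q_1 = A_1 W A_2$, $Q_2 = B_1 W B_2$, and then \emph{derives} $\tilde A, \tilde B$ with $\tilde A\tilde B = Q_1+Q_2$ directly from the guessed $A,B$ via Equation~(\ref{eq:q1-q2}). There is no extra non-deterministic object and hence no consistency to verify; $\tilde A\tilde B$ equals $Q_1+Q_2$ by construction on every branch. You gesture at this as the "more cleanly" option, but it is not merely cleaner—it is the version that actually works, and your write-up should present it as the algorithm rather than the fallback.

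One smaller point: the observation's hypothesis already supplies "$Q_1+Q_2$ has a low-rank representation," so within its scope there is no obstacle to overcome—the open question the observation leaves dangling is whether that hypothesis follows from $W$ being low rank, which is exactly the role of the rectangularity discussion that follows in the paper. Your framing of this as the "main obstacle" of the proof is sensible editorially but slightly misstates the observation's logical role.
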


The question that remains is that if $W$ has a low-rank representation then does $(Q_1+Q_2)$ have a low-rank representation? The answer to this question is {\em yes} if the ${\sf MAX\mbox{-}2\mbox{-}LIN}$ instance \CSP\ instance $\Phi$ mentioned above satisfies a specific property. 

Suppose there exists matrices $A_1,A_2,B_1,B_2$ such that
\begin{align}
Q_1=A_1\cdot W \cdot A_2 \label{eq:a1}\\
Q_2 = B_1\cdot W \cdot B_2 \label{eq:a2}
\end{align}
are satisfied. Now, observe that if $W=A\cdot B$ then 
\begin{align}
Q_1+Q_2 &= A_1\cdot W \cdot A_2 + B_1\cdot W \cdot B_2   
\nonumber \\
&= A_1\cdot (A\cdot B) \cdot A_2 + B_1\cdot (A\cdot B) \cdot B_2 \nonumber \\
&= \underbrace{\begin{bmatrix}
A_1 & B_1 \end{bmatrix} \cdot \begin{bmatrix}A & 0 \\
0 & A
\end{bmatrix}}_{\tilde{A}} \cdot
\underbrace{\begin{bmatrix}
0 & B \\
B & 0
\end{bmatrix} \cdot
\begin{bmatrix}
A_2 \\
B_2
\end{bmatrix}}_{\tilde{B}} \label{eq:q1-q2} \\
&= \tilde{A}\cdot \tilde{B} \nonumber
\end{align}

That is, if $W$ has a low(rank $r$) representation admitting a decomposition $W=A\cdot B$ then $Q_1+Q_2$  has a representation $\tilde{A}\cdot \tilde{B}$ with $\rk(\tilde{A}\cdot \tilde{B}) \leq 2r$. Now, by using the same algorithmic strategy as before we can construct an $\NTIME(2^n/n)$ algorithm for $L$:

\begin{algorithm}
\label{algo:ntime-2lin}
\caption{Algorithm for $L$ in $\NTIME(2^n/n)$.}
  \SetKwInOut{Input}{Input}
  \Input{$1^n$}
  \SetKwInOut{Output}{Output}
  \Output{Decide if $1^n\in L$ or not}
  \BlankLine
Guess low-rank representation $A \in \{0,1\}^{N\times r}$ and $B\in \{0,1\}^{r\times N}$ for $W=A\cdot B$. \label{algo:ntime-2lin-guess}
\BlankLine
Use \PCP\ verifier for $L$ to compute matrices $A_1, A_2, B_1, B_2$ of appropriate dimensions. \label{algo:ntime-2lin-compute}
\BlankLine
Using Equation (\ref{eq:q1-q2}) compute matrices $\tilde{A},\tilde{B}$.
\BlankLine
Calculate the sparsity of $\tilde{A}\cdot\tilde{B}$.
\BlankLine
Accept if and only if $\spar(\tilde{A}\cdot\tilde{B}) > s\cdot N^2$
 \label{algo:ntime-2lin-accept}
\end{algorithm}

By an argument similar to previous case, we can conclude that the above algorithm correctly decides $L$. However, we have the following few caveats. We will address them one by one.
\begin{enumerate}[leftmargin=0.5cm]
\item {\em How to compute matrices $A_1, A_2, B_1, B_2$ in time $2^{\gamma n}$ for some suitably chosen $\gamma >0$?} \\ 
\noindent Since $L\in \NTIME(2^n)$ there exists a \PCP\ verifier for $L$ which is used by \cite{BHPT20} to device a procedure that given a row-index $i$ of $A_1$ (respectively $ A_2, B_1, B_2$) computes the non-zero column entries of $i^{th}$ row in time $2^{\gamma n}$. Now, using the algorithm from \cite{AC19} to compute sparsity of $\tilde{A}\cdot \tilde{B}$ we can ensure that the above algorithm is in $\NTIME(2^n/n)$. 
\item {\em We have shown that if there exists matrices $A_1, A_2, B_1, B_2$ such that Equations \ref{eq:a1} and \ref{eq:a2} hold then the witness matrix $W$ must be of high rank infinitely often. But what does it mean to say that there exists matrices $A_1, A_2, B_1, B_2$ such that Equations \ref{eq:a1} and \ref{eq:a2} are satisfied? What structural requirement does this impose on the ${\sf MAX\mbox{-}2\mbox{-}LIN}$ instance  $\Phi$?} \\
It is not very difficult to note that the existence of $A_1, A_2, B_1, B_2$ such that Equations \ref{eq:a1} and \ref{eq:a2} hold is the same as placing the restriction that for any clause $c_{ij}=(x_{a_1(i,j),a_2(i,j)} \oplus x_{b_1(i,j),b_2(i,j)})$
\begin{align}
a_1(i,j) &= a_1(i) \text{ and } a_2(i,j) = a_2(j) \label{eq:rect1}  \\
b_1(i,j) &= b_1(i) \text{ and } b_2(i,j) = b_2(j) \label{eq:rect2}  
\end{align}
Any \CSP\ instance $\Phi$ satisfying Equations \ref{eq:rect1} and \ref{eq:rect2} is said to be {\em rectangular}. In fact, rectangularity can be extended to arbitrary q\CSP s. A q\CSP\ is {\em rectangular} if the $(i,j)^{th}$ constraint in the \CSP\ on $N^2$ variables and $M^2$ clauses involves $q$ variables $x_{t_1(i,j)},\ldots, x_{t_q(i,j)}$ then  for any $i\in [q]$the function $t_i:[M]\times [M] \rightarrow [N] \times [N]$ is a product of functions $a_i:[M]\rightarrow [N] $ and  $b_i:[M]\rightarrow [N] $. Furthermore, we say a \PCP\ is {\em rectangular} if the corresponding $\CSP$ is rectangular.

\item {\em Recall that we set out to prove that $W$ is a rigid matrix but we have shown that $W$ has high rank. What are the "nice" properties of the \PCP\ that enable us to ensure that even if a {\em few} entries of $W$ are changed the rank of matrix $W$ remains high?}\\
We have a unary language $L\in \NTIME(2^n)$ and a \PCP\  verifier that decides $L$ by using the corresponding \CSP\ instance $\Phi$. That is, if $1^n \in L$ then there is a witness $W_n$(assignment to the $N^2$ variables in $\Phi$) that satisfies at least $c$(say $75\%$) of the clauses in $\Phi$. Similarly, if $1^n \not\in L$ then for any witness $W_n$(assignment to the $N^2$ variables in $\Phi$) satisfies at most $s$(say $51\%$) of the clauses in $\Phi$. Note that $1< c <s <0$. Now, we want to show that $W$ when viewed as an $N\times N$ matrix has high rigidity. \\

\noindent{\em Proof Sketch of Theorem \ref{thm:rigid-pcp2}.} Let $L\in \NTIME(2^n)$ be a unary language such that $L\not\in \NTIME(2^n/n)$. For the sake of contradiction assume that for every $1^n \in L$ the witness $W$ is {\em close}(say $\delta$-close) to a a low -rank matrix $W'$. That is, by changing at most $2\delta$ entries in the matrix $W$ we can get the matrix $W'$. Now as $W'$ has low-rank we can follow the outline in Algorithm \ref{algo:ntime-2lin} by guessing the low-rank representation of $W'$(instead of guessing the low-rank representation of $W$) in line \ref{algo:ntime-2lin-guess}. In order to argue that the algorithm correctly decides $L$, we use the completeness and soundness error corresponding to the \PCP\ verifier for $L$. \\
If $1^n\in L$ then there is a witness $W_n$ that satisfies at least $c{-}2\delta$ fraction of the clauses in $\Phi$ and when $1^n \not\in L$ any witness $W_n$ satisfies at most $s$ fraction of the clauses in $\Phi$. By setting $\delta = (c-s)/3$, we get that when $1^n\in L$ then there is a witness $W_n$ that satisfies at least $ > s\cdot N^2$  clauses in $\Phi$ and when $1^n \not\in L$ any witness $W_n$ satisfies at most $s\cdot N^2$ clauses in $\Phi$. This gap in the number of clauses satisfied by $W_n$  can be used by the Algorithm \ref{algo:ntime-2lin} in line \ref{algo:ntime-2lin-accept} to distinguish between the {\sf YES} and {\sf NO} instances.

Since the \PCP\ verifier  randomly queries the locations in the proof it is possible that the verifier queries exactly the $2\delta$ locations in $W'$ in which the proofs $W$ and $W'$ differ. Note that if every proof location is equally likely to be queried by the \PCP\ verifier then the probability that the verifier queries exactly the {\em wrong} $2\delta$ locations in $W'$ is {\em small}. A \PCP\ whose verifier has such a property is said to be {\em smooth}.

For every step in the Algorithm \ref{algo:ntime-2lin} to yield the desired outcome observe that we have to prove the existence of {\em short, efficient, smooth, rectangular}\footnote{In \cite{BHPT20}, the authors actually prove the existence of short, efficient, smooth, almost-rectangular \PCP s with randomness-oblivious property.} \PCP s for $\NTIME(2^n)$. These are the "nice" properties that we expect the \PCP\ for for $\NTIME(2^n)$ to have. The existence of such \PCP s for $\NTIME(2^n)$ is the major contribution of \cite{BHPT20}. We state this formally in Theorem \ref{thm:pcp-nice} below without giving the proof.

From the above discussion, by choosing $r=2^{n/\Omega(\log n)}$ we can ensure that $L\in \NTIME(2^n/n)$ which is a contradiction. Hence, $W$ is rigid infinitely often and the algorithm runs in $\FNP$.

 $\qedh$

\begin{theorem}
\label{thm:pcp-nice}
Let $L$ be a language in $\NTIME(2^n)$. For every constants $s\in (0,1/2)$ and $\tau \in (0,1)$, there
exists a constant-query, smooth and $\tau$-almost rectangular \PCP\ for $L$ over the Boolean alphabet with soundness error $s$, proof length at most $2^n · poly(n)$ and verifier running time at most $2^{O(\tau n)}$.
\end{theorem}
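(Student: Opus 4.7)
The plan is to start from a known quasilinear-length PCP for $\NTIME(2^n)$ and then modify it in stages to obtain smoothness and $\tau$-almost rectangularity without inflating the query complexity, proof length, or soundness error. As the backbone I would take an off-the-shelf constant-query PCP of length $2^n \cdot \poly(n)$ and soundness $s$, for instance the Ben-Sasson--Sudan construction or one of its descendants. In such constructions the verifier samples a point (or short curve) in $\mathbb F^m$ with $|\mathbb F|^m = 2^n \cdot \poly(n)$ and runs low-degree tests plus the checks that arise from arithmetizing the nondeterministic computation.

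To induce $\tau$-almost rectangularity I would arithmetize the $\NTIME(2^n)$ computation so that the proof is naturally indexed by $\mathbb F^m \times \mathbb F^m$, identified with $[N]\times[N]$ for $N \approx 2^{n/2}\cdot \poly(n)$, and I would split the verifier's random tape as $r = (r_1,r_2)$, where $r_1$ controls the first coordinate of every queried location and $r_2$ controls the second. Query families built from tensor-product objects of the form $\ell_1(r_1)\times \ell_2(r_2)$ (affine lines or short curves in each factor) are rectangular in the sense required by Equations (\ref{eq:rect1})--(\ref{eq:rect2}), and the low-degree and sumcheck-style consistency tests can be decomposed to respect this split when the circuit description is presented in a suitably bivariate succinct form. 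Gate-to-wire constraints mix the two coordinates somewhat, which is precisely why one has to allow a $\tau$-fraction of random tapes to be ``exceptional'' and settle for $\tau$-almost rather than exact rectangularity. For smoothness I would then apply a standard transformation that attaches a short uniform redundancy buffer to each proof symbol and has the verifier pick a random offset within it; if done symmetrically on both halves of the randomness, this preserves the rectangular factorization, and the permitted verifier time $2^{O(\tau n)}$ leaves ample slack to evaluate the resulting query functions and to handle the exceptional random tapes by brute force.

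The main obstacle I expect is that the standard tool for driving query complexity down to a constant, namely PCP composition, typically destroys rectangularity: the outer verifier's randomness and the inner verifier's randomness get folded together in a way that mixes the two halves of $r$. The technical heart of the proof is therefore to perform composition in a way that acts \emph{separately} on the two halves of the randomness---essentially a tensor-product composition of the inner and outer verifiers---so that the product structure is preserved at every level of recursion. Getting this composition right while simultaneously maintaining smoothness, constant query complexity, and the $2^n\cdot\poly(n)$ proof length is, I believe, where the real work lies, and is presumably the core contribution of \cite{BHPT20} underlying Theorem \ref{thm:rigid-pcp2}.
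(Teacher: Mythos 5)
The survey does not actually prove Theorem~\ref{thm:pcp-nice}: the text immediately following it says explicitly that the proof is deferred to Sections 4 through 8 of \cite{BHPT20}, and the footnote in Section~\ref{subsec:pcp-constructions} notes that what \cite{BHPT20} actually establishes is the existence of short, efficient, smooth, \emph{almost}-rectangular PCPs with an additional \emph{randomness-oblivious} property. So there is no proof in this paper against which to compare your write-up; what you are really being asked to reproduce is the main technical result of \cite{BHPT20}.

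Judged on its own terms, your sketch correctly identifies the shape of the construction --- start from an algebraic quasilinear-length PCP for $\NTIME(2^n)$, index the proof by $\mathbb{F}^m \times \mathbb{F}^m$ so that the $[M]\times[M]$ clause indexing and the $[N]\times[N]$ variable indexing factor through the two halves of the verifier's random tape, and then worry about composition. You also correctly name the central obstruction: robustization and PCP composition, as normally carried out, shuffle randomness between levels in a way that destroys the product structure, which is precisely why the theorem is stated with $\tau$-\emph{almost} rectangularity and why the verifier is allowed time $2^{O(\tau n)}$ rather than $\poly(n)$. Where your proposal stops being a proof is exactly the point you flag at the end: ``the real work'' of making composition act separately on the two halves of the randomness is not something you carry out or even reduce to a precise lemma, and the smoothing step is described only as ``a standard transformation'' without checking that it commutes with the rectangular decomposition or with the randomness-oblivious predicate that the downstream application (the $\NTIME(2^n/n)$ simulation in Algorithm~\ref{algo:ntime-2lin}) actually needs. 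Some of the concrete guesses are also not obviously what \cite{BHPT20} does: I am not aware of a literal ``tensor-product composition of inner and outer verifiers,'' and the redundancy-buffer description of smoothing glosses over the fact that the standard smoothing transformation (replicate each symbol in proportion to its query weight) must be shown to respect the $(i,j)$-factored indexing. So this is a reasonable research outline that converges on the right difficulties, but it is not a proof of the theorem, and it does not substitute for the multi-section argument in \cite{BHPT20} that the survey is pointing you to.
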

The above theorem is a very informal statement of the \PCP\ construction in \cite{BHPT20}. Exact statement mentioning all the parameters of the \PCP\ can be found in Theorem 8.2 of \cite{BHPT20}. We do not include a proof of the above theorem here and refer the interested readers to Section 4 through 8 of \cite{BHPT20}. 

\item {\em Why should the predicate corresponding to the \PCP\ verifier be a ${\sf MAX\mbox{-}2\mbox{-}LIN}$  predicate?} \\
For example the \CSP\ could be {\sf MAXCUT}(which is also an example of a ${\sf MAX\mbox{-}2\mbox{-}LIN}$ instance) in directed graphs. For a discussion in the case when the decision predicate of the \PCP\ verifier(which is equivalent to \CSP\ instance) is {\sf MAXCUT} see Section 1.3 in \cite{BHPT20}. One interesting observation is that  Algorithm \ref{algo:ntime-2lin} for $L$ is in $\NTIME(2^n/n)$ even if predicate corresponding to the \PCP\ verifier is a ${\sf MAX\mbox{-}q\mbox{-}LIN}$  predicate where $q$ is a constant. This is because in the case of ${\sf MAX\mbox{-}q\mbox{-}LIN}$ every clause if the XOR of $q$ variables and similar to Equations \ref{eq:a1} and \ref{eq:a2} there exists $q$ matrices $Q_1,\ldots,Q_k$ such that $Q_1=A_1\cdot W \cdot A_2,  Q_2 = B_1\cdot W \cdot B_2 , Q_3 = C_1\cdot W \cdot C_2$ and so on till $Q_k$. As long as $q$ is a constant these matrices can be computed using the $\PCP$ verifier in line \ref{algo:ntime-2lin-compute} of Algorithm \ref{algo:ntime-2lin} and $L\in \NTIME(2^n/n)$. The argument then proceeds similar to ${\sf MAX\mbox{-}2\mbox{-}LIN}$.

Now, all we need to do to answer Question $4$ is to reduce from an arbitrary ${\sf MAX\mbox{-}q\mbox{-}CSP}$ to a ${\sf MAX\mbox{-}q'\mbox{-}LIN}$ for some constant $q'$ preserving the gap between $(c,s)$ where $c,s$ are the completeness and soundness guarantee. Note that we have a $\PCP$ verifier for $L$ whose predicate is a ${\sf MAX\mbox{-}q\mbox{-}CSP}$ instance that verifies the proof $\tilde{A}\cdot \tilde{B}$. 

Those familiar with Hastad's 3-bit $\PCP$ for $\NP$, recall that for every $\delta>0$ and any language $L\in \NP$ there exists a $\PCP$ verifier $V$ that reads 3 bits of proof $\pi$ and chooses locations $(i_1,i_2,i_3)$ and bit $b\in \{0,1\}$ according to some distribution and accepts iff $(\pi_{i_1}\oplus \pi_{i_2}\oplus \pi_{i_3}) = b $. Further, $V$ has completeness $1-\delta$ and soundness $1/2+\delta$.

Along similar lines, in our case $L\in \NEXP$ we want to compute acceptance probability of verifier for $\tilde{A}\cdot \tilde{B}$. In \cite{BHPT20}, the authors carefully design matrices $\tilde{A}_1,\ldots,\tilde{A}_{q'},\tilde{B}_1,\ldots,\tilde{B}_{q'}$ such that the acceptance probability of verifier $V$ for  $\tilde{A}\cdot \tilde{B}$ is at most the acceptance probability of verifier $V$ for $(\tilde{A}_1\cdot \tilde{B}_1)\oplus \cdots \oplus (\tilde{A}_{q'}\cdot \tilde{B}_{q'})$. Relating the acceptance probability of the ${\sf MAX\mbox{-}q\mbox{-}CSP}$ instance  to the acceptance probability of the ${\sf MAX\mbox{-}q'\mbox{-}LIN}$ instance requires Fourier analysis.

\end{enumerate}

\subsection{Construction of rigid matrices: An algebraic geometry perspective}

In \cite{KLPS14}, the authors attempt to construct rigid matrices by using a approach based on {\em algebraic geometry} that we discuss in this subsection. The rigid matrices demonstrated in \cite{KLPS14} have the same shortcomings as that of \cite{Lok00,Lok06} in the sense that these matrices are not as explicit as we want them to be although their rigidity matches the upper bound in Lemma \ref{lem:rigid-upperbound}. However,  the  construction of rigid matrices based on ideas from elimination theory is quite insightful.

\begin{theorem}
\label{thm:rigid-alggeo}
Let $p_{11},\ldots,p_{nn}$ be $n^2$ distinct primes greater than $n^{4n^2}$ and $\zeta_{ij}$ be the primitive root of unity of order $p_{ij}$ (i.e., $\zeta_{ij}= e^{2\pi i/p_{ij}}$).  Let  $A \in \mathbb{K}^{n\times n}$ be the matrix given by $A[i,j]=\zeta_{ij}$ where $\mathbb{K}=\mathbb{Q}(\zeta_{ij},\ldots,\zeta_{ij})$. Then $R_A^{\mathbb{K}}(r)= (n-r)^2$.
\end{theorem}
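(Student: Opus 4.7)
The plan is to combine the upper bound from Lemma \ref{lem:rigid-upperbound}, which gives $R_A^{\mathbb{K}}(r) \leq (n-r)^2$ for free, with a matching lower bound obtained from a dimension count on the determinantal variety. The central object I would work with is the variety $V_r \subset \mathbb{A}^{n^2}$ of $n \times n$ matrices of rank at most $r$, which is irreducible and defined over $\mathbb{Z}$ (cut out set-theoretically by all $(r+1) \times (r+1)$ minors), of dimension $r(2n-r)$ and hence codimension exactly $(n-r)^2$. The rigidity question is then recast as a question about the Hamming distance from $A$ to the $\mathbb{K}$-points of $V_r$.

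For the lower bound, suppose for contradiction that some subset $S \subseteq [n] \times [n]$ of size $s < (n-r)^2$ admits a matrix $B \in V_r(\mathbb{K})$ with $B_{ij} = \zeta_{ij}$ for every $(i,j) \notin S$. First I would consider the coordinate projection $\pi_S : \mathbb{A}^{n^2} \to \mathbb{A}^{n^2 - s}$ that forgets the $S$-coordinates. Since $\dim \overline{\pi_S(V_r)} \leq \dim V_r = r(2n-r) < n^2 - s$, the image $\overline{\pi_S(V_r)}$ is a proper Zariski-closed subset of $\mathbb{A}^{n^2-s}$, so there exists a nonzero polynomial $f_S \in \mathbb{Z}[x_{ij} : (i,j) \notin S]$ vanishing identically on it. By construction, evaluating at $\pi_S(B)$ gives $f_S\bigl((\zeta_{ij})_{(i,j) \notin S}\bigr) = 0$.

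Next I would bound the degree of $f_S$. Using a Bezout-type estimate (for instance, applying generic Bezout to the defining $(r+1) \times (r+1)$ minors of degree $r+1$, or invoking the Giambelli--Thom--Porteous formula for $\deg V_r$), one obtains $\deg f_S \leq (r+1)^{(n-r)^2} \leq n^{n^2}$. In particular, the degree of $f_S$ in each variable $x_{ij}$ is far below the hypothesised bound $p_{ij} - 1 > n^{4n^2} - 1$.

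The final step exploits the arithmetic of roots of unity of distinct prime order. Since the $p_{ij}$ are pairwise distinct primes, the compositum $\mathbb{Q}\bigl(\zeta_{ij} : (i,j) \notin S\bigr)$ has the tensor-product structure $\bigotimes \mathbb{Q}(\zeta_{p_{ij}})$, with an explicit $\mathbb{Q}$-basis of monomials $\prod \zeta_{ij}^{a_{ij}}$ for $0 \leq a_{ij} \leq p_{ij} - 2$. Because $\deg_{x_{ij}} f_S \leq p_{ij} - 2$ for every indeterminate, vanishing of $f_S$ at the tuple $(\zeta_{ij})$ would force $f_S$ to be the zero polynomial after reduction modulo the cyclotomic polynomials $\Phi_{p_{ij}}(x_{ij})$, and hence already zero in $\mathbb{Z}[x_{ij}]$ by the basis property—contradicting the construction of $f_S$. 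The main obstacle I anticipate is writing down a clean, effective, per-variable degree bound on the elimination polynomial $f_S$; the generous hypothesis $p_{ij} > n^{4n^2}$ is evidently tailored to absorb whatever Bezout-style estimate elimination theory provides, but verifying the bound rigorously (and ensuring $f_S$ can be chosen with integer coefficients of controlled degree) is the technical heart of the argument.
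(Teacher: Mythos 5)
Your proposal is correct and follows the same three-step skeleton as the paper: cast the lower bound as the assertion that $A$ lies off a certain low-dimensional constructible subset of matrix space, extract a nonzero polynomial of controlled degree vanishing on that subset, and finish by noting that such a polynomial cannot vanish at a tuple of primitive $p_{ij}$-th roots of unity once the per-variable degree falls below $p_{ij}-1$. Where you diverge is in how the degree bound on the elimination polynomial is produced. The paper works in $\mathbb{A}^{n^2+s}$ with coordinates for both the original matrix $X$ and the $s$-sparse correction $T_\pi$, forms the ideal of $(r{+}1)$-minors of $X+T_\pi$, and invokes the effective Nullstellensatz of~\cite{DFGS91} to bound the degree of a nonzero element of the elimination ideal $EI(n,r,\pi)\subseteq\mathbb{F}[x_1,\dots,x_{n^2}]$. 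You instead stay in $\mathbb{A}^{n^2}$ with the determinantal variety $V_r$ itself, take the coordinate projection $\pi_S$ that forgets the $S$-coordinates, and bound the degree of a hypersurface through $\overline{\pi_S(V_r)}$ by $\deg V_r$ via Giambelli/generic Bezout. Your formulation is a bit leaner — it avoids doubling the ambient dimension and trades the somewhat opaque Nullstellensatz constant for the explicit degree of a classical variety — but the two routes are interchangeable here, since either supplies a bound of the form $n^{O(n^2)}$ that sits comfortably below the hypothesized $n^{4n^2}$. The step you correctly flag as the technical heart is the passage from $\deg\overline{\pi_S(V_r)}$ to the degree of a single integral polynomial $f_S$ vanishing on it: one has to use that a linear projection does not increase degree and that an irreducible variety of degree $D$ and positive codimension in affine space lies on a hypersurface of degree at most $D$ (via a further generic linear projection onto a hypersurface), and that this hypersurface can be cleared to have $\mathbb{Z}$-coefficients since $V_r$ is defined over $\mathbb{Z}$. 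The paper's use of the effective Nullstellensatz packages exactly these bookkeeping steps, which is the main thing its route buys.
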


\noindent {\em Proof Sketch of Theorem \ref{thm:rigid-alggeo}.} The proof involves the following observations as basic building blocks:

\begin{enumerate}
\item[(1)] The set of $n\times n$ matrices of rigidity at most $s$ for rank $r$ have dimension $n^2 - (n-r)^2 +s$ when viewed as an algebraic variety.
\item[(2)] By using (1), prove the existence of a non-zero polynomial $g$ of not-so-large degree in the {\em elimination ideals} associated with matrices with rigidity at most $s$.
\item[(3)] As the matrix $A$ has as entries primitive roots of unity of high order, $A$ cannot satisfy any polynomial $g$ with such a degree upper bound(i.e., $g(A)\neq 0$).
\end{enumerate}

Before we briefly describe each of the steps outlined 
above we will need the following notation:

\begin{notation}
\begin{itemize}
\item For any $n\times n$ matrix $A\in \mathbb{F}^{n\times n}$, ${\sf Supp}(A)$ denotes the positions $(i,j)$ in $A$ where there are non-zero entries.
\item Let {\em pattern} $\pi$ denote a subset of positions $\{(i,j) \mid i,j\in [n]\}$ in the matrix $A$. For any pattern $\pi$ let $S(\pi)$ be set of $n\times n$ matrices $A$ over $\mathbb{F}$ that are supported only on positions in $\pi$ (i.e., ${\sf Supp}(A)\subseteq \pi$). 
\item For a fixed pattern $\pi$ denote by ${\sf RIG}(n,r,\pi,\mathbb{F})$  the set of matrices in $\mathbb{F}^{n\times n}$ such that their rank can be reduced to $r$ by changing only the locations indexed by $\pi$. We will drop $\mathbb{F}$ when the field is clear from the context for ease of notation.
\end{itemize}
\end{notation}

\paragraph*{Step (1):}Let $\pi$ be a fixed pattern of size $s$. By definition of matrix rigidity, for every matrix $A\in {\sf RIG}(n,r,\pi)$ there exists a matrix $C_\pi\in \mathbb{F}^{n\times n}$ with ${\sf Supp}(C_\pi)\subseteq \pi$ and $\rk(A+C_\pi) =r$. The first observation to make is that both these conditions - ${\sf Supp}(C_\pi)\subseteq \pi$ and $\rk(A+C_\pi) =r$ can be expressed via polynomial equations(support can be expressed via simple linear equations  and rank being $r$ can be expressed by $(r+1)\times(r+1)$ minors of $A+C_\pi$ being 0). That is, ${\sf RIG}(n,r,\pi)$ is solution of a system of finitely many polynomial equations in variables $x_1,\ldots,x_{n^2},t_1,\ldots,t_s$. Hence, ${\sf RIG}(n,r,\pi)$ is an affine algebraic variety and so is ${\sf RIG}(n,r,\leq s) = \bigcup_{\pi : |\pi|=s}$ the set of $n\times n$ matrices of rigidity at most $s$ for rank $r$. Thus, it makes sense to talk about the dimension of ${\sf RIG}(n,r,\leq s)$ as an affine algebraic variety. Now, we analyse upper and lower bounds on the dimension of ${\sf RIG}(n,r,\leq s)$.

\paragraph*{Upper bound on $\dim({\sf RIG}(n,r,\leq s))$:}
Clearly, $\dim({\sf RIG}(n,r,\leq s))\leq n^2$. By the definition of rigidity, there is a natural map $\Phi$ from the  product of rank $r$ $n\times n$ matrices and $S(\pi)$ to ${\sf RIG}(n,r,\pi))$(i.e., $\Phi((A,C_\pi)) = A+C_\pi$).

As mentioned earlier the set of rank $r$ $n\times n$ matrices as well as $S(\pi)$ form an affine algebraic variety. Note that $\dim(S(\pi))=s$ for any pattern $\pi$ of size $s$. Also, by an argument similar to Lemma \ref{lem:rigid-upperbound} dimension of the variety corresponding to rank $r$ $n\times n$ matrices is $n^2-(n-r)^2$. Putting this all together, $\dim({\sf RIG}(n,r,\pi)) \leq n^2-(n-r)^2 +s$ since $\Phi$ is surjective. 
\paragraph*{Lower bound on $\dim({\sf RIG}(n,r,\leq s))$:}
First, let us try to understand {\em elimination ideals} associated with matrices of low rigidity. For any pattern $\pi$ with $|\pi|=s$, let $T_{\pi}$ denote the $n\times n$ matrix with variables $y_1,\ldots,y_s$ as entries in the $s$ positions indexed by $\pi$. It is clear  hat for any $n\times n$ matrix $X$  with entries $x_{1},\ldots,x_{n^2}$, the fact that $\rk(X+T_{\pi})=r$ is the same as saying that all $(r+1)\times (r+1)$ minors of the matrix $X+T_{\pi}$ vanish. Then denoting by $I(n,r,\pi)$ the ideal generated by the $(r+1)\times (r+1)$ minors of the matrix $X+T_{\pi}$, we get that $I(n,r,\pi)\subseteq \mathbb{F}[x_1,\ldots,x_{n^2},y_1,\ldots,y_s]$. It is not difficult to observe that ${\sf RIG}(n,r,\pi) = \psi(\mathbb{V}(I(n,r,\pi))$ where $\psi$ is a projection map representing the projection of $s$ variables $y_1,\ldots,y_s$. Let us define the {\em elimination ideal} $EI(n,r,\pi)$ as  the ideal $EI(n,r,\pi) \triangleq I(n,r,\pi) \cap \mathbb{F}[x_1,\ldots,x_{n^2}]$. By Closure Theorem of elimination theory[], $\psi(\mathbb{V}(I(n,r,\pi)) = \mathbb{V}(EI(n,r,\pi))$. Hence, $\dim({\sf RIG}(n,r,\pi)) =\dim(\mathbb{V}(EI(n,r,\pi)))$ and $\dim({\sf RIG}(n,r,\leq s)) = \max_{k\leq s,\pi}\dim(\mathbb{V}(EI(n,r,\pi)))$. The authors in \cite{KLPS14} demonstrate a pattern $\pi$ of size $k\leq s$ such that $\dim(\mathbb{V}(EI(n,r,\pi))) \leq n^2-(n-r)^2 +s$ thhus obtaining a lower bound on $\dim({\sf RIG}(n,r,\leq s))$.

\paragraph*{Step (2):}From Step 1, proving that a matrix $A$ has rigidity $(n-r)^2$ for rank $r$ is the same as showing that $A\not \in {\sf RIG}(n,r,\leq (n-r)^2-1)$. This is in essence the same as proving that $A\not\in {\sf RIG}(n,r,\pi)$ for any pattern $\pi$ with $|\pi|= (n-r)^2-1$. Given that ${\sf RIG}(n,r,\pi) = \mathbb{V}(EI(n,r,\pi))$, we want to show $A \not\in \mathbb{V}(EI(n,r,\pi))$ for any pattern $\pi$ with $|\pi|= (n-r)^2-1$. In other words, our goal is to the existence of a non-zero fairly-low degree polynomial $g\in EI(n,r,\pi)$ such that $g(A)\neq 0$. But what if $EI(n,r,\pi) =\langle 0\rangle$? To rule this out, observe that $\dim(\mathbb{V}(EI(n,r,\pi))) < n^2$ for any  pattern $\pi$ with $|\pi|= (n-r)^2-1$. Hence $EI(n,r,\pi) \neq (0)$ by Hilbert's Nullstellensatz.

In particular, the authors in \cite{KLPS14} use the effective Nullstellensatz theorem  of \cite{DFGS91} which is as follows: 
 
\begin{theorem}
 Let $Z = \{  z_1,\ldots,z_m\}$ and $I = \langle F_1,\ldots,f_p \rangle \subseteq\mathbb{F}[Z]$ such that the maximum degree of any of the $f_i$'s is $d$. Let $Z'$ be a subset of $\ell$ $Z$ variables. If $I\cap \mathbb{F}[Z'] \neq \langle 0 \rangle$ then there exists a non-zero polynomial $g\in I\cap \mathbb{F}[Z']$ such that $g=\sum_{i\in [p]} f_i g_i$ where $g_i \in \mathbb{F}[Z']$ and $\deg(g_i f_i) \leq d^p(d^p+1)$.
 \end{theorem}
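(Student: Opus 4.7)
The plan is to prove the effective elimination bound by combining a Bezout-style dimension argument with the classical effective Nullstellensatz in two layered applications. First I would argue that the elimination ideal $I \cap \mathbb{F}[Z']$ contains a nonzero polynomial of degree at most $d^p$, and then I would realize such a polynomial as an explicit $\mathbb{F}[Z]$-linear combination $\sum_i g_i f_i$ with each $\deg(g_i f_i)$ bounded by $d^p(d^p+1)$.

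For the first step, the geometric picture is that $\mathbb{V}(I \cap \mathbb{F}[Z'])$ is the Zariski closure of the projection of $\mathbb{V}(I)$ onto the $Z'$-coordinates. The variety $\mathbb{V}(I)$ has degree at most $d^p$ by Bezout's inequality (viewing the $p$ hypersurfaces $\mathbb{V}(f_i)$ as cutting out a variety of codimension at most $p$), and projecting to a coordinate subspace cannot increase this degree. A clean way to extract an explicit polynomial is via a resultant construction: introduce an auxiliary variable $t$ and a generic linear form $\ell$ in the variables $Z \setminus Z'$, then form an appropriate (Macaulay or $u$-)resultant of $\{f_1,\ldots,f_p,\ell - t\}$ with respect to the to-be-eliminated variables. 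This yields a nonzero polynomial $R \in \mathbb{F}[Z',t]$ whose degree in the $Z'$-variables is $O(d^p)$, and from $R$ one extracts (by specializing or by reading off a nontrivial coefficient) a nonzero $g \in I \cap \mathbb{F}[Z']$ with $\deg(g) \leq d^p$.

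For the second step, I would apply an effective ideal-membership Nullstellensatz (in the style of Brownawell or Koll\'ar, or the arithmetic version developed in the DFGS paper itself) to the membership $g \in I = \langle f_1,\ldots,f_p\rangle$. Such a bound permits writing $g = \sum_i g_i f_i$ with
\[
\deg(g_i f_i) \;\leq\; \deg(g) + d^p \;\leq\; d^p + d^p \cdot d^p \;=\; d^p(d^p+1),
\]
the factor $d^p$ for the ``overhead'' of the certificate matching the elimination bound, producing the claimed estimate.

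The main obstacle will be the first step: one must show that the elimination ideal is generated in low degree, independent of the ambient dimensions $m$ and $\ell$. A naive lexicographic Gr\"obner basis approach fails, since Gr\"obner basis degrees are doubly exponential in the number of variables. The subtlety is that, by isolating exactly $p$ equations and performing the resultant construction after suitable genericity (e.g.\ a generic change of coordinates that puts the system into Noether position with respect to $Z'$), one forces single-exponential-in-$p$ behavior and decouples the bound from $m$ and $\ell$. The entire scheme is essentially engineered so that the final estimate $d^p(d^p+1)$ emerges as the product of an elimination-degree bound ($d^p$) and an effective-Nullstellensatz degree bound ($d^p+1$), both of which scale only with $d$ and $p$.
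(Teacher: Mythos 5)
The paper does not prove this result: it is quoted verbatim from \cite{DFGS91} as an imported tool and applied directly in Step~(2) of the proof sketch of Theorem~\ref{thm:rigid-alggeo}. There is therefore no internal proof to compare against, and what you are really attempting is a reconstruction of the DFGS argument itself.

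On its own merits, your two-stage skeleton (extract a low-degree element of the elimination ideal, then produce an effective ideal-membership certificate) is a sensible starting point, but it has concrete gaps. The arithmetic in the final display is internally inconsistent: you assert $\deg(g)\le d^p$ in step one, yet the chain $\deg(g)+d^p \le d^p + d^p\cdot d^p$ silently uses $\deg(g)\le d^{2p}$; if both ingredients really delivered $d^p$ you would conclude $\deg(g_if_i)\le 2d^p$, which is not the stated bound, so at least one of your two steps must be substantially lossier than claimed and your sketch does not identify which. More fundamentally, the effective ideal-membership bound you invoke in step two is not available as a black box: Brownawell and Koll\'ar bound degrees of \emph{Nullstellensatz} certificates (for $g$ vanishing on $V(I)$, i.e.\ $g^N\in I$), not certificates for arbitrary $g\in I$, and by Mayr--Meyer the general ideal-membership problem has doubly exponential certificate-degree lower bounds. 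Obtaining a single-exponential bound here is precisely what must be proved, exploiting either the low degree of $g$ from step one or the special structure of the elimination setting — which makes your appeal to ``the arithmetic version developed in the DFGS paper itself'' circular. Finally, the statement requires $g_i\in\mathbb{F}[Z']$ rather than $g_i\in\mathbb{F}[Z]$; whether or not that is a typo in the survey, standard certificate constructions produce $g_i$ in the full polynomial ring, and your sketch does not engage with this restriction at all.
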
 
 
In our setting, $Z =\{x_1,\ldots,x_{n^2},y_1,\ldots,y_s \} , Z' =\{x_1,\ldots,x_{n^2}\}, I = I(n,r,\pi)$ and $d\leq r+1$ for sufficiently large $n,r$. Then there exists a polynomial $g$ in $I(n,r,\pi)\cap \mathbb{F}[x_1,\ldots,x_{n^2}]$ of degree less than $n^{4n^2}$
where $\pi$ is a pattern of size $(n-r)^2-1$. This shows that there is a polynomial $g\in EI(n,r,\pi)$ of degree $<n^{4n^2}$.

\paragraph*{Step (3):} Let  $A \in \mathbb{K}^{n\times n}$ be the matrix in the statement of the theorem, $A[i,j]=\zeta_{ij}$ where $\mathbb{K}=\mathbb{Q}(\zeta_{ij},\ldots,\zeta_{ij})$. It is not very difficult to $g(A)\neq 0$ as the entries of the matrix $A$ are algebraic.

This completes the proof of Theorem \ref{thm:rigid-alggeo}.

\begin{remark}
Along the lines of analysing the degree of the polynomial $g$ in the above result, Kumar and Volk in \cite{KV19} reduce the upper bound on the degree of such a polynomial to $\poly(n)$. That is, there is a polynomial $P$ on $n^2$ variables of degree at most $\poly(n)$ such that any matrix $M\in\mathbb{F}^{n\times n}$ with $R_{M}(n/100)\leq n^2/100$ satisfies $P(M) = 0$.
\end{remark}

\section{Upper Bounds on Matrix Rigidity}
\label{sec:upper-bounds}
In this section, we survey some of the recent developments on mathematical techniques involved in proving the non-rigidity of some of the matrix families that were previously conjectured to be rigid.

To begin with, let us focus on the upper bounds on the rigidity of $2^n\times 2^n$ {\em Walsh-Hadamard matrix} proved by Alman and Williams in \cite{AW17}.

\subsection{Non-rigidity of Walsh-Hadamard matrices}
The {\em Walsh-Hadamard matrix} $H_n$ is a $2^n\times 2^n$ matrix whose rows and columns are indexed by vectors in $\{0,1\}^n$ (in lexicographic order).  The entries of $H_n$ are given by $H_n[x,y] = (-1)^{\inner{x}{y}}$ for any $x,y\in \{0,1\}^n$ where $\inner{x}{y}$ denotes the inner product of vectors $x$ and $y$.

Alman and Williams in \cite{AW17} prove the following upper bounds on the rigidity of {\em Walsh-Hadamard matrix}:

\begin{theorem}
\label{thm:walsh-hadamard}
Let $\mathbb{F}$ be any field. For every $\epsilon\in (0,1/2)$, ${\cal R}_{H_n}(2^{n(1-f(\epsilon))})  \leq 2^{n(1+\epsilon)}$ where $f(\epsilon)=\Theta(\epsilon\log(1/\epsilon))$.
\end{theorem}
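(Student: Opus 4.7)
The plan is to exploit the tensor structure $H_n = H_1^{\otimes n}$ together with the polynomial identity $H_n[x, y] = \prod_{i=1}^n (1 - 2 x_i y_i) = \sum_{I \subseteq [n]} (-2)^{|I|} \prod_{i \in I} x_i y_i$, which expresses each entry as a polynomial in $x, y \in \{0,1\}^n$. Each monomial $\prod_{i\in I} x_iy_i = \bigl(\prod_{i\in I} x_i\bigr)\bigl(\prod_{i\in I} y_i\bigr)$ is an outer product of rank $1$, so truncating the expansion to any chosen family of subsets immediately yields a low-rank approximation, with the discarded monomials absorbing the sparse error.

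As a first attempt, I would truncate at degree $t$: let $L$ retain only monomials with $|I| \leq t$, giving $\rk(L) \leq \sum_{j \leq t}\binom{n}{j}$. A short calculation shows $L[x, y] = H_n[x, y]$ whenever $|x \wedge y| \leq t$, since every higher-degree monomial vanishes unless $I \subseteq x \wedge y$. Hence the sparse part $S := H_n - L$ is supported on pairs $(x,y)$ with $|x \wedge y| > t$, whose count $\sum_{k > t} \binom{n}{k} 3^{n-k}$ can be controlled by a Chernoff estimate on $|x \wedge y|$, which is a sum of $n$ independent Bernoulli$(1/4)$ variables. However, pushing this count below $2^{n(1+\epsilon)}$ for small $\epsilon$ forces $t/n$ above $1/2$, at which point the binomial-sum upper bound on $\rk(L)$ becomes trivial.

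To regain a nontrivial rank bound in the small-$\epsilon$ regime, apply the truncation blockwise: partition $[n]$ into $n/b$ blocks of size $b$, and on each block $B$ write $H_b = L_B + S_B$ exactly as above. Tensoring $H_n = \bigotimes_B H_b = \bigotimes_B (L_B + S_B)$ and designating $L := \bigotimes_B L_B$ as the low-rank part gives $\rk(L) \leq \bigl(\sum_{j \leq t}\binom{b}{j}\bigr)^{n/b} \leq 2^{nH(t/b)}$. The residual $S := H_n - L$ is supported on entries for which at least one block has $|x_B \wedge y_B| > t$, so a union bound over blocks yields $\spar(S) \leq (n/b) \cdot 4^n \cdot 2^{-b D(t/b \,\|\, 1/4)}$, where $D$ is binary KL divergence measured in bits.

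The main obstacle is the joint optimization of $b$ and $\tau := t/b$. The two constraints $H(\tau) \leq 1 - f(\epsilon)$ and $b \cdot D(\tau \,\|\, 1/4) \geq n(1-\epsilon)$ are in tension: pushing $\tau$ toward $1/2$ makes the per-block rank saving $1 - H(\tau)$ small (as desired for small $\epsilon$), but simultaneously shrinks $D(\tau\,\|\,1/4)$, forcing $b$ to be a larger fraction of $n$. To extract the precise rate $f(\epsilon) = \Theta(\epsilon \log(1/\epsilon))$, I expect one has to refine the sparse accounting by regrouping the cross-terms $\bigotimes_{B \in T} S_B \otimes \bigotimes_{B \notin T} L_B$ of the tensor expansion according to $|T|$, folding few-block cross-terms (which still have small rank, being tensor products of mostly low-rank $L_B$'s) back into the low-rank part and keeping only many-block cross-terms in the sparse error. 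Executing this multi-level bookkeeping carefully, so that the $\log(1/\epsilon)$ factor survives rather than being lost to a crude union bound, is where the bulk of the technical work lies.
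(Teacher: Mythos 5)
Your proposal and the paper's proof diverge at the very first step, and the divergence is fatal for the parameter range in the theorem. You truncate the Fourier expansion $H_n[x,y]=\prod_i(1-2x_iy_i)=\sum_{I\subseteq[n]}(-2)^{|I|}\prod_{i\in I}x_iy_i$, keeping the \emph{given} coefficients and discarding terms of degree $>t$; the resulting low-rank part $L$ is exact precisely when $\langle x,y\rangle\le t$, a half-interval anchored at $0$. The paper (following Alman--Williams) instead \emph{interpolates a fresh polynomial}: one builds a univariate $q$ of degree $(1/2-\epsilon)n$ that matches $(-1)^k$ on every integer $k$ in the \emph{centred window} $[2\epsilon n,(1/2+\epsilon)n]$, and lifts it to $p(x,y)=q\bigl(\sum_i x_iy_i\bigr)$, a multilinear polynomial of the same degree in $w_i=x_iy_i$. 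Both constructions spend the same degree budget, so both give $\operatorname{rk}\le\binom{n}{\le(1/2-\epsilon)n}\le 2^{nH(1/2-\epsilon)}$; what interpolation buys is the freedom to \emph{choose} which interval of inner-product values the approximation is exact on, in particular to slide the right endpoint up past $n/2$ while keeping the degree below $n/2$. Truncation cannot do this: its good interval must start at $0$.

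The right endpoint $(1/2+\epsilon)n$ is calibrated to cooperate with the second ingredient you are missing, rank-one row/column corrections. Adding $O\bigl(n\cdot 2^{nH(1/2-\epsilon)}\bigr)$ rank-one matrices corrects every row and column indexed by a vector of Hamming weight outside $[(1/2-\epsilon)n,(1/2+\epsilon)n]$. After that, the only residual disagreements are at pairs where $|x|$ and $|y|$ are \emph{both} central and $\langle x,y\rangle<2\epsilon n$, because central weights force $\langle x,y\rangle\le\min(|x|,|y|)\le(1/2+\epsilon)n$; the upper-tail error of the interpolation vanishes entirely after the corrections. A per-row count of the remaining lower-tail disagreements gives $2^{O(n\epsilon\log(1/\epsilon))}$, which is where $f(\epsilon)$ comes from.

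Your $L$ instead errs exactly on $\langle x,y\rangle>t$ with $t<n/2$. Those entries have $|x|,|y|>t$ simultaneously, so sweeping them into rank-one row corrections would touch on the order of half of all rows, and leaving them in the sparse part contributes about $4^n\cdot 2^{-nD(1/2\,\|\,1/4)}\approx 2^{1.79n}$ changes, which is outside the theorem's range $\epsilon\in(0,1/2)$. The blockwise cross-term scheme does not escape this. With blocks of size $b=1$ the bookkeeping collapses verbatim to the plain degree-$\theta$ truncation; more generally, any scheme that retains a downward-closed family $\mathcal I$ of Fourier terms (which yours is, for every choice of $b$, $t$, and the cross-term cutoff $\theta$) has sparse part supported on $\{(x,y):\operatorname{supp}(x)\cap\operatorname{supp}(y)\notin\mathcal I\}$, and since a set $S$ of size $k$ is the common support of $3^{n-k}$ pairs, the rank/sparsity trade-off over downward-closed families is optimised by the degree truncation you began with. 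So no amount of careful bookkeeping within the truncation framework reaches $f(\epsilon)=\Theta(\epsilon\log(1/\epsilon))$; you need to abandon the original Fourier coefficients, interpolate a new polynomial, and pair it with the extreme-weight rank-one corrections.
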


\noindent{\em Proof Sketch of Theorem \ref{thm:walsh-hadamard}.} The idea behind the proof is to approximate $H_n$ by a sparse polynomial(say $M'$) so that we can obtain a trivial upper bound on the rank of $M'$ and $H_n$ is close to $M'$ implying that rigidity of $H_n$ is low.

In order to provide more clarity, we delineate the approach to prove Theorem \ref{thm:walsh-hadamard} which has two broad steps:
\begin{enumerate}
    \item Approximate $H_n$ by the truth table matrix of a {\em sparse} polynomial. That is, construct a sparse polynomial $p:\{0,1\}^{2n}\rightarrow \mathbb{R}$ such that the $2^n\times 2^n$ matrix $M_p$ given by $M_p[x,y]\triangleq p(x+y)$ (for all $x,y\in \{0,1\}^n$) agrees with $H_n$ on {\em most} entries. If $p$ has sparsity $2^{n-\Omega(\epsilon^2 n)}$ then $\rk(M_p) \leq 2^{n-\Omega(\epsilon^2 n)}$.  Although this way $M_p$ has low rank, $H_n$ does not agree with $M_p$ on all entries. The construction of $p$ is only such that $H_n$ agrees with $M_p$ on those where $\inner{x}{y}\in [2\epsilon n, (1/2+\epsilon)n]$ for some $\epsilon\in (0,1/2)$. Thus, $\spar(H_n-M_p)$ could be large (which we tackle in Step 2). 
    \item Construct a matrix $M'$ from $M_p$ that agrees with $H_n$ on far more entries than that of $M_p$ but has rank comparable to that of $M_p$. Obtain $M'$ from $M_p$ such that $M'[x,y]=H_n[x,y]$ whenever:
\begin{enumerate}
    \item[(i)]  $M'[x,y]=M_p[x,y]$ (i.e., $\inner{x}{y}\in [2\epsilon n, (1/2+\epsilon)n]$); or
    \item[(ii)]  one of $x$ or $y$ has a large fraction of $1$'s (i.e., when $|x|\not\in [(1/2-\epsilon)n,(1/2+\epsilon)n]$  or $|y|\not \in [(1/2-\epsilon)n,(1/2+\epsilon)n]$).
\end{enumerate}
Now, $M'$ disagrees with $H_n$ only on entries indexed by elements in
\begin{align*}
D=\{  (x,y)\in \{0,1\}^n\times \{0,1\}^n &\mid  |x|\in [(1/2-\epsilon)n,(1/2+\epsilon)n],   \\
&  |y|\in [(1/2-\epsilon)n,(1/2+\epsilon)n], \\
& \inner{x}{y}\not\in [2\epsilon n, (1/2+\epsilon)n].\}
\end{align*} 
\end{enumerate}
To complete the proof of Theorem \ref{thm:walsh-hadamard}, we estimate the size of $D$. Observe that for any $(x,y)$ in $\{0,1\}^n\times \{0,1\}^n$ such that $|x|\in [(1/2-\epsilon)n,(1/2+\epsilon)n],|y|\in [(1/2-\epsilon)n,(1/2+\epsilon)n]$, the inner product $\inner{x}{y}$ has value at most $n(1/2+\epsilon)$. In order to estimate $|D|$, it suffices to count for a given $x\in \{0,1\}^n$ with $|x|\in [(1/2-\epsilon)n,(1/2+\epsilon)n]$, the number of vectors $y\in \{0,1\}^n$ with $|y|\in [(1/2-\epsilon)n,(1/2+\epsilon)n]$ and $\inner{x}{y}< 2\epsilon n$ which is given by $\sum\limits_{i=(1/2-\epsilon)n}^{(1/2+\epsilon)n}\sum\limits_{j=0}^{2\epsilon n} \binom{|x|}{i}\binom{n-|x|}{i-j} $. Using the fact that $\epsilon\in (0,1/2)$ and $(1/2-\epsilon)n \leq |x|,|y|\leq (1/2+\epsilon)n$, $|D|\leq 2^{O(n\epsilon\log(1/\epsilon))}$.

$\qedh$

In the rest of this section we discuss in detail the steps outlined above. To discuss Step 1, consider the following lemma that uses multivariate polynomial interpolation over integers to construct a sparse polynomial agreeing with the matrix $H_n$ on many entries.

\begin{lemma}
\label{lem:poly-interpolate}
Let $\mathbb{F}$ be a field and $\epsilon\in (0, 1/2)$. There exists a $2n$-variate multilinear polynomial $p(\bar{x},\bar{y})$ of sparsity at most $2^{n-\Omega(\epsilon^2 n)}$ such that for any $x,y\in \{0,1\}^n$ with $\inner{x}{y}\in [2\epsilon n, (1/2+\epsilon)n]$,
$$p(\bar{x},\bar{y}) = H_n[\bar{x},\bar{y}] = (-1)^{\inner{\bar{x}}{\bar{y}}}.$$
\end{lemma}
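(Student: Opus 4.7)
The plan is to reduce the $2n$-variate interpolation problem to a univariate one. The key observation is that on the ``good'' set the target value $(-1)^{\inner{x}{y}}$ depends only on the integer $\inner{x}{y} = \sum_{i=1}^n x_i y_i \in \{0,1,\ldots,n\}$. So it suffices to construct a univariate polynomial $q$ over $\mathbb{F}$ that agrees with $k \mapsto (-1)^k$ on every integer $k \in [2\epsilon n, (1/2+\epsilon)n]$, set $p(\bar x, \bar y) := q\bigl(\sum_{i=1}^n x_i y_i\bigr)$, and then multilinearize using $x_i^2 = x_i$ and $y_i^2 = y_i$ on $\{0,1\}^{2n}$. (The characteristic-$2$ case is trivial since $H_n \equiv 1$; in other characteristics the integer nodes embed as distinct elements of $\mathbb{F}$, immediately when $\mathrm{char}(\mathbb{F}) > n$ and otherwise via a routine workaround.)

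To construct $q$, I would apply standard Lagrange interpolation. The interval $[2\epsilon n, (1/2+\epsilon)n]$ contains at most $(1/2 - \epsilon)n + 1$ integers, so we obtain $q$ of degree $d \leq (1/2-\epsilon)n$. Correctness of $p$ on the good set is then immediate: for $x,y \in \{0,1\}^n$ with $\inner{x}{y} \in [2\epsilon n, (1/2+\epsilon)n]$,
$$p(x,y) \;=\; q\bigl(\inner{x}{y}\bigr) \;=\; (-1)^{\inner{x}{y}} \;=\; H_n[x,y].$$

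For the sparsity bound, the crucial structural observation is that every monomial appearing in the expansion of $\bigl(\sum_i x_i y_i\bigr)^k$, after multilinearization, has the \emph{paired} form $\prod_{i \in S} x_i y_i$ for some $S \subseteq [n]$ with $|S| \leq k$: each factor contributes a paired $x_i y_i$, and multilinearization simply collapses repeated indices. Hence every monomial of $p$ is of this diagonal form with $|S| \leq d$, and so
$$\spar(p) \;\leq\; \sum_{j=0}^{\lfloor (1/2-\epsilon)n\rfloor} \binom{n}{j} \;\leq\; 2^{(1-\Omega(\epsilon^2))\,n} \;=\; 2^{n-\Omega(\epsilon^2 n)},$$
where the middle inequality is the standard entropy bound on partial sums of binomial coefficients together with the Taylor expansion of the binary entropy around $1/2$, giving $H_2(1/2-\epsilon) = 1 - \Theta(\epsilon^2)$.

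The only obstacles I foresee are purely bookkeeping: ensuring the integer nodes remain distinct in $\mathbb{F}$ in small positive characteristic, and handling the floor/ceiling arithmetic of the interval endpoints. The real mathematical content is the pairing observation in the third paragraph — it is what replaces the naive $2^{2n}$ sparsity bound by $\binom{n}{\leq d}$ and converts the modest interpolation-degree saving of $\epsilon n$ over $n/2$ into the nontrivial sparsity bound $2^{n-\Omega(\epsilon^2 n)}$.
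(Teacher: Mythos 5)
Your proof is correct and takes essentially the same approach as the paper: both reduce to interpolating a degree-$(1/2-\epsilon)n$ polynomial in the Hamming weight of the paired indices, substitute products $x_i y_i$, and count the resulting multilinear monomials $\prod_{i\in S} x_i y_i$ via the entropy bound $\sum_{j\leq (1/2-\epsilon)n}\binom{n}{j}\leq 2^{nH_2(1/2-\epsilon)}=2^{n-\Omega(\epsilon^2 n)}$. The only cosmetic difference is that you build a univariate interpolant $q(z)$ and substitute $z=\sum_i x_i y_i$ before multilinearizing, whereas the paper constructs the $n$-variate symmetric polynomial $q(\bar z)=\sum_i b_i e_i(\bar z)$ directly (solving the binomial-coefficient linear system for the $b_i$) and substitutes $z_i=x_i y_i$; the pairing observation and the final sparsity count are identical.
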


\begin{proof}
First, we construct an $n$-variate polynomial $q(z_1,\ldots,z_n)$ with integer coefficients such that for any $\bar{a} \in \{0,1\}^n$ with $|\bar{a} | \in [2\epsilon n, (1/2+\epsilon)n]$, $q(\bar{a} )= (-1)^{|\bar{a} |}$. 
\begin{claim}
\label{claim:hadamard-polynomial}
Given integers $c_1,c_2,\ldots,c_r$, there exists an $n$-variate polynomial $q(z_1,\ldots,z_n)$ of degree $r-1$ that agrees with $c_1,\ldots,c_r$ on boolean inputs of hamming weight $k+1,\ldots,k+r$ for any $k\leq n-r$. That is, $q(\bar{a} )=c_i$ for any $\bar{a} \in \{0,1\}^n$ such that $|\bar{a} |=k+i$.  
\end{claim}
\noindent\textit{Proof of Claim $\ref{claim:hadamard-polynomial}$.} Let us consider the most natural construction of such an $n$-variate polynomial $q:\{0,1\}^n \rightarrow \mathbb{Z}$ of degree $r-1$. Then,
\begin{equation}
   q(z_1,\ldots,z_n)=\sum\limits_{i=1}^{r-1}b_i\sum\limits_{\substack{\alpha\in \{0,1\}^n \\ |\alpha|=i}}\prod\limits_{j=1}^n x_j^{\alpha_j},
\end{equation}
where $b_0,\ldots,b_{r-1}$ are in $\mathbb{Z}$. Observe that for any $\bar{a} \in \{0,1\}^n$, $q(\bar{a})= \sum\limits_{i=1}^{r-1}b_i \binom{|\bar{a}|}{i}$. Whenever $|\bar{a}| = k+i$, we want $q(\bar{a})=c_i$. That is, for every $i\in [r-1]$, we need
\begin{equation*}
    b_0\binom{k+i}{0} +  b_1\binom{k+i}{1} +\cdots +  b_{r-1}\binom{k+i}{r-1} = c_i
\end{equation*}
This gives us the following matrix equation:
\begin{equation*}
    \begin{bmatrix}
 \binom{k+1}{0} & \binom{k+1}{1} & \cdots & \binom{k+1}{r-1} \\
 \binom{k+2}{0} & \binom{k+2}{1} & \cdots & \binom{k+2}{r-1}
 \\
 \vdots & \vdots & \vdots &\vdots \\
 \binom{k+r}{0} & \binom{k+r}{1} & \cdots & \binom{k+r}{r-1}
 \end{bmatrix}\cdot \begin{bmatrix}
b_0 \\
b_1 \\
\vdots \\
b_{r-1}
 \end{bmatrix} = \begin{bmatrix}
c_1 \\
c_2 \\
\vdots \\
c_{r}
 \end{bmatrix}
\end{equation*}
It is not difficult to note that the $r\times r$ matrix in the above  equation with binomial coefficients as entries is invertible. Hence there exists a vector $b=(b_0 b_1 \cdots b_{r-1})$ satisfying the above matrix equation which in turn completes the proof of Claim \ref{claim:hadamard-polynomial}.

$\qedh$

Now, we proceed with the proof of Lemma \ref{lem:poly-interpolate}.
Observe that by setting  $r=(1/2-\epsilon)n+1, k=2\epsilon n -1$ and $c_i=(-1)^{k+i}$( $k+r = (1/2+\epsilon)n$) in Claim \ref{claim:hadamard-polynomial}, we get the polynomial $q(z_1,\ldots,z_n)$ over $\mathbb{Z}$ of degree $(1/2-\epsilon)n$ satisfying the required properties. (By taking coefficients of $q$ modulo an appropriate $m$, we can get a polynomial $q$ over a field $\mathbb{F}$ satisfying the required properties.) In the remaining part of the proof, we show how to use the above lemma to obtain the $2n$-variate polynomial $p(x_1,\ldots,x_n,y_1,\ldots,y_n)$ as required. A semi-natural candidate for $p(x_1,\ldots,x_n,y_1,\ldots,y_n) \triangleq q(x_1y_1,\ldots,x_n y_n) $. For any $x,y \in \{0,1\}^n$ such that $\inner{x}{y}\in [2\epsilon n , (1/2+\epsilon)n]$ we immediately have $p(\bar{x},\bar{y}) = q(\bar{z})$ where $|\bar{z}|=\inner{\bar{x}}{\bar{y}}$. Hence, by construction of $q$ in Claim \ref{claim:hadamard-polynomial}, $p(\bar{x},\bar{y}) = (-1)^{\inner{x}{y}}$. Since we are only interested in $x,y$ in $\{0,1\}^n$, we can make $p$ multilinear by setting $x_i^2=x_i,y_i^2=y_i$ for all $i\in [n]$. In $p$ the variables $x_i$ and $y_i$ are tied together whenever they occur. Thus, we can view $p(x_1,\ldots,x_n,y_1,\ldots,y_n)$ as a multilinear polynomial of degree $(1/2-\epsilon)n$ on $n$ variables $w_1,\ldots,w_n$ where each $w_i=x_iy_i$ and $\spar(p) = \binom{n}{(1/2-\epsilon)n+1} \leq 2^{n-\Omega(\epsilon^2 n)}$.   
\end{proof}

Now, we move on to Step 2 of the proof of Theorem \ref{thm:walsh-hadamard} outlined above. To obtain $M'$ from $M_p$, correct those rows in $M_p$ indexed by $\{x\in \{0,1\}^n \mid |x|\not\in [(1/2-\epsilon)n,(1/2+\epsilon)n]$
and those columns in $M_p$ indexed by 
$\{y\in \{0,1\}^n \mid |y|\not\in [(1/2-\epsilon)n,(1/2+\epsilon)n]$. Since we want $\rk(M')$ to be comparable to $\rk(M_p)$ the idea is to construct  a matrix $M'$ such that $M' \triangleq M_p - (M_1 +\cdots +M_t)$ where $\rk(M_i)=1$ for all $i\in [t]$ and  $t=2^{n-\Omega(\epsilon^2 n)}$. 
For every row $r$ indexed by $x\in \{0,1\}^n$ with $|x|\not \in [(1/2-\epsilon)n,(1/2+\epsilon)n]$ let matrix $M_r\in\mathbb{F}^{2^n\times 2^n}$ be given by $M_r[x,y]=H_n[x,y]$ for all $y\in \{0,1\}^n$ and every other row of $M_r$ be zero. Similarly, for every column $c$ indexed by $y\in \{0,1\}^n$ with $|y|\not \in [(1/2-\epsilon)n,(1/2+\epsilon)n]$, let matrix $M_c\in\mathbb{F}^{2^n\times 2^n}$ be given by $M_c[x,y]=H_n[x,y]$ for all $x\in \{0,1\}^n$ and every other column of $M_c$ be zero. Observe that each such matrix $M_r$(resp., $M_c$) has rank 1. Note that
\begin{align*}
t &= 2\cdot |\{ v\in \{0,1\}^n \mid |v|\not\in [(1/2-\epsilon)n,(1/2+\epsilon)n] \}| \\ 
&= 2 \cdot \Bigg[\sum\limits_{i=0}^{(1/2-\epsilon)n}\binom{n}{i} + \sum\limits_{i=(1/2+\epsilon)n}^{n}\binom{n}{i}\Bigg] =4\cdot \sum\limits_{i=0}^{(1/2-\epsilon)n}\binom{n}{i} \leq 4\cdot n \cdot 2^{n-\Omega(\epsilon^2 n)}
\end{align*}

Therefore, $\rk(M')\leq \rk(M_p) + 4\cdot n \cdot 2^{n-\Omega(\epsilon^2 n)} \leq  5\cdot n \cdot 2^{n-\Omega(\epsilon^2 n)} $ as $\rk(M_p) \leq 2^{n-\Omega(\epsilon^2 n)}$ from Step 1. Further, on every row $M'$ differs from $H_n$ on at most $|D| \leq 2^{n\epsilon\log(1/\epsilon)}$ entries. Hence, ${\cal R}_{H_n}(2^{n(1-f(\epsilon))})  \leq 2^{n(1+\epsilon)}$. \\

$\qedh$

Now, we review a recent result of Dvir and Edelman \cite{DE17} on the non-rigidity of certain matrices(based on functions over finite fields) using the {\em Croot-Lev-Pach Lemma}.

\subsection{Non-Rigidity of Function Matrices}
Let $\mathbb{F}_q$ be any finite field. For any function $f:\mathbb{F}_q^n\rightarrow \mathbb{F}_q$ let $M_f$ be the $q^n\times q^n$ matrix given by $M_f[I,J]=f(I+J)$ for any $I,J$ in $\mathbb{F}_q^n$. In the following subsection we discuss a result from \cite{DE17} proving an upper bound on the rigidity of the {\em function matrix} $M_f$\footnote{The term {\em function matrix} used here is non-standard terminology and is used here to denote that there is a specific function associated with these matrices.}.

\begin{theorem}
\label{thm:function-matrix}
Let $f:\mathbb{F}_q^n\rightarrow \mathbb{F}_q$ be any function. For any $\epsilon>0$ and $n$ sufficiently large, there exists an $\epsilon ' >0$ such that ${\cal R}_{M_f}(q^{n(1-\epsilon')})\leq q^{n(1+\epsilon)}$. 
\end{theorem}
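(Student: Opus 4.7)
The plan is to emulate the template of Theorem \ref{thm:walsh-hadamard} by decomposing $M_f = L + S$ with $L$ of low rank and $S$ sparse, using the \emph{Croot-Lev-Pach (CLP) lemma} in place of the explicit polynomial approximation that Alman-Williams used for the Walsh-Hadamard matrix. The CLP lemma asserts that for any $n$-variate polynomial $P$ over $\mathbb{F}_q$ of total degree at most $d$ (with each individual degree $\leq q-1$), the evaluation matrix $M_P[I,J] = P(I+J)$ satisfies $\rk(M_P) \leq 2V(d/2)$, where $V(k)$ counts exponent vectors $\alpha \in \{0,\ldots,q-1\}^n$ with $|\alpha| \leq k$. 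I would first prove this by expanding
\[
(I+J)^\alpha \;=\; \sum_{\beta+\gamma=\alpha} \binom{\alpha}{\beta}\, I^\beta J^\gamma,
\]
observing that at least one of $|\beta|, |\gamma|$ is at most $|\alpha|/2 \leq d/2$, and regrouping $M_P$ into two outer-product sums each indexed by monomials of degree at most $d/2$.

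With CLP in hand, I would represent $f$ in its unique polynomial form $f(z) = \sum_\alpha c_\alpha z^\alpha$ and split $f = f_L + f_H$ at a degree threshold $d = n(q-1) - \delta n$, where $\delta = \delta(\epsilon, q) > 0$ is a parameter to be tuned at the end. Applying CLP to $f_L$ yields $\rk(M_{f_L}) \leq 2V(d/2)$, and since $d/2$ lies below the mean $n(q-1)/2$ of $|\alpha|$ for $\alpha$ uniform in $\{0,\ldots,q-1\}^n$, a Hoeffding bound gives $V(d/2) \leq q^n \cdot e^{-\Omega(\delta^2 n)}$, which is $\leq q^{n(1-\epsilon')}$ for some $\epsilon' = \Omega(\delta^2/\log q)$. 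For $f_H$, the key structural observation is that every monomial $z^\alpha$ appearing in $f_H$ satisfies $\sum_i (q-1-\alpha_i) < \delta n$, forcing $\alpha_i = q-1$ on at least $n - \delta n$ coordinates; hence $z^\alpha = 0$ whenever $z$ has more than $\delta n/(q-1)$ zero coordinates. Consequently $f_H$ is supported on the set $E$ of $z \in \mathbb{F}_q^n$ with at most $\delta n/(q-1)$ zero coordinates, and since $M_{f_H}$ has nonzero entries only at positions $(I,J)$ with $I+J \in E$, its sparsity is bounded by $q^n \cdot |E|$.

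The main obstacle will be choosing $\delta$ to satisfy both bounds simultaneously. Shrinking $\delta$ tightens the sparsity bound (via a Chernoff estimate on the number of zero coordinates of a random $z \in \mathbb{F}_q^n$, whose mean is $n/q$, so that for $\delta/(q-1) < 1/q$ one obtains $|E| \leq q^n \cdot \exp(-n D(\delta/(q-1) \,\|\, 1/q))$ with positive KL divergence), but simultaneously loosens the rank bound (since $V(d/2)$ grows toward $q^n$ as $d/2$ approaches the mean degree). Since the theorem asks only that \emph{some} $\epsilon' > 0$ work for each $\epsilon > 0$, I would first pick $\delta = \delta(\epsilon, q) > 0$ small enough that $q^n \cdot |E| \leq q^{n(1+\epsilon)}$, and then read off $\epsilon' = \Omega(\delta^2/\log q) > 0$ from the rank bound. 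Combining, $M_f = M_{f_L} + M_{f_H}$ exhibits $M_f$ as a matrix of rank at most $q^{n(1-\epsilon')}$ plus a matrix of sparsity at most $q^{n(1+\epsilon)}$, establishing ${\cal R}_{M_f}(q^{n(1-\epsilon')}) \leq q^{n(1+\epsilon)}$.
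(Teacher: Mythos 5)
Your rank bound for the low-degree part is fine: the CLP expansion argument you sketch is exactly the content of Lemma~\ref{lem:clp} in the paper, and the Hoeffding/Chernoff estimate on $V(d/2)$ matches what the paper does. The problem is in the sparsity half, and it is not a tuning-of-$\delta$ issue but a structural one.

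Your decomposition $f = f_L + f_H$ truncates the \emph{polynomial representation} of $f$ by degree, and you then bound $\mathrm{supp}(f_H)$ by the set $E$ of $z$ with at most $\delta n/(q-1)$ zero coordinates. But $E$ always contains every $z$ with \emph{no} zero coordinates, so $|E| \geq (q-1)^n$ for every $\delta \geq 0$, and your Chernoff bound $q^n\exp(-nD(\delta/(q-1)\,\|\,1/q))$ does not contradict this: as $\delta\to 0$, $D(\delta/(q-1)\|1/q)\to \ln(q/(q-1))$, so the bound converges down to $(q-1)^n$ rather than to $1$. Thus the best sparsity you can get from this decomposition is $q^n\cdot(q-1)^n = q^{n(1+\log_q(q-1))}$, which for any $q>2$ is $q^{n(1+c)}$ with a \emph{fixed} $c>0$; you cannot reach $q^{n(1+\epsilon)}$ for arbitrarily small $\epsilon$, which is what the theorem demands. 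This is not an artifact of a loose estimate: for $f = \prod_i z_i^{q-1}$ you have $f_L=0$ and $\mathrm{supp}(f_H)$ genuinely has size $(q-1)^n$. (Only for $q=2$ does $(q-1)^n=1$, so your argument recovers the Alman--Williams case but not the general one.)

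The paper avoids this by never truncating the coefficient expansion. Instead it works in the evaluation basis: $F(q,n)\cong\mathbb{F}_q^{q^n}$ is the space of all functions, and $F_d(q,n)$ (degree-$\leq d$ polynomial functions) is a linear subspace of codimension equal to the number of exponent vectors with $|\alpha|>d$, which is at most $q^{n\epsilon}$ for suitable $\delta$ (this is the same monomial-counting/entropy estimate you already use, applied on the \emph{complement} side via $\alpha\mapsto(q-1)\mathbf{1}-\alpha$). A linear subspace of $\mathbb{F}_q^{m}$ of dimension $r$ has covering radius at most $m-r$ (pick an information set and match $f$ there), so there exists a degree-$\leq d$ polynomial $p$ with $|\{z:f(z)\neq p(z)\}|\leq q^{n\epsilon}$. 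Setting $S=M_{f-p}$ then gives at most $q^{n\epsilon}$ nonzero entries per row, i.e.\ $\spar(S)\leq q^{n(1+\epsilon)}$, and CLP applied to $p$ gives the rank bound. Your proposal will go through once you replace the degree truncation $f_L$ with this covering-radius approximation $p$; the rest of your outline is correct.
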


The above theorem says that for any function $f:\mathbb{F}_q^n\rightarrow \mathbb{F}_q$ and any $\epsilon >0$, the matrix $M_f$ has rigidity at most $q^{n(1+\epsilon)}$ for rank $q^{n(1-\epsilon')}$ where the rank is over $\mathbb{F}_q$.

\noindent\textit{Proof Sketch of Theorem \ref{thm:function-matrix}.} The proof is extremely elegant and involves the following two steps:
\begin{enumerate}
    \item {\em Approximate} $f:\mathbb{F}_q^n\rightarrow \mathbb{F}_q$ by a polynomial $p:\mathbb{F}_q^n\rightarrow \mathbb{F}_q$ of {\em low} degree ($d=(1-\delta)n(q-1)$) for some $\delta>0$. By approximating function $f$ by polynomial $p$, we mean $|\{ x\in \mathbb{F}_q^n \mid p(x)\neq f(x) \}| \leq q^{n\epsilon}$.
    \item Show that for any polynomial $p:\mathbb{F}_q^n\rightarrow \mathbb{F}_q$ of sufficiently {\em low} degree($d$ being $(1-\delta)n(q-1)$), $\rk(M_p)\leq q^{n(1-\epsilon')}$ for some $\epsilon'>0$ depending on $\delta$ and $\epsilon$.
\end{enumerate}
From Steps $1$ and $2$, we can infer that $M_f= S+L$ where 
$S=M_f-M_p$ and $L=M_p$. From Step 1 function $f$ and polynomial $p$ differ on at most $q^{n\epsilon}$ many inputs implying that $S$ has at most $q^{n\epsilon}$ non-zero entries in every row and column. Hence, $\spar(S)\leq q^{n(1+\epsilon)}$. From Step 2, $\rk(L)\leq q^{n(1-\epsilon')}$ for some $\epsilon'>0$. Thus , ${\cal R}_{M_f}(q^{n(1-\epsilon')})\leq q^{n(1+\epsilon)}$. $\qedh$

\paragraph{}
We now delve into the details of Steps $1$ and $2$. The set of all functions $\{f \mid f:\mathbb{F}_q^n\rightarrow \mathbb{F}_q\}$ denoted by $F(q,n)$ is a vector space of dimension $q^n$ with basis $\{ x_1^{a_1}x_2^{a_2}\cdots x_n^{a_n}\mid 0\leq a_i\leq q-1 \}$. Let $F_d(q,n)$ be the set of polynomials of degree $d$ in $\mathbb{F}_q[x_1,\ldots,x_n]$. $F_d(q,n)$ is a subspace of $F(q,n)$ with basis $\{ x_1^{a_1}x_2^{a_2}\cdots x_n^{a_n} \mid 0\leq a_i\leq q-1,  \sum_{i} a_i=d \}$. Any function $f:\mathbb{F}_q^n\rightarrow \mathbb{F}_q$ can be viewed as a vector $v$ in $F(q,n)$.

To begin with, we show that for any $d\leq n$ by changing the vector $v$ in $F(q,n)$ (corresponding to the function $f$) on $\dim(F(q,n))-\dim(F_d(q,n))$ many coordinates, we can obtain a vector $u$ in $F_d(q,n)$ (corresponding to a polynomial $p$ of degree $d$). To complete the proof of Step $1$, we obtain a lower bound of $q^n-q^{n\epsilon}$ on $\dim(F_d(q,n))$ when $d=(1-\delta)n(q-1)$.

Let $m\triangleq\dim(F(q,n))$ (i.e., $m=q^n$) and $r\triangleq\dim(F_d(q,n))$. As $F_d(q,n)$ is a subspace of $F(q,n)$, there is an $m\times r$ matrix $M$ of rank $r$ such that $F_d(q,n)$ is the image of the linear transformation defined by $M$. The aim here is to construct for any vector $v\in F(q,n)$, a vector $u$ in $F_d(q,n)$ that differs from $v$ on $m-r$ coordinates. In other words, for any vector $v$ in $F(q,n)$, we want to construct a vector $u$ agreeing with $v$ on $r$ coordinates such that $u$ is in the image of the transformation defined by $M$ (i.e., $u=My$ for some $y\in \mathbb{F}_q^r$). As $\rk(M)=r$, there exists row-vectors $R_{i_1},\ldots,R_{i_r}$ that span the row-space of $M$. A natural attempt would be do construct a partial vector $\bar{u}$ by setting $u_{i_j}\triangleq v_{i_j}$ for every $j\in [r]$. To set the remaining coordinates of vector $u$, observe that the matrix $A$ with rows $R_{i_1},\ldots,R_{i_r}$ has full rank implying that there is a unique $x$ satisfying $Ax=\bar{u}$. As rows of $A$ span rows of $M$, the remaining coordinates of $u$ can be fixed using matrix $A$ and vector $x$. This implies that vector $u$ is in $F_d(q,n)$.

Now, for $d=(1-\delta)n(q-1)$, we want a lower bound of $q^n-q^{n\epsilon}$  on $\dim(F_d(q,n))$. For this, we want to bound the size of the set $\{ x_1^{a_1}x_2^{a_2}\cdots x_n^{a_n} \mid 0\leq a_i\leq q-1,  \sum a_i=d \}$. Let $m=x_1^{a_1}\cdots x_n^{a_n}$ be a monomial of degree $d$. Consider the map $\varphi:  x_1^{a_1}\cdots x_n^{a_n} \mapsto x_1^{(q-1)-a_1}\cdots x_n^{(q-1)-a_n}$. Clearly, $\varphi$ is a bijection and $\deg(\varphi(m))\leq n(q-1)-d \leq \delta n(q-1)$ when $\deg(m)\geq (1-\delta)n(q-1)$. Hence, estimating $\dim(F_d(q,n))$ is the same as estimating $|\{ x_1^{a_1}x_2^{a_2}\cdots x_n^{a_n} \mid 0\leq a_i\leq q-1,  \sum a_i \leq \delta n (q-1) \}|$. Now, by multilinearizing the monomial $x_1^{a_1}x_2^{a_2}\cdots x_n^{a_n}$ by $x_{11}x_{12}\cdots x_{1a_1}x_{21}\cdots x_{2a_2}\cdots x_{n1}x_{n}\cdots x_{na_n}$, it suffices to count the number of multilinear monomials of degree $\delta n (q-1)$ in $n(q-1)$ variables. Therefore, $\dim(F_d(q,n)) = \binom{n(q-1)}{\delta n (q-1)} = 2^{n(q-1)H(\delta)}$ where $H$ is the binary entropy function. By choosing $\delta$  to be a small enough, we get $\dim(F_d(q,n)) \leq q^{n\epsilon}$ where $\delta$ is a function of $q$ and $\epsilon$.

Now, we move on to Step $2$. We use the {\em Croot-Lev-Pach Lemma} to obtain an upper bound on the rank of the matrix $M_p$ where the degree of $p$ if small enough. %We digress a bit to briefly describe the {\em cap set problem} in the context of which the {\em Croot-Lev-Pach Lemma} was introduced: \\

\begin{remark}[The Cap Set Problem]
Consider the space $\mathbb{Z}_3^n$. The {\em cap set problem} is to understand the maximum size of a {\em cap set}, a subset $A$ of $\mathbb{Z}_3^n$ that does not contain pairwise distinct elements $a,b$ and $c$ that lie in a line (i.e., $a+b=2c$). That is, we want to find the size of the largest set $A$ in  $\mathbb{Z}_3^n$ that does not contain an arithmetic progression of the form $\{x,x+r,x+2r\}$ for some $r>0$.  A trivial upper bound on $|A|$ is that of $3^n$. By using the {\em polynomial method} Croot, Lev and Pach in \cite{CLP17} showed that over  $\mathbb{Z}_4^n$, any cap set $A$ has size at most $4^{cn}$ where $c\approx 0.926$. For more on this problem, see blog posts \cite{Tao17,Gowers17} and references therein.
\end{remark}

We now state and prove the {\em Croot-Lev-Pach Lemma} completing the proof of Step 2.

\begin{lemma}
\label{lem:clp}
 Let $p$ be a polynomial in $F_d(q,n)$ and $M_p$ be the $q^n\times q^n$ matrix given by $M_p[I,J] =p(I+J)$ for all $I,J\in \mathbb{F}_q^n$. Then, $\rk(M_p)\leq 2\cdot\dim(F_{d/2}(q,n))$.
\end{lemma}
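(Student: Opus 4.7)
The plan is to write $p(x+y)$ as a bilinear-type expansion in $x$ and $y$, split the sum according to which side has low degree, and conclude that $M_p$ is a sum of two matrices of rank at most $\dim(F_{d/2}(q,n))$.

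First I would assume, without loss of generality, that $p$ is expressed in the canonical basis of $F_d(q,n)$, so $p(z) = \sum_{\gamma} c_\gamma z^\gamma$ where the sum runs over $\gamma \in \{0,1,\ldots,q-1\}^n$ with $|\gamma| \leq d$. Applying the binomial theorem coordinatewise,
\begin{equation*}
p(x+y) \;=\; \sum_{\gamma} c_\gamma \prod_{i=1}^n (x_i + y_i)^{\gamma_i} \;=\; \sum_{\alpha,\beta} c'_{\alpha,\beta}\, x^\alpha y^\beta,
\end{equation*}
where every surviving monomial $x^\alpha y^\beta$ satisfies $0 \le \alpha_i, \beta_i \le \gamma_i \le q-1$ and $|\alpha| + |\beta| = |\gamma| \le d$. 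The key observation is that for every such monomial at least one of $|\alpha| \le d/2$ or $|\beta| \le d/2$ holds.

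Next I would split the sum into two disjoint groups according to this dichotomy (breaking ties arbitrarily) to obtain
\begin{equation*}
p(x+y) \;=\; \sum_{|\alpha| \le d/2} x^\alpha A_\alpha(y) \;+\; \sum_{|\beta| \le d/2} B_\beta(x)\, y^\beta,
\end{equation*}
for suitable polynomials $A_\alpha$ and $B_\beta$. Evaluating at $x = I$, $y = J$ for each $I, J \in \mathbb{F}_q^n$ yields the matrix identity
\begin{equation*}
M_p \;=\; \sum_{|\alpha| \le d/2} u_\alpha v_\alpha^{\mathsf T} \;+\; \sum_{|\beta| \le d/2} w_\beta z_\beta^{\mathsf T},
\end{equation*}
where $u_\alpha = (I^\alpha)_{I \in \mathbb{F}_q^n}$, $v_\alpha = (A_\alpha(J))_{J \in \mathbb{F}_q^n}$, and analogously for $w_\beta, z_\beta$. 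Each outer product is a rank-one matrix, and the number of indices $\alpha$ (respectively $\beta$) with $0 \le \alpha_i \le q-1$ and $|\alpha| \le d/2$ is exactly $\dim(F_{d/2}(q,n))$. Subadditivity of rank therefore gives $\mathrm{rk}(M_p) \le 2\dim(F_{d/2}(q,n))$.

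I do not expect any serious obstacle here: the only subtle point is verifying that expanding $p(x+y)$ never forces an exponent above $q-1$, which is automatic because $(x_i+y_i)^{\gamma_i}$ with $\gamma_i \le q-1$ expands without any reduction modulo $x_i^q - x_i$ or $y_i^q - y_i$. In particular, the decomposition is exact as a polynomial identity, and no degree-growth issues arise when passing to functions on $\mathbb{F}_q^n$. The whole argument is essentially one line once the correct splitting is identified, which is precisely the content of the Croot--Lev--Pach lemma.
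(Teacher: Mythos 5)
Your proof is correct and follows essentially the same route as the paper: expand $p(x+y)$ monomial-by-monomial, partition the resulting $x^\alpha y^\beta$ terms by which side has total degree at most $d/2$, and bound the rank by the number of outer products. The only (favorable) difference is that you are careful to use multi-indices $\alpha \in \{0,\ldots,q-1\}^n$ throughout, whereas the paper slips into subset-of-$[n]$ notation, which strictly speaking only matches the $q=2$ case.
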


\begin{proof}
Let $p$ be a polynomial in $F_d(q,n)$, so $\deg(p)\leq d$. We show that for any $x,y\in \mathbb{F}_q^n$, $p(x+y)=\sum_{i=1}^{R}f_i(x)\cdot g_i(y)$, where $R\leq 2\cdot \dim(F_{d/2}(q,n))$ and the polynomial $f_i$ (respectively $g_i(y)$) is independent of what $x\in \mathbb{F}_q^n$ (respectively $y\in \mathbb{F}_q^n$) is.
This immediately implies that $M_p = \sum_{i=1}^{R} M_i$ where each $M_i$ is the outer-product of two vectors in  $\mathbb{F}_q^t $ ($t=q^n$) and $\rk(M_i)=1$. Therefore, $\rk(M_p)\leq R \leq 2\cdot\dim(F_{d/2}(q,n))$. As the polynomial $p$ is in $F_d(q,n)$, there exists coefficients $\alpha_{I,J}\in \mathbb{F}_q$ (depending on $p$) such that for any $x,y\in \mathbb{F}_q^n$,
\begin{align*}
    p(x+y)&= \sum_{\substack{I,J\subseteq [n]\\ |I|+|J|\leq d}}\alpha_{I,J} x^I y^J
\end{align*}
where for any $I,J\subseteq [n]$, $x^I = \prod_{i\in I}x_i$ and  $y^J = \prod_{j\in J}y_j$. For every $I,J\subseteq [n], |I|+|J|\leq d$, we have either $|I|\leq d/2$ or $|J|\leq d/2$. Then,
\begin{equation}
    p(x+y) = \sum_{\substack{I\subseteq [n]\\ |I|\leq d/2}}x^I \left(\sum_{\substack{J\subseteq [n]\\ |J|\leq d-|I|}} \alpha_{I,J} y^J\right) + \sum_{\substack{J\subseteq [n]\\ |J|\leq d/2}}y^J \left(\sum_{\substack{I\subseteq [n]\\ d/2 < |I|\leq d-|J|}} \alpha_{I,J} x^I \right)
\end{equation}
Let $m$ be the number of subsets of $\{1,\ldots,n\}$ of size at most $d/2$. Note that $m= \sum_{i=0}^{d/2} \binom{n}{i} = \dim(F_{d/2}(q,n))$. Let $\{S_1,\ldots, S_m \}$ be subsets of $\{1,\ldots,n\}$ of size at most $d/2$. We now define vectors $\bar{f}$ and $\bar{g}$ in $\mathbb{F}^{2m}$ as follows:
\begin{itemize}
    \item for $i\in [m]$, $f_i(x)=x^{S_i}$; and $g_i(y) = \sum\limits_{\substack{J\subseteq [n]\\ |J|\leq d-|S_i|}} \alpha_{S_i,J} y^J$.
    \item for $i\in [m]$, $f_{i+m}(x) = \sum\limits_{\substack{I\subseteq [n]\\ d/2 < |I|\leq d-|S_i|}} \alpha_{I,S_i} x^I$; and $g_{i+m}(y) = y^{S_i}$
\end{itemize}
Clearly, $p(x+y)=<\bar{f},\bar{g}>$. %Then, $p(x+y)=\sum_{i=1}^{2m}f_i(x)\cdot g_i(y)$.
Hence, $p(x+y)=\sum_{i=1}^{R}f_i(x)\cdot g_i(y)$, where $R\leq 2\cdot \dim(F_{d/2}(q,n))$ and $\rk(M_p)\leq 2\cdot\dim(F_{d/2}(q,n))$.
\end{proof}

Now, we need to estimate $\dim(F_{d/2}(q,n))$. For this, we need upper bound on number of monomials in $F(q,n)$ of degree at most $d/2$ which is $q^n\cdot \Pr\limits_m[\deg(m)\leq d/2] \leq q^n\exp^{-\delta^2n/4}$ by Chernoff bound for $d=(1-\delta)n(q-1)$ when $m$ is a random monomial. Therefore, $\dim(F_{d/2}(q,n))\leq q^n\cdot q^{-\frac{n\delta^2}{4\log q}} \leq q^{n(1-\epsilon')}$ for some $\epsilon'>0$ thus completing the proof.

Dvir and Liu in \cite{DL19} showed upper bounds on the rigidity of the {\em generalized Hadamard matrices} which were conjectured to be rigid. In the following subsection, we survey this upper bound from \cite{DL19}.

\subsection{Non-rigidity of Generalized Hadamard Matrices}

For the whole of this subsection we will deal with a {\em weaker} notion of rigidity.

A matrix $A\in\mathbb{F}^{n\times n}$ has {\em  weak rigidity} at most $s$ for rank $r$  if the rank of matrix $A$ can be reduced to $r$ by changing at most $s$ entries in every row and every column of $A$. The weak rigidity of a matrix $A$ for rank $r$ is denoted by $WR_A(r)$\footnote{In \cite{DL19} the authors use the term regular rigidity. However for ease, we use the term weak-rigidity}. Observe that this is a weaker notion of matrix rigidity that we have seen so far as $R_A(r)\leq n\cdot s$ whenever $WR_A(r)\leq s$.

The {\em generalized Hadamard matrix} $H_{d,n}$ is a $d^n\times d^n$ matrix given by $H_{d,n}[I,J] = \omega^{I\cdot J}$ for $I,J\in \mathbb{Z}_d^n$ where $w=e^{\frac{2\pi i}{d}}$ is the $d^{th}$ root of unity.  One of the results of \cite{DL19} is that generalized Hadamard matrices are not weakly rigid over $\mathbb{C}$. Note that these results are stronger than just saying that generalized Hadamard matrices are not rigid over $\mathbb{C}$.

\begin{theorem}
\label{thm:gen-hadamard}
Let $d,n$ be positive integers. For any $\epsilon \in (0, 0.1)$ and  $n \geq \frac{d^2 (\log d)^2}{\epsilon^4}$, there exists an $\epsilon ' = \frac{\epsilon^4}{d^2\log d}$ such that ${ WR}_{H_{d,n}}\left(d^{n\left(1-\epsilon '\right)}\right)\leq d^{n\epsilon}$.
\end{theorem}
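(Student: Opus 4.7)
The plan is to adapt the Alman--Williams polynomial-interpolation template (Theorem~\ref{thm:walsh-hadamard}) to the complex character $\omega^{I\cdot J}$, while sharpening the concentration analysis to produce sparsity per row and per column as weak rigidity requires. The starting observation is that $H_{d,n}[I,J]=\omega^{I\cdot J}=\prod_{i=1}^{n}\omega^{I_iJ_i}$, so $H_{d,n}=H_d^{\otimes n}$, and the entry depends only on the integer sum $s=I\cdot J\in\{0,\ldots,n(d-1)^2\}$.

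First I would build a low-rank approximator $M_p$. Fix a parameter $k$ and a window of $k+1$ consecutive integers $\{a,a+1,\ldots,a+k\}$ centered on the typical value $\mu=\frac{1}{4}n(d-1)^2$. Lagrange interpolation over $\mathbb{C}$ yields a polynomial $p\in\mathbb{C}[s]$ of degree $k$ with $p(s)=\omega^s$ on every integer in the window. Define $M_p[I,J]:=p(I\cdot J)$. Expanding $(I\cdot J)^\ell$ by the multinomial theorem, every monomial $\prod_i(I_iJ_i)^{a_i}$ with $\sum_i a_i\le k$ factors as $\bigl(\prod_i I_i^{a_i}\bigr)\cdot\bigl(\prod_i J_i^{a_i}\bigr)$ and contributes a rank-$1$ matrix, so $\mathrm{rank}(M_p)\le\binom{n+k}{k}$. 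Picking $k$ with $\binom{n+k}{k}\le d^{n(1-\epsilon')}$ meets the target rank.

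Next I would use concentration to control per-row disagreements. For a fixed row $I$, $I\cdot J=\sum_i I_iJ_i$ is a sum of $n$ independent variables bounded in $[0,(d-1)^2]$ with conditional mean $\mu_I=\tfrac{d-1}{2}\sum_iI_i$; Hoeffding's inequality then bounds the number of $J$'s with $|I\cdot J-\mu_I|>k/2$ by $d^n\exp(-\Omega(k^2/(nd^4)))$. A Chernoff estimate on the coordinate histogram of $I$ shows that all but $d^{n-\Omega(\epsilon^2 n/d)}$ rows are ``typical'', meaning $|\mu_I-\mu|\le k/4$. For the typical rows the Hoeffding estimate caps disagreements at $d^{n\epsilon}$; the atypical rows and, symmetrically, the atypical columns are overwritten by rank-$1$ patches equal to their true values in $H_{d,n}$, and the added rank $d^{n-\Omega(\epsilon^2 n/d)}$ is absorbed into the budget $d^{n(1-\epsilon')}$ when $n\ge d^2(\log d)^2/\epsilon^4$ and $\epsilon'=\Theta(\epsilon^4/(d^2\log d))$.

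The hard part is the tension between the two steps: since each increment $I_iJ_i$ has variance $\Theta(d^4)$, forcing Hoeffding's tail below $d^{-n(1-\epsilon)}$ for every typical row wants the concentration window $k$ of order $d^2n\sqrt{\log d}$, which by itself would ruin the monomial count in Step~1. One way to beat this tradeoff is to exploit the tensor factorization $H_{d,n}=H_{d,n_1}\otimes H_{d,n_2}$ with $n_2\approx\epsilon n$, handling the atypical outer rows via an iterated (rather than direct) application of the construction so that the $d^2$-factor loss is amortized; pushing the quantitative dependence through this iteration is what dictates both the parameter $\epsilon'=\Theta(\epsilon^4/(d^2\log d))$ and the size requirement $n\ge d^2(\log d)^2/\epsilon^4$, and carrying out this bookkeeping is the technically demanding piece of the argument.
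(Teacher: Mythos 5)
Your plan pursues an Alman--Williams-style polynomial interpolation, whereas the paper's proof of Theorem~\ref{thm:gen-hadamard} (following Dvir and Liu) takes an entirely different, diagonalization-based route. The paper first rescales $H_{d,n}$ into a matrix $M_f$ with $M_f[I,J]=f(I+J)$ for a \emph{symmetric} function $f$ (Step 1 of the proof sketch), then invokes Observation~\ref{obs:rank-roots}: $H_{d,n}M_fH_{d,n}$ is diagonal with entries $d^nP_f(\omega^I)$, so $\rk(M_f)$ equals the number of non-roots of $P_f$ on $\{\omega^I\}$. It then changes $f$ only on a small, permutation-closed set $T$ (tuples with an excess of zeros) so that $P_{f'}$ vanishes on the large set $\perm(S)$; a Chernoff estimate on the coordinate histogram controls both $|T|$ (the per-row change) and $|\mathbb{Z}_d^n\setminus\perm(S)|$ (the new rank). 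There is no polynomial approximation of $\omega^s$, no interpolation window, and no degree-versus-tail bookkeeping anywhere.

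Your proposal, as you candidly note, runs into a genuine and, I believe, fatal tension for $d>2$. For the rank estimate in your Step~1 to beat $d^{n(1-\epsilon')}$, the interpolation degree $k$ must be at most roughly $n(d-1)/2$: even after reducing exponents to $a_i\le d-1$ (which caps the monomial count at $d^n$), you only save when $k$ sits noticeably below $n(d-1)/2$. On the other hand each summand $I_iJ_i$ has standard deviation $\Theta(d^2)$, so any large-deviation window capturing all but a $d^{-n(1-\epsilon)}$ fraction of $J$'s for a typical row must span $\Theta(n\sqrt{\log d})$ standard deviations, i.e. have width $\Theta(nd^2\sqrt{\log d})$, a factor $\Theta(d\sqrt{\log d})$ larger than the degree budget. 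The multilinearization trick that rescues Alman--Williams at $d=2$ (where $I_iJ_i\in\{0,1\}$, so the degree budget $n$ matches the full range of the inner product) has no counterpart for larger $d$. Your proposed escape via an iterated tensor factorization $H_{d,n}=H_{d,n_1}\otimes H_{d,n_2}$ is left as a placeholder rather than an argument; by the combination rule of Lemma~\ref{lem:tensor-product}, a tensor decomposition merely multiplies weak-rigidity bounds for the factors, each of which faces the same $d$-dependent obstruction, so it is not clear how the $d^2$ loss amortizes away. In short, the interpolation template is the wrong tool for general $d$; the Fourier/diagonalization argument via symmetric $f$ and the roots of $P_f$ sidesteps the degree-versus-tail conflict entirely.
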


In order to proceed with the proof of Theorem \ref{thm:gen-hadamard} we need to introduce a few notations and make some preliminary observations.
For any $I\in \mathbb{Z}_d^n$ such that $I=(i_1,i_2,\ldots,i_n)$ we denote by $x^I$ the monomial $x_1^{i_1}x_2^{i_2}\cdots x_n^{i_n}$. Let $f:\mathbb{Z}_d^n \rightarrow \mathbb{C}$ be any function. We can associate with function $f$:
\begin{enumerate}
\item[(i)] a polynomial $P_f\in \mathbb{C}[x_1,\ldots ,x_n]$ given by $P_f\triangleq \sum_{I\in\mathbb{Z}_d^n} f(I)x^{I}$; and
\item[(ii)] a $d^n\times d^n$ matrix $M_f$ given by $M_f [I,J] \triangleq f(I+J)$ for $I,J\in \mathbb{Z}_d^n$.
\end{enumerate}
It is reasonable to expect interesting connections between the matrix $M_f$ and polynomial $P_f$ which we pen down in the following observation:

\begin{obs}
\label{obs:rank-roots}
Let $f:\mathbb{Z}_d^n \rightarrow \mathbb{C}$ be any function. If the polynomial $P_f$ has $r$ roots in the set $\{(\omega^{i_1},\ldots,\omega^{i_n})\mid (i_1,\ldots,i_n)\in \mathbb{Z}_d^n \}$ then $\rk(M_f) = d^n-r$.
\end{obs}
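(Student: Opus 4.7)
The plan is to diagonalize $M_f$ explicitly by multiplying it with the generalized Hadamard matrix $H_{d,n}$ itself. This exploits the fact that $M_f$ is a \emph{group matrix} for the abelian group $\mathbb{Z}_d^n$, whose irreducible characters are exactly $\chi_K(I) = \omega^{K\cdot I}$, so $M_f$ should be simultaneously diagonalized by the character basis, which is essentially what the columns of $H_{d,n}$ record.

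Concretely, first I would verify the key identity
\[
  (M_f \cdot H_{d,n})[I,K] \;=\; \sum_{J\in\mathbb{Z}_d^n} f(I+J)\,\omega^{J\cdot K} \;=\; \omega^{-I\cdot K}\cdot P_f(\omega^{K_1},\ldots,\omega^{K_n}),
\]
which follows by the substitution $J' = I+J$ and the definition of $P_f$. Writing $\lambda(K) \triangleq P_f(\omega^{K_1},\ldots,\omega^{K_n})$ and letting $D$ be the $d^n\times d^n$ diagonal matrix with $D[K,K] = \lambda(K)$, this identity is precisely the matrix equation $M_f \cdot H_{d,n} = \overline{H_{d,n}} \cdot D$, where $\overline{H_{d,n}}$ denotes the entrywise complex conjugate of $H_{d,n}$.

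Next, I would observe that both $H_{d,n}$ and $\overline{H_{d,n}}$ are invertible over $\mathbb{C}$, since the orthogonality of characters of $\mathbb{Z}_d^n$ gives $H_{d,n}\cdot H_{d,n}^{*} = d^n\cdot I$. Hence $M_f = \overline{H_{d,n}} \cdot D \cdot H_{d,n}^{-1}$, and since rank is preserved under multiplication by invertible matrices, $\rk(M_f) = \rk(D)$. But $\rk(D)$ equals the number of non-zero diagonal entries of $D$, i.e., the number of tuples $(\omega^{K_1},\ldots,\omega^{K_n})$ with $K\in\mathbb{Z}_d^n$ at which $P_f$ does \emph{not} vanish. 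By hypothesis, $P_f$ vanishes on exactly $r$ of these $d^n$ tuples, so $\rk(M_f) = d^n - r$, as claimed.

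The only step demanding any care is the matrix identity in the second paragraph; everything after it is routine linear algebra. I do not expect a serious obstacle, because the observation is essentially the well-known fact that convolution operators on a finite abelian group are diagonalized by the Fourier/character basis, specialized here to $\mathbb{Z}_d^n$.
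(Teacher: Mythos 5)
Your proof is correct and follows essentially the same idea as the paper: diagonalize the group/convolution matrix $M_f$ by the character matrix $H_{d,n}$ and read off the rank as the number of non-vanishing eigenvalues $P_f(\omega^K)$. The paper records the conjugation in the two-sided form $H_{d,n}\cdot M_f\cdot H_{d,n}=d^n D$ (its Lemma on Hadamard matrices), whereas you write the equivalent one-sided identity $M_f\cdot H_{d,n}=\overline{H_{d,n}}\cdot D$, but both reduce to the same invertibility-of-$H_{d,n}$ argument and the same conclusion.
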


Proof of the above observation is based on a simple fact that the  matrix $H_{d,n}\cdot M_f\cdot H_{d,n}$ is a $d^n\times d^n$ diagonal matrix whose $[I,I]^{th}$ diagonal entry is given by $d^n\cdot P_f (\omega^{I})$ where $\omega^{I}$ denotes the tuple $(\omega^{i_1},\omega^{i_2},\ldots,\omega^{i_n})$ for any $I =(i_1,i_2,\ldots,i_n) \in \mathbb{Z}_d^n$.\footnote{The notation $\omega^{[I]}$ is more appropriate as $(\omega^{i_1},\omega^{i_2},\ldots,\omega^{i_n})$ is tuple in $\mathbb{Z}_d^n$. However we will use $\omega^{I}$ for ease of notation.} 
Now, with Observation \ref{obs:rank-roots} in hand, we sketch the proof of Theorem \ref{thm:gen-hadamard}. \\

\noindent\textit{Proof Sketch of Theorem \ref{thm:gen-hadamard}.} The proof proceeds in two steps: 
\begin{enumerate}
    \item  Rescale the rows and columns of $H_{d,n}$ to obtain a matrix $H_{d,n}'$ such that there exists a symmetric function $f:\mathbb{Z}_d^n\rightarrow \mathbb{C}$ with $M_f=H_{d,n}'$. The rows and columns of $H_{d,n}$ are uniformly rescaled in such a way that $ {WR}_{H_{d,n}}(r)= {WR}_{M_f}(r)$ for any $r$.
    \item For any $\epsilon\in (0,0.1)$ and any symmetric function $f:\mathbb{Z}_d^n\rightarrow \mathbb{C}$, by changing $f$ on at most $d^{n\epsilon}$ many values, obtain a symmetric function $f':\mathbb{Z}_d^n\rightarrow \mathbb{C}$ such that the matrix $M_{f'}$ has $\rk(M_{f'}) \leq  d^{n(1-\epsilon')}$ where $\epsilon' = \frac{\epsilon4}{d^2\log d}$. The upper bound on $\rk(M_{f'})$ follows from Observation \ref{obs:rank-roots} as the polynomial $P_{f'}$ has {\em many} roots in $\{(\omega^{i_1},\ldots,\omega^{i_n})\mid (i_1,\ldots,i_n)\in \mathbb{Z}_n^d \}$.
\end{enumerate}

\noindent {\em Proof of Step 1:} Let $H_{d,n}$ be the $d^n\times d^n$ generalized Hadamard matrix. Let $\mu\in \mathbb{C}$ be such that $\mu^2 = \omega$.
For every $I,J \in \mathbb{Z}_d^n$, the $d^n\times d^n$ matrix $H_{d,n}'$ is  obtained by multiplying every element of the $I^{th}$ row  by $\mu^{I\cdot I}$ and every element of the $J^{th}$ column by $\mu^{J\cdot J}$. Now, we define $f:\mathbb{Z}_d^n\rightarrow \mathbb{C}$ as: $f(I)=\mu^{i_1^2+i_2^2 +\cdots + i_n^2}$ for any $I=(i_1,i_2,\ldots,i_n)\in \mathbb{Z}_d^n$. Observe that $f$ is a symmetric function and for any $I,J \in \mathbb{Z}_d^n$ the matrix $M_f [I,J] = f(I+J) = \mu^{(i_1+j_1)^2+(i_2+j_2)^2 +\cdots + (i_n+j_n)^2} = H_{d,n}'[I,J]$. Note that the function $f$ is well-defined and ${ WR}_{H_{d,n}}(r)= { WR}_{M_f}(r)$ for any $r$. \\

Now, in the following step we will have to modify the function $f$ so that the polynomial $P_f$ satisfies the hypothesis of Observation \ref{obs:rank-roots}.

%\textcolor{blue}{how many roots does $f$ have ? Why is $f'$ required ?}

\noindent {\em Proof of Step 2:} Given any symmetric function $f:\mathbb{Z}_d^n\rightarrow \mathbb{C}$ by changing $f$ on a {\em ``small"} set $T$ of values in $\mathbb{Z}_d^n$, we want to construct a symmetric function $f':\mathbb{Z}_d^n\rightarrow \mathbb{C}$  such that the polynomial $P_{f'}$ vanishes on a {\em ``large"} set $S$ in $\{(\omega^{i_1},\ldots,\omega^{i_n})\mid (i_1,\ldots,i_n)\in \mathbb{Z}_d^n \}$. The sets $S$ and $T$ are defined as follows: %This immediately implies that we can invoke Observation \ref{obs:rank-roots} to upper bound $\rk(M_{f'})$. 

\begin{enumerate}
    \item Let $m = \frac{n(1-\epsilon^2)}{d}$ and $S$ be the set of tuples $ (i_1, i_2,\ldots, i_n) \in \mathbb{Z}_d^n$ such that $i_1= i_2 =\cdots = i_m =0$; $i_{m+1} = i_{m+2} = \cdots = i_{2m} = 1$ and so on till $i_{(d-1)m+1}= i_{(d-1)m+2} = \cdots = i_{dm} = m$. %Let $S =\{ \omega^I \mid I\in R\}$.
    \item Let $T\subseteq \mathbb{Z}_d^n$ be the set of tuples with at least $n(1-\epsilon^2)$ many zeros.
\end{enumerate}

Having defined sets $S$ and $T$, the following lemma(which we will prove later) ensures that we can use Observation \ref{obs:rank-roots} to complete the proof of Theorem \ref{thm:gen-hadamard}.

%(proved later on) which is crucial to upper bound $\rk(M_{f'})$:

\begin{lemma}
\label{lem:symmetric}
Let $f:\mathbb{Z}_n^d\rightarrow \mathbb{C}$ be any symmetric function. By changing $f$ on values in $T$, we can construct a symmetric function $f':\mathbb{Z}_n^d\rightarrow \mathbb{C}$ such that $P_{f'}(\omega^I) = 0$ for every $I\in S$.
\end{lemma}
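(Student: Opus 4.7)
The plan is to recast the vanishing requirement as a square linear system indexed by symmetry types and to solve it. Since $f'$ is required to be symmetric, it is determined by its values $g'(\tau)$ on types $\tau = (k_0, \ldots, k_{d-1})$ with $\sum_j k_j = n$; moreover the associated polynomial $P_{f'}$ is itself a symmetric polynomial (symmetry of $f'$ in its arguments gives $P_{f'}(x_{\pi(1)}, \ldots, x_{\pi(n)}) = P_{f'}(x)$ for every permutation $\pi$), so $P_{f'}(\omega^I)$ depends only on $\tau(I)$. Hence the $|S|$ vanishing conditions collapse to one condition per type realized inside $S$; write $\mathcal{T}(S)$ for that collection of types and $\mathcal{T}(T)$ for the types realized inside $T$. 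Since $T$ is $S_n$-invariant, ``modifying $f$ on values in $T$'' is equivalent to freely choosing increments $\delta(\tau) \in \mathbb{C}$ for each $\tau \in \mathcal{T}(T)$.

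Next I would count both sides. A type in $\mathcal{T}(S)$ has the form $(m + a_0, \ldots, m + a_{d-1})$ with $a_j \ge 0$ and $a_0 + \cdots + a_{d-1} = n - dm = n\epsilon^2$. A type in $\mathcal{T}(T)$ is parameterized by $(b_1, \ldots, b_{d-1})$ with $b_j \ge 0$ and $b_1 + \cdots + b_{d-1} \le n\epsilon^2$ (then $k_0 = n - \sum_j b_j$). By the hockey-stick identity, both families biject with weak compositions of $n\epsilon^2$ into $d$ parts, so $|\mathcal{T}(S)| = |\mathcal{T}(T)| = \binom{n\epsilon^2 + d - 1}{d - 1}$ and the system is square.

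The constraints then read
\begin{equation*}
\sum_{\tau \in \mathcal{T}(T)} K(\sigma, \tau)\,\delta(\tau) \;=\; -P_f(\omega^{I_\sigma}), \qquad \sigma \in \mathcal{T}(S),
\end{equation*}
where $K(\sigma, \tau) = \sum_{J \,:\, \tau(J) = \tau} \omega^{I_\sigma \cdot J}$ and $I_\sigma$ is a fixed representative of $\sigma$. It suffices to prove that $M = [K(\sigma, \tau)]_{\sigma, \tau}$ is invertible: then the system has a unique solution $\delta$, and defining $g'(\tau) = g(\tau) + \delta(\tau)$ on $\mathcal{T}(T)$ and $g'(\tau) = g(\tau)$ otherwise produces a symmetric $f'$ differing from $f$ only on tuples in $T$ and satisfying $P_{f'}(\omega^I) = 0$ for all $I \in S$.

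The hard part is establishing invertibility of $M$. Choosing the block-structured representative $I_\sigma = (0^{m + a_0}, 1^{m + a_1}, \ldots, (d-1)^{m + a_{d-1}})$ yields the clean generating-function identity
\begin{equation*}
K(\sigma, \tau) \;=\; [z_0^{k_0} z_1^{k_1} \cdots z_{d-1}^{k_{d-1}}] \, \prod_{j=0}^{d-1} \Bigl(\sum_{v=0}^{d-1} \omega^{jv} z_v \Bigr)^{m + a_j}.
\end{equation*}
From here I would try either (i) to order $\mathcal{T}(T)$ by the nonzero mass $b_1 + \cdots + b_{d-1}$ and show that $M$ is triangular in this ordering with nonzero diagonal, or (ii) to exhibit an explicit inverse by exploiting the Vandermonde on the distinct $d$-th roots $1, \omega, \ldots, \omega^{d-1}$ (the same nondegeneracy that underlies the diagonalization of $H_{d,n} M_f H_{d,n}$ used in Observation~\ref{obs:rank-roots}). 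Either route reduces the claim to the nonvanishing of certain character-sum / elementary-symmetric expressions in the $d$-th roots of unity, which is where I expect the detailed combinatorial bookkeeping to live.
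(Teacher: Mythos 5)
Your reformulation matches the paper's in substance, and you add a clean observation the paper doesn't state explicitly: that $|\rep(S)| = |\rep(T)| = \binom{n\epsilon^2 + d - 1}{d-1}$, so the linear system is square and ``solvable'' reduces to ``invertible.'' Your type-by-type collapse is valid (indeed $T$ is $S_n$-invariant, so $\delta$ is free on each orbit realized in $T$), and your generating-function identity for $K(\sigma,\tau)$ is correct. Up to here your account and the paper's are the same argument in different notation: the paper phrases the goal as ``the column space of $M$ contains every standard basis vector $e_{i_0}$,'' which, with your squareness count, is exactly invertibility.

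The gap is that you stop precisely where the paper does the real work. Your two candidate routes (triangularity; explicit Vandermonde-type inverse) are left as sketches, and I do not see an obvious ordering of $\mathcal{T}(T)$ that makes $M$ triangular, so route (i) would need a genuine idea. The paper instead runs a Lagrange-interpolation construction in the free coordinates: for each target row $i_0$ it writes
\[
Q(x_{dm+1},\ldots,x_n) \;=\; \sum_{I'\in \perm(I_{i_0}')} \prod_{\ell}\frac{x_{dm+\ell}^{\,d}-1}{x_{dm+\ell}-\omega^{I'(\ell)}},
\]
where $I'_{i_0}$ is the restriction of the representative $I_{i_0}$ to the last $n-dm$ coordinates. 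Each factor $\frac{x^d-1}{x-\omega^j}$ is the Lagrange selector for $\omega^j$ among the $d$-th roots of unity, so $Q(\omega^{I'})\neq 0$ exactly when $I'$ is a permutation of $I_{i_0}'$, i.e.\ when the corresponding $I$ lies in the orbit $\sigma_{i_0}$. This is the Vandermonde nondegeneracy you invoke, now made explicit. The remaining and genuinely nontrivial point --- the one your proposal does not engage --- is that $Q$ must be realizable as the restriction, to $x_1=\cdots=x_m=1,\ldots, x_{(d-1)m+1}=\cdots=x_{dm}=\omega^{d-1}$, of a polynomial $P(x_1,\ldots,x_n) = \sum_j a_j \sum_{J\in\perm(J_j)} x^J$ whose monomial support is entirely inside the orbits of $T$; otherwise the coefficient vector $(a_1,\ldots,a_k)$ you would read off is not a legal linear combination of columns of $M$. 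This ``lifting'' requirement is exactly part (ii) of the paper's Claim~\ref{claim:hadamard-claim1}, and it is the combinatorial bookkeeping you anticipate but do not carry out. As written, your proposal is a correct setup with the key step missing.
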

Now, assuming Lemma \ref{lem:symmetric}, let us  complete the proof of Theorem \ref{thm:gen-hadamard}. Note that $M_f = (M_f- M_{f'}) + M_{f'}$ and we bound $\rk(M_{f'})$ and $\spar(M_f- M_{f'})$.  By Lemma \ref{lem:symmetric} $P_{f'}$ vanishes on the set $\{ \omega^I \mid I\in S\}$. Hence  $P_{f'}$ has $|S| = d^{n-dm} = d^{n\epsilon^2}$ many roots in $\{(\omega^{i_1},\ldots,\omega^{i_n})\mid (i_1,\ldots,i_n)\in \mathbb{Z}_d^n \}$ as $m=n(1-\epsilon^2)/d$. However, as $f'$ is a symmetric function, the polynomial $P_{f'}$ not only vanishes on $\{\omega^I \mid I\in S\}$ but also on $\omega^J$ for all tuples $J$ in $\mathbb{Z}_d^n$ that are obtained by permuting the entries of $I = (i_1,\ldots,i_n)$. That is, $P_{f'}(\omega^J)=0$ for all $J$ in $\perm(S)=\{ \perm(I) \mid I\in S\}$ where $\perm(I)$ denotes the set of distinct permutations are obtained by permuting the entries of $I = (i_1,\ldots,i_n)$.

Thus, by Observation \ref{obs:rank-roots}, $\rk(M_{f'})$ is exactly the number of tuples in $\mathbb{Z}_d^n$ that are not in $\perm(S)$ and estimating $\rk(M_{f'})$ amounts to estimating the size of $\mathbb{Z}_d^n \setminus \perm(S)$. A tuple $I\in \mathbb{Z}_d^n$ is in $\perm(S)$ iff every $a\in \{0,1,\ldots,d-1\}$ appears at least $m$ times. Then, $\rk(M_{f'})$ is given by the number of tuples in  $\mathbb{Z}_d^n$ such that there exists an $a\in \{0,1,\ldots,d-1\}$, $a$ appears less than $m$ times.  Let $\tau\in_{r} \mathbb{Z}_d^n, i\in \{0,\ldots,d-1\}$ and $X_i$ be a random variable that denotes the number of times $i$ appears in the tuple $\tau$. Then, $\Pr\left[X_i < \frac{(1-\epsilon^2)n}{d}\right] \leq e^{-\frac{2\epsilon^4 n}{d^2}}$ and $\Pr[\tau \not \in \perm(S)] \leq d\cdot e^{-\frac{2\epsilon^4 n}{d^2}}$. The expected size of  $\mathbb{Z}_d^n \setminus \perm(S)$ is at most $d^n\cdot d\cdot e^{-\frac{2\epsilon^4 n}{d^2}}$. Thus, when $n> \frac{d^2(\log d)^2}{\epsilon^4}$, the size of $\mathbb{Z}_d^n \setminus \perm(S)$ is $d^{n(1-\epsilon')}$ for $\epsilon' = \frac{\epsilon^4}{d^2\log d}$. This immediately implies that $\rk(M_{f'})\leq d^{n(1-\epsilon')}$ for $\epsilon' = \frac{\epsilon^4}{d^2\log d}$.

To upper bound $\spar(M_f- M_{f'})$, it is enough to estimate $|T|$ which is the number of tuples in $\mathbb{Z}_d^n$ with at least $n(1-\epsilon^2)$ many zeros. Let $\tau\in_{r} \mathbb{Z}_d^n$ and  $X$ be a random variable that denotes the number of zeros in $\tau$. Then, $\Pr[ \tau \in T ] = \Pr [X \geq n(1-\epsilon^2)] \leq e^{-D((1-\epsilon^2)||1/d)} \leq d^{-n(1-\epsilon)}$ when $\epsilon\in (0,0.1)$. The expected size of set $T$ is at most $d^n\cdot d^{-n(1-\epsilon)}$. This implies that $\spar(M_f- M_{f'}) \leq d^{n\epsilon}$. Thus, by changing $M_f$ on $|T|\leq d^{n\epsilon}$ values in every row, the rank of $M_f$ becomes $d^{n(1-\epsilon')}$ implying that ${ WR}_{H_{d,n}}\left(d^{n\left(1-\epsilon '\right)}\right)\leq d^{n\epsilon}$.
$\qedh$

\paragraph{}
We now turn to the proof of Lemma $\ref{lem:symmetric}$. Let $f:\mathbb{Z}_d^n\rightarrow \mathbb{C}$ be any symmetric function and $T\subseteq \mathbb{Z}_d^n$ be the set of tuples with at least $n(1-\epsilon^2)$ many zeros. As we want to change $f$ only on tuples in $T$, for all $J\not\in T$, $f'(J)=f(J)$. Also, as we want $f':\mathbb{Z}_d^n\rightarrow \mathbb{C}$ to be symmetric, we require that for every $j\in [k]$, for any $J,J'\in \perm(J_j)$, $f'(J)=f'(J')$. Since we do not know what values to change $f$ to on tuples in set $T$, the most natural approach would be to come up with a system of equations with these as the unknown variables and $P_{f'}(\omega^{I}) = 0$ for every $I\in S$ as the constraints and show that this system has a solution. 

\begin{align*}
    P_{f'}(\omega^{I}) &= 0 \text{~~for all $I\in S$} \\
    \sum\limits_{J\in T} f'(J)\omega^{I\cdot J} + \sum\limits_{J'\not \in T} f(J')\omega^{I\cdot J'} &= 0  \text{~~for all $I\in S$}
\end{align*}

Note that we require the new function $f'$ to be symmetric. Also, let us consider the equivalence classes obtained by permuting the tuples $S$ and $T$ and denote by $\rep(S)=\{I_1,\ldots,I_{\ell}\}$ and $\rep(T)=\{ J_1,\ldots,J_k\}$ the set obtained by picking one representation from each equivalence class of $S$ and $T$ respectively. Now, we define a system of linear equations with $\{f'(J_j) \mid j\in [k] \}$ as the unknowns labelled as $a_1,\ldots,a_k$:  
\begin{align*}
    \sum\limits_{j=1}^{k} a_j \sum\limits_{J\in \perm(J_j)}  \omega^{I\cdot J} + \sum\limits_{J'\not \in T} f(J')\omega^{I\cdot J'} &= 0 \text{~~for all $I\in S$}
\end{align*}
Since $f:\mathbb{Z}_d^n\rightarrow \mathbb{C}$ is a symmetric function, it suffices to consider the following set of linear equations:
\begin{align*}
    \sum\limits_{j=1}^{k} a_j \sum\limits_{J\in \perm(J_j)}  \omega^{I_i\cdot J} + \sum\limits_{J'\not \in T} f(J')\omega^{I_i\cdot J'} &= 0 \text{~~for all $I_i\in \rep(S)$} 
    \end{align*}
That is,
    \begin{align*}
    \sum\limits_{j=1}^{k} a_j \sum\limits_{J\in \perm(J_j)}  \omega^{I_i\cdot J} &=- \sum\limits_{J'\not \in T} f(J')\omega^{I_i\cdot J'}  \text{~~for all $I_i\in \rep(S)$}
\end{align*}
Let $M$ be the $\ell\times k$ coefficient matrix given by $M_{ij}=\sum\limits_{J\in \perm(J_j)}  \omega^{I_i\cdot J}$. In order to show that the above non-homogeneous system of linear equations has a solution, it is enough to show that the column space of $M$ has full rank. That is, for each $i=1,\ldots,\ell$, we require constants $a_1,\ldots ,a_k$ such that: 
\begin{align}
    \sum\limits_{j=1}^{k} a_j M_{ij} &\neq 0 \label{eq:eq5} \\
    \sum\limits_{j=1}^{k} a_j M_{i'j} &= 0 \text{~~for $i'\neq i$} \label{eq:eq6} 
\end{align} Fix $i'=i_0$ in Equations (\ref{eq:eq5}) and (\ref{eq:eq6}). We need $a_1,\ldots ,a_k$ such that 
\begin{align}
    \sum\limits_{j=1}^{k} a_j \sum\limits_{J\in \perm(J_j)}  \omega^{I_{i_0}\cdot J} &\neq 0 \label{eq:eq3} \\
    \sum\limits_{j=1}^{k} a_j \sum\limits_{J\in \perm(J_j)}  \omega^{I_i\cdot J} &= 0 \text{~~for $i\neq i_0$} \label{eq:eq4}
\end{align}
Clearly, from equations (\ref{eq:eq3}) and (\ref{eq:eq4}) this is equivalent to constructing an $n$-variate polynomial $P(x_1,\ldots,x_n) = \sum\limits_{j=1}^{k} a_j \sum\limits_{J\in \perm(J_j)}  x^J$ that vanishes on $\omega^{I_i}$ for any $i\in [\ell], i\neq i_0$ but does not vanish on $\omega^{I_{i_0}}$.

However, for any tuple $I= (i_1,\ldots,i_n)$ in $S$, the first $dm$ entries are fixed and let $I'$ be the sub-tuple $(i_{dm+1},\ldots,i_n)$ of $I$ and $I'(j)$ denote the $j^{th}$ entry of tuple $I'$. Thus, we want an $(n-dm)$-variate polynomial $Q(x_{dm+1} \ldots,x_n) = P(1, \ldots 1, \ldots, \omega^{d-1},\ldots,\omega^{d-1}, x_{dm+1} \ldots,x_n)$ that vanishes on $\omega^{I_i'}$ if and only if $i\neq i_0$ \footnote{Note that for every $i_0$, we get a polynomial $Q_{i_0}$ that is dependent on the tuple $I_{i_0}$. For ease of notation, we refer to the polynomial as $Q$ dropping the subscript $i_0$}.

Let $Q(x_{dm+1} \ldots,x_n) = \sum\limits_{I'\in \perm(I_{i_0}')}\left( \frac{x_{dm+1}^d-1}{x_{dm+1}-\omega^{I'(0)}} \right)\cdots \left(\frac{x_{n}^d-1}{x_n-\omega^{I'(n-dm)}}  \right)$. The proof of Theorem \ref{thm:gen-hadamard} is complete with the following claim:
\begin{claim} 
\label{claim:hadamard-claim1}
Let $Q(x_{dm+1} \ldots,x_n)$ be the polynomial defined above. Then,
\begin{itemize}
    \item[(i)] $Q(x_{dm+1} \ldots,x_n)$ vanishes on $\omega^{I_i'}$ if and only if $i\neq i_0$.
    \item[(ii)] $Q(x_{dm+1} \ldots,x_n) = P(1, \ldots 1, \ldots, \omega^{d-1},\ldots,\omega^{d-1}, x_{dm+1} \ldots,x_n)$.
\end{itemize}
\end{claim}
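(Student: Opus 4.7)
The plan is to establish (i) by a direct analysis of the polynomial factorization of $x^d - 1$, and to prove (ii) by constructing a symmetric lift of $Q$ to all $n$ variables whose restriction recovers $Q$ up to a nonzero scalar, and then correcting its monomial support to lie inside $T$.

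For (i), I would use that $\frac{x^d - 1}{x - \omega^a} = \prod_{b \in \mathbb{Z}_d \setminus \{a\}} (x - \omega^b)$, so its value at $x = \omega^c$ is $\prod_{b \neq a}(\omega^c - \omega^b)$, which is nonzero iff $c = a$ (otherwise the factor with $b = c$ vanishes). Applying this factorwise, for a fixed $I' \in \perm(I_{i_0}')$ the product $\prod_{k=1}^{n-dm} \frac{x_{dm+k}^d - 1}{x_{dm+k} - \omega^{I'(k)}}$ evaluated at $(\omega^{J'(1)}, \ldots, \omega^{J'(n-dm)})$ is nonzero iff $J'(k) = I'(k)$ for every $k$, i.e., iff $I' = J'$ componentwise. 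Summing over $I' \in \perm(I_{i_0}')$ therefore yields $Q(\omega^{J'}) \neq 0$ iff $J' \in \perm(I_{i_0}')$. Because every element of $S$ shares the same fixed prefix, two representatives $I_i, I_{i_0} \in \rep(S)$ lie in distinct $S_n$-orbits exactly when their tails $I_i', I_{i_0}'$ have distinct multisets; hence for $i \neq i_0$ we have $I_i' \notin \perm(I_{i_0}')$, giving $Q(\omega^{I_i'}) = 0$, while $Q(\omega^{I_{i_0}'}) \neq 0$ follows from the summand $I' = I_{i_0}'$.

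For (ii), I would consider the symmetric lift $R(x_1, \ldots, x_n) := \sum_{I \in \perm(I_{i_0})} \prod_{i=1}^{n} \frac{x_i^d - 1}{x_i - \omega^{I(i)}}$, which is symmetric in $x_1, \ldots, x_n$ as the sum runs over the full $S_n$-orbit of $I_{i_0}$. Writing $a_i := I_{i_0}(i)$ for the fixed prefix so that $c_i = \omega^{a_i}$ for $i \leq dm$, the same factor-by-factor vanishing from (i) shows that restricting $x_i$ to $c_i$ kills every summand except those $I \in \perm(I_{i_0})$ whose first $dm$ coordinates coincide with the prefix; these are parametrized by $I' \in \perm(I_{i_0}')$, so the restricted sum equals $C \cdot Q(x_{dm+1}, \ldots, x_n)$ for the nonzero constant $C := \prod_{i \leq dm} \prod_{b \neq a_i}(\omega^{a_i} - \omega^b)$. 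Setting $P := R / C$ yields a symmetric polynomial whose restriction at $(c_1, \ldots, c_{dm}, x_{dm+1}, \ldots, x_n)$ equals $Q$.

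The main obstacle is that expanding the factors of $R$ shows that $P$'s monomial support is all of $\{0, \ldots, d-1\}^n$ rather than only $T$, so $P$ is not yet in the prescribed basis $\{\sum_{J \in \perm(J_j)} x^J : J_j \in \rep(T)\}$. To fix this, I would correct $P$ by adding symmetric polynomials supported on $\{0, \ldots, d-1\}^n \setminus T$ that lie in the kernel of the restriction map. Ordering orbits $\perm(J_j)$ by the number of nonzero entries in $J_j$, the restriction of the orbit sum $\sum_{J \in \perm(J_j)} x^J$ at $x_i = c_i$ contributes a nonzero multiple of the orbit sum of $J_j$'s ``zero-prefix tail'' in $x_{dm+1}, \ldots, x_n$, plus strictly lower-weight orbit sums (coming from summands that push some nonzero entries into the first $dm$ coordinates). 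This triangular structure, together with the fact that the diagonal coefficients are nonzero (counting the positive number of permutations that place all nonzeros of $J_j$ in the last $n - dm$ positions, which is always available for $J_j \in T$), lets us cancel the non-$T$ monomials in $P$ orbit-by-orbit from heaviest to lightest weight, producing the desired $P$ supported on $T$ with unchanged restriction $Q$.
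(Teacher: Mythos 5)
The paper does not include a proof of this claim (``it is not hard to prove the above properties''), so there is no in-text argument to compare against; I assess your proposal on its own merits, and it is correct.

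Your proof of (i) is exactly right. For (ii) the ideas are also correct, but the organization is roundabout. The cleanest packaging of your argument is this: the restriction map $\Phi\colon P\mapsto P(c_1,\ldots,c_{dm},x_{dm+1},\ldots,x_n)$ carries the $k$-dimensional span $V$ of the orbit sums $\sum_{J\in\perm(J_j)}x^J$, $J_j\in\rep(T)$, into the $k$-dimensional space $\tilde V$ of symmetric polynomials in $x_{dm+1},\ldots,x_n$ of per-variable degree at most $d-1$; the two dimensions agree because stripping $dm$ zeros gives a bijection $\rep(T)\leftrightarrow\rep(\mathbb{Z}_d^{n-dm})$; your triangularity observation shows $\Phi|_V$ is injective, hence bijective; and the explicit $Q$ is visibly symmetric of per-variable degree at most $d-1$, so $Q\in\tilde V$ and $Q=\Phi(P')$ for a unique $P'\in V$, which is the required $P$. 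Once this is said, the symmetric lift $R$ and the subsequent iterative ``orbit-by-orbit cancellation'' are unnecessary scaffolding --- you do not need to first realize $Q$ as the restriction of \emph{some} symmetric polynomial and then correct its support.

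Two small imprecisions that do not affect validity: first, the diagonal entry in your triangular matrix is exactly $1$, not ``the positive number of permutations placing all nonzeros in the last $n-dm$ positions'' --- for each $J'\in\perm(J_j')$ the only $J\in\perm(J_j)$ with tail $J'$ has the all-zero prefix (because the tail already exhausts the nonzero multiset of $J_j$), and it contributes $\prod_{i\le dm}c_i^{0}=1$. Second, ordering orbits by number of nonzeros is only a coarsening of the sub-multiset order your argument actually invokes; it nonetheless suffices because two orbits of equal weight but distinct nonzero multisets cannot appear in each other's restriction (an equal-cardinality sub-multiset must be the whole thing), so ties cannot break triangularity --- but this point deserves an explicit sentence.
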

We do not include a proof of Claim \ref{claim:hadamard-claim1} here but it is not hard to prove the above properties of the polynomial $Q$.

The above discussion proves that for any symmetric function $f:\mathbb{Z}_d^n\rightarrow \mathbb{C}$, for every $\epsilon\in (0,0.1)$ and sufficiently large $n$, ${WR}_{M_f}(d^{n(1-\epsilon')})\leq d^{n\epsilon}$ for some $\epsilon'>0$ ($\epsilon'$ is a function of $d$ and $\epsilon$).

We now extend this rigidity upper bound to matrices $M_f$ corresponding to functions that are not symmetric:

\begin{theorem}
\label{thm:non-symmetric}
Let $f:\mathbb{Z}_d^n\rightarrow \mathbb{C}$ be any function. For any $\epsilon\in (0,0.1)$ and $n \geq \frac{d^2 (\log d)^2}{\epsilon^4}$, there exists an $\epsilon'= \frac{\epsilon^4}{d^2\log d}$ such that ${WR}_{M_f}(2d^{n(1-\epsilon')})\leq d^{2n\epsilon}$.
\end{theorem}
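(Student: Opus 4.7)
The plan is to reduce Theorem \ref{thm:non-symmetric} to Theorem \ref{thm:gen-hadamard} via a symmetrize-and-decompose strategy. Recall from Observation \ref{obs:rank-roots} (in its equivalent Fourier form) that $\rk(M_f)$ equals the size of $\{L : \hat f(L) \neq 0\}$, and that modifying $f$ on a set $T$ changes $M_f$ on exactly those entries $(I,J)$ with $I+J \in T$, so each row and column of $M_f - M_{f'}$ has at most $|T|$ nonzero entries. The weak-rigidity problem is therefore equivalent to finding $f'$ that agrees with $f$ outside a set of size $d^{2n\epsilon}$ and whose Fourier support has size at most $2d^{n(1-\epsilon')}$.

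First I would decompose $f = f_{\mathrm{sym}} + f_{\mathrm{res}}$, where $f_{\mathrm{sym}}(K) = \frac{1}{|O(K)|}\sum_{K' \in O(K)} f(K')$ is the symmetrization of $f$ under the action of $S_n$ on coordinates, and the residual $f_{\mathrm{res}} = f - f_{\mathrm{sym}}$ satisfies $\sum_{K' \in O(K)} f_{\mathrm{res}}(K') = 0$ for every orbit $O(K)$. Theorem \ref{thm:gen-hadamard} applies directly to $f_{\mathrm{sym}}$ and yields a decomposition $M_{f_{\mathrm{sym}}} = L_1 + S_1$ with $\rk(L_1) \leq d^{n(1-\epsilon')}$ and at most $d^{n\epsilon}$ nonzeros of $S_1$ per row and per column. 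For the residual I would then prove an analogue of Lemma \ref{lem:symmetric} for orbit-mean-zero functions, using the same sets $S$ and $T$ from the proof of Theorem \ref{thm:gen-hadamard}, and obtain $M_{f_{\mathrm{res}}} = L_2 + S_2$ with the same quantitative bounds. Adding these, $M_f = (L_1 + L_2) + (S_1 + S_2)$ with $\rk(L_1 + L_2) \leq 2d^{n(1-\epsilon')}$ and at most $2d^{n\epsilon} \leq d^{2n\epsilon}$ nonzeros per row and column, where the last inequality uses $\epsilon \in (0,0.1)$ and $n$ sufficiently large so that $d^{n\epsilon} \geq 2$.

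The main obstacle is the orbit-mean-zero analogue of Lemma \ref{lem:symmetric}. In the symmetric case, the symmetry of the unknown $f'$ collapses the linear system defining $f'$ to $|\rep(S)|$ equations in $|\rep(T)|$ unknowns, whose solvability was established through the polynomial $Q$ constructed at the end of that proof. For the residual, the orbit-sum-zero constraint on $f'_{\mathrm{res}}$ removes exactly one unknown per orbit of $T$, and simultaneously removes one equation per orbit of $S$ (namely the equation obtained by summing over the orbit, which is automatically zero by orbit-mean-zero). The dimension count should therefore match, and essentially the same polynomial $Q$ should witness nonsingularity of the reduced system. Verifying this bookkeeping carefully—in particular that the orbits in $S$ and $T$ interact with the residual constraint without degeneracies—is where the real work of the proof lies.
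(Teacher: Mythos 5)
Your approach is genuinely different from the paper's, and it has a real gap. The paper does not decompose $f$ into symmetric and residual parts at all; instead it observes (Lemma~\ref{lem:hadamard-prop}) that $M_f = H_{d,n}^{-1}\cdot D\cdot H_{d,n}^{-1}$ for a diagonal $D$, and then invokes Lemma~\ref{lem:diag}: if $B = ADA$ with $D$ diagonal and $\mathrm{WR}_A(r)\leq s$, then $\mathrm{WR}_B(2r)\leq s^2$. Since $H_{d,n}^{-1}$ is a scalar multiple of $\overline{H_{d,n}}$, Theorem~\ref{thm:gen-hadamard} gives $\mathrm{WR}_{H_{d,n}^{-1}}(d^{n(1-\epsilon')})\leq d^{n\epsilon}$, and Lemma~\ref{lem:diag} transfers this to $\mathrm{WR}_{M_f}(2d^{n(1-\epsilon')})\leq d^{2n\epsilon}$ in one line. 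This is why the bounds in the theorem statement are exactly squared/doubled relative to Theorem~\ref{thm:gen-hadamard}: they come directly from Lemma~\ref{lem:diag}, not from adding two independent decompositions.

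The gap in your route is the claimed ``orbit-mean-zero analogue of Lemma~\ref{lem:symmetric},'' and specifically the dimension count. In the symmetric case the system is small on both sides because symmetry collapses the equations from $|\perm(S)|$ down to $|\rep(S)|$: once $P_{f'}$ is a symmetric polynomial, vanishing at one representative per orbit forces vanishing on the whole orbit for free. This is the entire reason the argument works with only $|\rep(T)|$ unknowns. For the orbit-mean-zero residual, the modified function $f'_{\mathrm{res}}$ is not symmetric, so $P_{f'_{\mathrm{res}}}$ is not a symmetric polynomial, and you must impose $P_{f'_{\mathrm{res}}}(\omega^{I}) = 0$ at \emph{every} $I \in \perm(S)$ individually. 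The orbit-mean-zero condition only removes one linear dependency per orbit (the sum over the orbit vanishes), so you are left with on the order of $|\perm(S)| - |\rep(S)| \approx d^n$ independent equations, against only $|T| - |\rep(T)| \approx d^{n\epsilon^2}$ free unknowns. This system is massively overdetermined and has no reason to be consistent; the ``dimension count should therefore match'' step is false. The symmetric case is the special case in which the full $S_n$-invariance collapses an orbit of equations to a single equation; a mere mean-zero constraint buys a savings of one dimension per orbit, not all but one, so the symmetrize-and-decompose strategy does not reduce to the already-proved case. To recover the theorem you need the diagonalization route via Lemma~\ref{lem:diag}, which sidesteps the need to say anything at all about the Fourier support of $f$.
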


The proof of Theorem \ref{thm:non-symmetric} is immediate from Theorem \ref{thm:gen-hadamard}, the following property of Hadamard matrices (mentioned in Lemma \ref{lem:hadamard-prop} proof of which is straightforward from the definition of matrices $H_{d,n}, M_f$ and polynomial $P_f$) and a simple tool that reduces the task of proving non-rigidity of a matrix $B$ to proving non-rigidity of the matrix $A$ that diagonalizes it (mentioned in Lemma \ref{lem:diag}). 

\begin{lemma}
\label{lem:hadamard-prop}
Let $f:\mathbb{Z}_d^n\rightarrow \mathbb{C}$ be any function. Then $D=H_{d,n}\cdot M_f\cdot H_{d,n}$ is a $d^n\times d^n$ diagonal matrix with $D[I,I]=d^n\cdot P_f(\omega^{I})$ where $\omega$ is the $d^{th}$ root of unity.
\end{lemma}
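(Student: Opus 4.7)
The plan is to compute $D[I,J]$ directly from its definition as a triple matrix product and simplify using the basic orthogonality identity for roots of unity. Expanding the product, for any $I,J\in\mathbb{Z}_d^n$,
\begin{equation*}
D[I,J] \;=\; \sum_{K,L\in\mathbb{Z}_d^n} H_{d,n}[I,K]\,M_f[K,L]\,H_{d,n}[L,J] \;=\; \sum_{K,L\in\mathbb{Z}_d^n} \omega^{I\cdot K}\,f(K+L)\,\omega^{L\cdot J}.
\end{equation*}

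Next, I would perform the change of variables $M = K+L$ (so $L = M-K$, working modulo $d$ coordinate-wise) and reindex the sum. This decouples the $K$-sum from the $f$-values, yielding
\begin{equation*}
D[I,J] \;=\; \sum_{M\in \mathbb{Z}_d^n} f(M)\,\omega^{M\cdot J}\,\sum_{K\in\mathbb{Z}_d^n} \omega^{K\cdot(I-J)}.
\end{equation*}
The inner sum factors as a product of $n$ one-variable geometric sums $\sum_{k=0}^{d-1}\omega^{k(i_t-j_t)}$, each of which equals $d$ if $i_t\equiv j_t\pmod d$ and $0$ otherwise. Hence the inner sum equals $d^n$ when $I=J$ and $0$ otherwise, which immediately shows that $D$ is diagonal.

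Finally, I would read off the diagonal entries: for $I=J$,
\begin{equation*}
D[I,I] \;=\; d^n\sum_{M\in\mathbb{Z}_d^n} f(M)\,\omega^{M\cdot I} \;=\; d^n\cdot P_f(\omega^{I}),
\end{equation*}
using the convention $\omega^{I} = (\omega^{i_1},\ldots,\omega^{i_n})$ and $x^M\big|_{x=\omega^I} = \omega^{M\cdot I}$, matching the definition of $P_f$. There is no real obstacle here; the only thing to be careful about is that the index arithmetic is modulo $d$ (so that the substitution $L=M-K$ is a bijection on $\mathbb{Z}_d^n$ for each fixed $K$) and that the root-of-unity orthogonality identity is applied coordinate-wise.
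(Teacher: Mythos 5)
Your computation is correct and is exactly the direct expansion the paper has in mind; the paper itself omits the proof as ``straightforward from the definition,'' and the change of variables $M=K+L$ followed by coordinate-wise root-of-unity orthogonality is the standard way to carry it out.
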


\begin{lemma}
\label{lem:diag}
Let $B=A^{*}DA$ (respectively $B=ADA$) where $A^{*}$ is the conjugate transpose of $A$ and $D$ is a diagonal matrix. If ${WR}_A(r)\leq s$ then ${WR}_B(2r)\leq s^2$. 
\end{lemma}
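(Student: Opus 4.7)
The plan is to decompose $A$ according to its weak-rigidity certificate, expand the product $A^{*}DA$, and regroup terms into a low-rank part and a row/column-sparse part.

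First I would use the hypothesis $WR_A(r)\leq s$ to write $A = S + L$ with $\mathrm{rank}(L)\leq r$ and with both the row-sparsity and column-sparsity of $S$ bounded by $s$. Taking conjugate transposes gives $A^{*} = S^{*} + L^{*}$, where $\mathrm{rank}(L^{*})\leq r$ and the sparsity properties of $S^{*}$ agree with those of $S$ after swapping rows and columns (so $S^{*}$ also has at most $s$ nonzeros in every row and every column). Expanding,
\[
B \;=\; A^{*}DA \;=\; S^{*}DS \;+\; S^{*}DL \;+\; L^{*}DS \;+\; L^{*}DL.
\]

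Next I would bound the rank of the last three summands by grouping them as $A^{*}DL + L^{*}DS$. Since $D$ is diagonal, $DL$ has the same rank as $L$, so $\mathrm{rank}(A^{*}DL)\leq r$; symmetrically $\mathrm{rank}(L^{*}DS)\leq r$. Hence the combined ``error'' has rank at most $2r$. It then suffices to show that the remaining piece $S^{*}DS$ has at most $s^{2}$ nonzeros in each row and each column. For any entry,
\[
(S^{*}DS)_{ij} \;=\; \sum_{k} S^{*}_{ik}\,D_{kk}\,S_{kj},
\]
and for this to be nonzero we need some index $k$ with both $S^{*}_{ik}\neq 0$ and $S_{kj}\neq 0$. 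The $i$-th row of $S^{*}$ is supported on at most $s$ values of $k$, and for each such $k$ the $k$-th row of $S$ is supported on at most $s$ values of $j$; composing gives at most $s^{2}$ possible columns $j$ in the support of the $i$-th row of $S^{*}DS$. By the symmetric argument on columns, each column of $S^{*}DS$ also has at most $s^{2}$ nonzeros. This yields the decomposition $B = (A^{*}DL + L^{*}DS) + S^{*}DS$ witnessing $WR_B(2r)\leq s^{2}$, and the case $B = ADA$ is handled identically by dropping the conjugations.

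There is no real obstacle here; the proof is a short algebraic bookkeeping. The one place to be careful is step three, namely noticing that the three rank-$r$ error terms can be packaged as a sum of only two rank-$r$ matrices by absorbing the mixed term $L^{*}DL$ into $A^{*}DL$. Without this regrouping one would obtain the weaker bound $WR_B(3r)\leq s^{2}$.
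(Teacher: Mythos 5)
Your proof is correct and takes essentially the same route as the paper: both arrive at the decomposition $B = S^{*}DS + A^{*}DL + L^{*}DS$ (the paper via an add-and-subtract manipulation, you by expanding all four cross terms and regrouping) and then apply the same sparsity and rank bounds. The only nit is that ``$DL$ has the same rank as $L$'' should be ``$\mathrm{rank}(DL) \leq \mathrm{rank}(L)$'' since $D$ need not be invertible, but this does not affect the argument.
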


\begin{proof}
If ${WR}_{A}(r)\leq s$ then $A=S+L$ where $\rk(L)\leq r$ and $S$ has at most $s$ non-zero entries in every row and column. Then,
\begin{align*}
B-S^{*}DS &= B-S^{*}DS+A^{*}DS -A^{*}DS     \\
&= A^{*}DA  -S^{*}DS+A^{*}DS -A^{*}DS & [\because B=A^{*}DA] \\
&= A^{*}D(A-S) + (A^{*}-S^{*})DS  \\ 
B &= S^{*}DS + A^{*}D(A-S) + (A^{*}-S^{*})DS
\end{align*}
where the matrix $S^{*}DS$ has at most $s^2$ non-zero entries in each row and column as  $S$ has at most $s$ non-zero entries in every row and column. Further, $\rk ( A^{*}D(A-S) + (A^{*}-S^{*})DS)\leq 2r$ as $\rk(A-S)\leq r$. Therefore, ${WR}_B(2r)\leq s^2$.
\end{proof}

The proof of Theorem \ref{thm:non-symmetric} is immediate though we sketch it here for the sake of completeness.

\noindent{\em Proof of Theorem \ref{thm:non-symmetric}.} By Lemma \ref{lem:hadamard-prop}, $M_f=H_{d,n}^{-1}\cdot D\cdot H_{d,n}^{-1}$. From Theorem \ref{thm:gen-hadamard},  we have $WR_{H_{d,n}}(d^{n(1-\epsilon')}) \leq d^{n\epsilon}$ for any $\epsilon\in (0,0.1)$ and some $\epsilon'$. This immediately implies that $WR_{M_f}(2d^{n(1-\epsilon')}) \leq d^{2n\epsilon}$ from Lemma \ref{lem:diag}.

\paragraph*{A brief note on non-rigidity of Fourier and Circulant matrices.}Although understanding the rigidity of generalized Hadamard matrices is of independent interest,Theorem \ref{thm:gen-hadamard} also acts as a building block in showing that {\em Fourier matrices} are also not rigid which is the main theorem of \cite{DL19}. As {\em Fourier matrix} $F_d$ is the $d\times d$ matrix $H_{d,1}$, the generalized Hadamard matrix $H_{d,n}=  \underbrace{F_d \otimes F_d \cdots \otimes F_d}_{\text{$n$ times}}$. Even 
though we don not include the proof of non-rigidity of Fourier matrices which is quite involved, among other basic blocks it uses Theorem \ref{thm:gen-hadamard} as well as the following interesting lemma which analyses the weak rigidity of tensor product of two matrices:

\begin{lemma}
\label{lem:tensor-product}
Let $A\in\mathbb{F}^{m\times m}$ and $B\in \mathbb{F}^{n\times n}$. Then for any $r_1\leq m$ and $r_2\leq n$,
$WR_{M}(r_1 n+r_2 m)\leq WR_{A}(r_1)\cdot WR_{B}(r_2)$ where $M=A\otimes B$.
\end{lemma}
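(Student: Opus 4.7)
The plan is to start from the defining decompositions guaranteed by weak rigidity and then expand the tensor product by bilinearity, grouping terms so that one piece is sparse (per row/column) and the other is low rank. Let $s_1 = WR_A(r_1)$ and $s_2 = WR_B(r_2)$. By definition there exist $S_A, L_A \in \mathbb{F}^{m \times m}$ with $A = S_A + L_A$, $\rk(L_A) \le r_1$, and every row and column of $S_A$ having at most $s_1$ non-zero entries, and analogously $B = S_B + L_B$ with $\rk(L_B) \le r_2$ and $S_B$ having at most $s_2$ non-zeros per row and per column.

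The next step is to expand
\[
A \otimes B \;=\; (S_A + L_A) \otimes (S_B + L_B) \;=\; S_A \otimes S_B \;+\; \bigl(S_A \otimes L_B + L_A \otimes B\bigr),
\]
where I have deliberately re-absorbed $L_A \otimes S_B + L_A \otimes L_B$ back into $L_A \otimes B$ so that only one term involves $L_A$ and only one ``mixed'' term involves $L_B$. Set $S := S_A \otimes S_B$ and $L := S_A \otimes L_B + L_A \otimes B$, so that $M = S + L$.

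Then I would verify the two required bounds. For sparsity: a row of $S_A \otimes S_B$ indexed by $(i,j)$ is the Kronecker product of the $i$-th row of $S_A$ (support size $\le s_1$) and the $j$-th row of $S_B$ (support size $\le s_2$), so its support has size at most $s_1 s_2$; the same holds column-wise. For rank, using $\rk(X \otimes Y) \le \rk(X)\rk(Y)$ together with $\rk(S_A) \le m$ and $\rk(B) \le n$, we get
\[
\rk(L) \;\le\; \rk(S_A \otimes L_B) + \rk(L_A \otimes B) \;\le\; m \cdot r_2 + r_1 \cdot n.
\]
Combining the two bounds yields $WR_M(r_1 n + r_2 m) \le s_1 s_2 = WR_A(r_1) \cdot WR_B(r_2)$, which is the claim.

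There is no real obstacle here; the only subtle point is the grouping. The ``symmetric'' expansion $S_A\otimes S_B + S_A\otimes L_B + L_A\otimes S_B + L_A\otimes L_B$ is wasteful because splitting off $L_A \otimes L_B$ on its own contributes only $r_1 r_2$ to the rank but forces a crude bound $m r_2 + n r_1 + r_1 r_2$ on the low-rank piece. Re-fusing the $L_A$ terms into a single $L_A \otimes B$ block (or, symmetrically, fusing the $L_B$ terms into $A \otimes L_B$) is what makes the rank bound land exactly at $r_1 n + r_2 m$ as stated.
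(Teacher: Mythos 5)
Your proof is correct and takes essentially the same approach as the paper: both proofs hinge on the telescoping decomposition $A\otimes B = S_A\otimes S_B + S_A\otimes L_B + L_A\otimes B$, with $S_A\otimes S_B$ as the sparse part and the remaining two tensor products bounded in rank by $mr_2 + r_1 n$. Your write-up is in fact a bit cleaner than the paper's (which has a small sign slip in its telescoping step, writing $-S_1\otimes(B+S_2)$ where $-S_1\otimes(B-S_2)$ is meant) and usefully makes explicit why fusing the $L_A$ terms is necessary to avoid an extra $r_1 r_2$ in the rank bound.
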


\begin{proof}
Suppose $WR_{A}(r_1)\leq s_1$ and $WR_{A}(r_2)\leq s_2$ then there exists $S_1,S_2$ of appropriate dimensions such that $\rk(A+S_1)\leq s_1$ and $\rk(B+S_2)\leq s_2$. Now, we want to argue about the rank of $M + (S_1\otimes S_2)$:
\begin{align*}
M+ (S_1\otimes S_2)  &= (A\otimes B) + (S_1\otimes S_2) \\
& = (A\otimes B) + (S_1\otimes B) - (S_1\otimes B ) + (S_1\otimes S_2)\\
&= (A+ S_1)\otimes B - S_1\otimes(B+S_2)
\end{align*}
Thus, $\rk(M+(S_1\otimes S_2))=r_1n+r_2m$ and sparsity of 
$S_1\otimes S_2$ is $s_1s_2$.

\end{proof}

In \cite{DL19}, the authors also prove that {\em circulant matrices} are not rigid. Let $c_0,\ldots,c_{n-1}\in \mathbb{F}$. A matrix $C_n\in \mathbb{F}^{n\times n}$ is said to be {\em circulant} if 
\begin{center}
 $C_n= \begin{bmatrix}
c_0 & c_{n-1} & \cdots & c_2 & c_1 \\
c_1 & c_0 & c_{n-1} & \cdots & c_2 \\
\vdots & \vdots & \vdots & \vdots & \vdots \\
c_{n-1} & c_{n-2} & \cdots & c_1 & c_0 \\
\end{bmatrix}$
\end{center}   
   
Observe that circulant matrix is a special case of Toeplitz matrix. Dvir and Liu\cite{DL19} prove that for sufficiently large $n$, $C_n$ is not rigid. Hence, although rigidity lower bound of Toeplitz matrix in Theorem \ref{thm:toeplitz} is reasonable for much smaller $n$(as noted in Remark \ref{rem:toeplitz}) it is impossible to match the lower bound in Question \ref{que:rigidity-Valiant}.

\begin{remark}
In \cite{DL19} the matrix $M_f$ is given by $M_f [I,J] = f(I+J)$ for $I,J\in \mathbb{Z}_n^d$. However the argument also works for $M_f [I,J] = f(I-J)$ for $I,J\in \mathbb{Z}_n^d$ as the two definitions differ only upto permutation of rows/columns giving the same rigidity bounds. Further, Theorem $\ref{thm:non-symmetric}$ extends the results of \cite{DE17} to the field of complex numbers and the result of \cite{AW17} to arbitrary $d$ while the result in \cite{AW17} is for $d=2$. 
\end{remark}

\section{Matrix rigidity via static data structure lower bounds}
\label{sec:data-structures}
Given a database $X$ of $n$ elements $\{x_1,\ldots,x_n\}$, an  {\em $(s,t)$-data structure} for $X$ is a way to store $X$ into $s$ memory cells so that any query concerning $X$ can be answered effectively in time $t$. Let ${\cal Q}=\{q_1,\ldots,q_m\}$ be a set of $m$ queries on $X$ (usually $m=\poly(n)$). The time to answer a query is the number of cells accessed and computation on the accessed cells is for free. 

There are two trivial static data structures for any problem:
\begin{itemize}
    \item[(i)] Pre-compute answers to all queries in ${\cal Q}$ and store them in space $\poly(n)$ as $|\mathcal{Q}| =\poly(n)$. In this case, any query in ${\cal Q}$ can be answered in constant time.
    \item[(ii)] Store the entire database $X$ in memory using $n$ memory cells and for every query in ${\cal Q}$ compute the answer by performing a linear search on the memory (as query answer may depend on all inputs). In this case, both space and time are linear.  
\end{itemize}
In this regard, one major goal  is to understand {\em time-space tradeoffs}. That is, can we get better (sub-linear) upper bounds on the query time against linear space for static data structures? Standard counting arguments show that for most data structure problems either time is $|X|^{0.99}$ or space is $|Q|^{0.99}$. Further, there exists explicit static data structure problems such that any data structure that uses space $O(n)$ requires time  $\Omega(\log n)$ to answer queries in ${\cal Q}$ where $|{\cal Q}|=\poly(n)$ (see \cite{PTW10,Lar14} for details).  This brings us to the following question: 
%In this regard, a major open question is the following: \\
 
 \begin{question}
\label{que:ds-lowerbound}
 Does there exist an explicit data structure problem $P$ such that any $(O(n),t)$-data structure for $P$ requires $t=\omega(\log n)$?
 \end{question}

The above question is quite challenging and this difficulty in proving explicit data structure lower bounds is justified as data structures correspond to circuits with arbitrary gates. See Figure \ref{fig:ds-circuit} for a pictorial representation of the following discussion. An $(s,t)$-data structure for a database $X$ containing  $n$ field elements $\{x_1,\ldots,x_n\}$ can be viewed as a depth-$2$ circuit whose leaf gates are elements of $X$. The  middle layer consists of $s$ gates of unbounded fan-in representing the $s$ memory cells and the top layer consists of $m$ gates representing queries $q_1,\ldots,q_m$ in ${\cal Q}$. As the data structure is allowed to take time $t$ on any query $q\in {\cal Q}$, the fan-in of the gates in the top layer are bounded by $t$. The mapping of elements in $X$ to $s$ memory cells can be viewed as a function $P:\mathbb{F}^n\rightarrow \mathbb{F}^s$ ($P$ stands for {\em pre-processing function}) and the memory cells associated with queries in top layer gates can be viewed as a  function $Q:\mathbb{F}^s\rightarrow \mathbb{F}^m$ ($Q$ stands for {\em query function}). Note that this correspondence between $(s,t)$-data structure for $X$ and an $m$-output unbounded top fan-in depth-$2$ circuit of width $s$ with arbitrary gates holds only when the queries in ${\cal Q}$ are non-adaptive.

\begin{figure}[H]
\begin{center}
\includegraphics[scale=0.9]{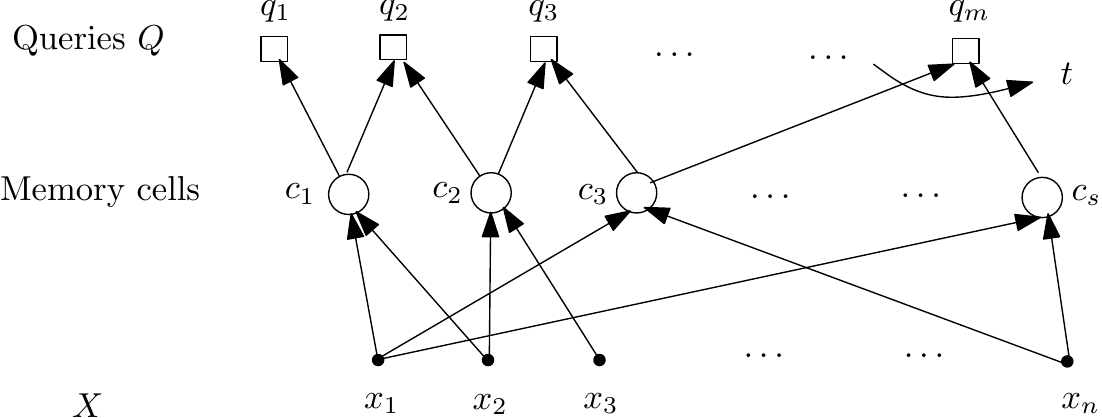}
\caption{Data structure viewed as a depth-$2$ circuit with arbitrary gates}
\centering
\label{fig:ds-circuit}
\end{center}
\end{figure}

\begin{remark}
Throughout this section, query time is measured  by the number of cells probed where each cell is capable of holding multiple bits. This measure was introduced by Yao in \cite{Yao81}. However, there is yet another interesting data structure model called the {\em bit-probe model} introduced in \cite{EF75} in which query time is measured by the number of bits accessed to answer the query. In this article we will work with the cell-probe model. 
\end{remark}

This correspondence between data structures and circuits with arbitrary gates hints that proving data structure lower bounds are considerably hard. Hence, it is reasonable to place certain restriction on the data structure to get better lower bounds. In this regard, Dvir et al.\ in \cite{DGW19} consider static data structures with the following restrictions:
\begin{itemize}
    \item The database $X=\{x_1,\ldots,x_n\}$ contains elements from $\mathbb{F}$.
    \item The data structure can perform only linear operations on the database $X$. That is,  $P:\mathbb{F}^n\rightarrow \mathbb{F}^s$ and  $Q:\mathbb{F}^s\rightarrow \mathbb{F}^m$ are linear functions.
\end{itemize}
In this case, the $m$ queries $\{q_1,\ldots,q_m\}$ in ${\cal Q}$ can be viewed as  $m$ rows $R_1,\ldots,R_m$ of a matrix $M\in \mathbb{F}^{m\times n}$. Whenever query $q_i$ is raised, the data structure returns the inner product $\inner{R_i}{X}$, an element in $\mathbb{F}$ (here $X= (x_1 ~ x_2 ~\cdots ~ x_n)$ is viewed as a vector). A data structure for the set of queries in ${\cal Q}$ using space $\leq s$ and query time $\leq t$ with $P,Q$ being linear functions is called an {\em $(s,t)$-linear data structure} for $M$.

%In \cite{DGW19}, the authors show that a super-logarithmic lower bound on the query time of a linear data structure with linear space implies an answer to Question \ref{quest:rigidity-Valiant} where the explicit matrix of high rigidity is in the class $\P^{\NP}$. More formally:

In \cite{DGW19}, the authors demonstrate a connection between the answers to  Question \ref{que:rigidity-Valiant} and  Question \ref{que:ds-lowerbound}. In particular Dvir et al.\ prove the following theorem:

For the rest of this section, we will need a notion of rigidity weaker than matrix rigidity called {\em row-rigidity}. The row-rigidity of a matrix $M$ for rank $r$(denoted by $RR_M(r)$) is $s$ if the rank can be reduced to $r$ by changing at most $s$ entries in every row. The row-rigidity of a matrix is seemingly weaker than rigidity and stronger than weak rigidity. A matrix $M$ is {\em $t$-row sparse} if every row of $M$ has at most $t$ non-zero entries.

\begin{theorem}
\label{thm:rigidity-ds}
Let $\epsilon,\delta>0$ be constants. Let $M \in\mathbb{F}^{m\times n}$ be a matrix such that there is no $(\frac{n}{1-\epsilon},(\log n)^c)$ linear data structure for $M$. Then for some $n' \geq \alpha \cdot (\log n)^{c-1}$ there exists a matrix $M\in \mathbb{F}^{m\times n'}$ such that $RR_{M'}(\epsilon n')\geq (\log n)^{c-1}$. In fact, $M'$ is a sub-matrix of $M$ and when $M$ is explicit $M'$ is in $\P^{\NP}$.\end{theorem}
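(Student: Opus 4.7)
Proof proposal:

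I would prove the contrapositive. Assume that for every $n' \geq \alpha (\log n)^{c-1}$ and every column sub-matrix $M' \in \mathbb{F}^{m \times n'}$ of $M$ we have $RR_{M'}(\epsilon n') < (\log n)^{c-1}$; under this assumption I will construct an $(n/(1-\epsilon), (\log n)^c)$-linear data structure for $M$. Set $t := (\log n)^{c-1}$; this will be the per-level sparse budget.

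The core of the argument is a single-level decomposition identity that keeps the recursion restricted to honest column restrictions of $M$. Apply the hypothesis to $M$ itself to obtain $M = S + L$ with $\rk(L) \leq \epsilon n$ and every row of $S$ of sparsity at most $t$. Since the columns of $L$ span a space of dimension $\leq \epsilon n$, pick a set $C \subseteq [n]$ of size $\epsilon n$ such that each column $L|_j$ with $j \notin C$ is a linear combination $\sum_{k \in C} \alpha_{jk}\, L|_k$. Setting $Z_k := \sum_{j \notin C} \alpha_{jk} X_j$ and $W_k := X_k + Z_k$ for $k \in C$, a direct calculation (substitute $L|_j$, replace $L|_k = M|_k - S|_k$, and regroup) yields, for every row index $i$,
\[
\sum_{j \in [n]} M_{ij} X_j \;=\; \sum_{k \in C} M_{ik}\, W_k \;+\; \sum_{j \notin C} S_{ij}\, X_j \;-\; \sum_{k \in C} S_{ik}\, Z_k .
\]
The first sum is exactly a linear data-structure query on the sub-matrix $M|_C \in \mathbb{F}^{m \times \epsilon n}$ with length-$\epsilon n$ input $W$; and since row $i$ of $S$ has total sparsity at most $t$, the last two sums together contain at most $t$ non-zero summands.

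To assemble the data structure, I store $X$ verbatim ($n$ cells) together with $W \in \mathbb{F}^{\epsilon n}$ as additional cells, and recurse on $M|_C$ with input $W$. Because $M|_C$ is itself a column sub-matrix of $M$, the row-rigidity hypothesis applies again, yielding a nested chain $M|_{C_0} \supseteq M|_{C_1} \supseteq \cdots$ of column restrictions of $M$ with $|C_j| = \epsilon^{j+1} n$. The recursion halts after $k = O(\log n / \log(1/\epsilon))$ levels, once the column count falls below $\alpha (\log n)^{c-1}$, and the residual input is then stored verbatim so that queries at that level cost $O(t)$ reads. Summing over levels, total storage is $n \sum_{j \geq 0} \epsilon^j = n/(1-\epsilon)$ cells; and at each level one performs at most $t$ sparse reads, each costing $O(1)$ actual cell lookups (recovering $Z_k$ as $W_k - X_k$ is two lookups), so total query time is $O(tk) + O(t) = O((\log n)^c)$.

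The hard part, which the identity above is designed to side-step, is that a naive rank factorisation $L = AB$ would force the recursion onto the derived matrix $A \in \mathbb{F}^{m \times \epsilon n}$, which in general is not a column restriction of $M$; the sub-matrix row-rigidity hypothesis would then fail to apply and the recursion would collapse. Keeping every sub-problem equal to a column restriction of $M$ is precisely what preserves the hypothesis across the recursion. For the $\P^{\NP}$ clause: when $M$ is explicit, run the construction, and whenever the hypothesis fails to produce a valid decomposition at some level the current $M|_{C_j}$ is the desired witness $M'$; locating the first failing level reduces to asking an $\NP$ oracle whether, for a given column subset, a sparse-plus-low-rank decomposition with the prescribed parameters exists (guess the rank factors and the sparse perturbation, verify the sparsity and rank bounds), so the witness $M'$ can be produced by a deterministic polynomial-time machine with access to an $\NP$ oracle.
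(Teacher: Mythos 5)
Your proposal is correct and, at its mathematical core, takes the same route as the paper: argue the contrapositive, assume every column sub-matrix with at least $\alpha(\log n)^{c-1}$ columns admits a sparse-plus-low-rank decomposition with per-row sparsity $< t := (\log n)^{c-1}$ and rank $\leq \epsilon n'$, choose a set $C$ of $\epsilon n'$ columns that spans the low-rank part, and recurse on the column restriction $M|_{C}$. The one substantive difference is presentation: the paper first translates both ends of the equivalence into linear-algebraic dimension quantities --- an $(s,t)$-linear data structure for $M$ exists iff $\outdim_M(t)\leq s$, and $RR_M(r)\leq t$ iff $\indim_M(t)>\rk(M)-r$ --- and then iterates a matrix factorization $M=A\cdot B + M'\cdot C$ with $A$ being $t$-row-sparse and $M'$ a column sub-matrix, whereas you dispense with the inner/outer-dimension vocabulary and write the query identity
\[
\sum_{j} M_{ij}X_j = \sum_{k\in C} M_{ik}W_k + \sum_{j\notin C} S_{ij}X_j - \sum_{k\in C} S_{ik}Z_k,\qquad W_k = X_k + Z_k,\ Z_k=\sum_{j\notin C}\alpha_{jk}X_j,
\]
directly, with the stored cells being the cascade $(X,W^{(1)},W^{(2)},\ldots)$ of total size $n/(1-\epsilon)$ and per-level probe cost $O(t)$. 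Your version is more concrete --- the identity makes it transparent why the recursion must stay on honest column restrictions (so the non-rigidity hypothesis keeps applying), which you correctly flag as the pivotal point --- while the paper's version buys a modular duality between data-structure existence ($\outdim$) and row-rigidity ($\indim$) that it also uses elsewhere, at the cost of more abstraction and a slightly opaque decomposition lemma. The $\P^{\NP}$ clause is handled equivalently: you locate the first column restriction at which the guessed sparse-plus-low-rank decomposition fails to exist using an $\NP$ oracle; the paper does the same through the $\NP$-membership of computing $\indim$.
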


\begin{remark}
Although the above theorem relates data structure lower bounds to rigidity of rectangular matrices an analogous theorem also holds in the case of square matrices (see Theorem 2 in \cite{GPW18} for the exact statement). In fact, a query lower bound of $t$ on linear space data structure translates to row rigidity lower bound of $\frac{t}{\log n}$.  
\end{remark}

In the rest of this section, we provide the reader intuition as to why this connection between static linear data structure lower bounds and matrix rigidity is true and sketch the details of the proof. We begin with the following simple observation(whose proof intuitively follows from Figure \ref{fig:ds-circuit}):

\begin{obs}
\label{obs:ds-odim}
Let there be an $(s,t)$-linear data structure for $M\in\mathbb{F}^{m\times n}$. Then, $M=Q\cdot P$ where $Q\in\mathbb{F}^{m\times s}$ is  a $t$-row sparse matrix and $P\in\mathbb{F}^{s\times n}$.
\end{obs}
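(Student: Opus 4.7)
The plan is to unpack the definition of an $(s,t)$-linear data structure directly, translating each of its two linear stages into a matrix and then reading off the stated factorization. Since $P:\mathbb{F}^n\to \mathbb{F}^s$ and $Q:\mathbb{F}^s\to \mathbb{F}^m$ are both linear, each is given by a matrix in the respective dimensions, which I will (abusing notation) denote by $P \in \mathbb{F}^{s\times n}$ and $Q \in \mathbb{F}^{m\times s}$. Here column $j$ of $P$ records how input $x_j$ is distributed across the $s$ memory cells, and row $i$ of $Q$ records the linear combination used by the $i$-th query over the contents of memory.

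The first key step is to extract sparsity of $Q$ from the time bound. By definition, answering query $q_i$ consults at most $t$ memory cells of the preprocessed state $PX \in \mathbb{F}^s$. Because $Q$ is linear, the only coordinates of $PX$ that can affect the output $(QPX)_i$ are those in the support of row $i$ of $Q$, so we may assume without loss of generality that row $i$ of $Q$ has at most $t$ non-zero entries (any "phantom" non-zero entry outside the $t$ probed cells would be a cell we could omit from the probe set without changing the query output, contradicting minimality of the implementation, or we simply re-define $Q$ by zeroing those entries). This makes $Q$ a $t$-row sparse matrix in $\mathbb{F}^{m\times s}$.

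The second step is correctness: for every database $X \in \mathbb{F}^n$, the data structure must output $MX$, since the $i$-th row of $M$ is precisely the query vector $R_i$ and the answer to query $q_i$ is $\langle R_i, X\rangle = (MX)_i$. The data structure computes $Q(PX) = (QP)X$, so the correctness constraint reads
\[
(QP)X \;=\; MX \qquad \text{for every } X \in \mathbb{F}^n.
\]
Since this linear identity holds on all of $\mathbb{F}^n$, we conclude $QP = M$, giving the desired factorization $M = Q\cdot P$ with $Q \in \mathbb{F}^{m \times s}$ being $t$-row sparse and $P \in \mathbb{F}^{s \times n}$.

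There is essentially no obstacle here: the whole content of the observation is the dictionary between linear data structures and depth-$2$ linear circuits suggested by Figure~\ref{fig:ds-circuit}. The only mild subtlety is justifying the sparsity assumption on $Q$ rigorously; I would handle it by defining $Q$ as the matrix whose $(i,c)$-entry is the coefficient with which cell $c$ contributes to the answer of query $i$, so that non-accessed cells contribute $0$ by definition and the $t$-sparsity of each row is immediate from the query-time bound.
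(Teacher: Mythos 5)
Your proof is correct and matches the approach the paper gestures at: the paper itself gives no formal argument for this observation, simply remarking that it "intuitively follows from Figure \ref{fig:ds-circuit}," and your write-up is exactly the unpacking of that picture — identifying the two linear stages with matrices, reading off $t$-row sparsity of $Q$ from the probe bound, and concluding $QP=M$ from correctness on all inputs. The cleanest route to the sparsity claim is the one you give in your last paragraph (define $Q_{ic}$ as the coefficient of cell $c$ in the answer to query $i$, so non-probed cells are zero by construction), rather than the slightly awkward "phantom entry" argument in the middle.
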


Now, we discuss a linear algebraic characterization of the existence of efficient data structures. Let $M\in\mathbb{F}^{m\times n}$ be such that $M=Q\cdot P$ where $Q\in\mathbb{F}^{m\times s}$ is  a $t$-row sparse matrix and $P\in\mathbb{F}^{s\times n}$. If we denote by $V$ the column space of matrix $M$ then there exists a subspace $U \triangleq \colsp(Q)$ of $\mathbb{F}^m$ such that $V\subseteq U$ and $U$ is a $t$-sparse vector space\footnote{A vector space $U\subseteq \mathbb{F}^m$ is $t$-sparse if it can be expressed as the column space of a matrix that is $t$-row sparse.} (as $Q$ is a $t$-row sparse matrix).  This leads us to the definition of the {\em outer-dimension} of a vector space. Informally, the outer dimension of a vector space $V$ is the dimension of the smallest $t$-sparse vector space containing (outer of) $V$. More formally, we define the {\em outer-dimension} of a vector space $V$ with respect to sparsity parameter $t$ (denoted by $\outdim_{V}(t)$) as $\min\limits_{U}\{\dim(U)\mid V\subseteq U, \text{$U$ is $t$-sparse}  \}$. In this article, for ease of notation we refer to $\outdim_{M}(t)$ to denote the outer-dimension of vector space $V$ where $V$ is $\colsp(M)$.

From the above discussion and Observation \ref{obs:ds-odim}, it is clear that if there is an $(s,t)$-linear data structure for $M$ then $\outdim_{M}(t)\leq s$. Now, consider the converse. If $\outdim_{M}(t)\leq s$ for some matrix $M\in \mathbb{F}^{m\times n}$ then by definition there exists $U\subseteq \mathbb{F}^m$ of dimension at most $s$ such that $V\subseteq U$ and $U$ is $t$-sparse (here, $V=\colsp(M)$). Let $Q\in\mathbb{F}^{m\times s}$ be such that $U$ is $\colsp(Q)$. As $V\subseteq U$, every column of $M$ can be expressed as a linear combination of the columns of $Q$. Hence $M=Q\cdot P$ where where $Q\in\mathbb{F}^{m\times s}$ is  a $t$-row sparse matrix and $P$ is a matrix in $\mathbb{F}^{s\times n}$. From the circuit view of data structures mentioned earlier this immediately gives an $(s,t)$ data structure for $M$. Hence, outer-dimension of a matrix $M$ characterizes the existence of an efficient linear data structure for $M$:

\begin{obs}
\label{obs:outdim}
There is an $(s,t)$-linear data structure for $M$ if and only if $\outdim_{M}(t)\leq s$.
\end{obs}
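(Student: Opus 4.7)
The plan is to prove the two directions of the equivalence separately, each following closely from Observation~\ref{obs:ds-odim} and the circuit view of linear data structures in Figure~\ref{fig:ds-circuit}. The statement is essentially a dictionary between the data structure model and a linear-algebraic dimension notion, with the parameters matching as follows: $s$ corresponds to both the number of memory cells and the dimension of the covering subspace, while $t$ corresponds to both the query time and the row-sparsity of the spanning matrix.

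For the forward direction, I would start from an $(s,t)$-linear data structure for $M$ and apply Observation~\ref{obs:ds-odim} to decompose $M = Q \cdot P$ with $Q \in \mathbb{F}^{m \times s}$ being $t$-row sparse and $P \in \mathbb{F}^{s \times n}$. Setting $U = \colsp(Q)$, I would verify three things: (i) $\dim(U) \leq s$, since $Q$ has $s$ columns; (ii) $U$ is $t$-sparse, since it is exhibited as the column space of the $t$-row sparse matrix $Q$; and (iii) $V = \colsp(M) \subseteq U$, because each column of $M$ is a linear combination of columns of $Q$ with coefficients read off from the corresponding column of $P$. Together these give $\outdim_M(t) \leq s$.

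For the backward direction, I would unfold the hypothesis $\outdim_M(t) \leq s$ to obtain a $t$-sparse subspace $U \subseteq \mathbb{F}^m$ of dimension at most $s$ containing $V = \colsp(M)$. The $t$-sparsity of $U$ supplies a matrix $Q \in \mathbb{F}^{m \times s}$ that is $t$-row sparse and whose column space equals $U$. Since each column of $M$ lies in $\colsp(Q)$, I can collect the coefficient vectors into $P \in \mathbb{F}^{s \times n}$ satisfying $M = Q \cdot P$. This decomposition yields an $(s,t)$-linear data structure in the natural way: the linear preprocessing function maps $X \in \mathbb{F}^n$ to $P X \in \mathbb{F}^s$ (stored in $s$ memory cells), and the linear query function maps $P X$ to $Q \cdot (P X) = M X$, whose $i$-th coordinate is exactly $\inner{R_i}{X}$. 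Because $Q$ is $t$-row sparse, each query coordinate depends on at most $t$ of the $s$ memory cells, giving query time $t$.

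I do not anticipate any substantial obstacle; the only mild care needed is in the backward direction, where I want to ensure that the property ``$U$ is $t$-sparse'' (a property of a vector space) really does produce a spanning matrix with the required $t$-row sparsity. This, however, is immediate from the definition of $t$-sparse vector spaces recalled in the footnote just before the statement, so the whole argument is a matter of carefully transcribing the definitions in both directions.
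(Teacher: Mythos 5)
Your proposal matches the paper's argument essentially line for line: the forward direction via Observation~\ref{obs:ds-odim} and taking $U = \colsp(Q)$, the backward direction by unpacking the definition of $\outdim$ to recover a $t$-row sparse $Q$ with $\colsp(M) \subseteq \colsp(Q)$ and reading off $P$, then invoking the circuit view. Both the reasoning and the parameter bookkeeping are correct and identical to the paper's.
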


Recall that the goal is to understand the connection between matrix rigidity and data structures. Similar to the notion of {\em low} outer-dimension for efficient data structures, we give a linear algebraic characterization of rigid matrices. Let $M\in\mathbb{F}^{m\times n}$ be a matrix that is not row rigid (i.e., $RR_M(r)\leq t$). Then there exists matrices $S,L\in \mathbb{F}^{m\times n}$ such that every row of $S$ has at most $t$ non-zero entries and $\rk(L)\leq r$. Let $V\triangleq \colsp(M), U\triangleq\colsp(S)$ and $W\triangleq\colsp(L)$ and we have that $V=U+W$. Observe that $U$ is a $t$-sparse vector space and that $U+V \subseteq U+W$. Thus,
\begin{align*}
\dim(U+V) &\leq \dim(U+W)    \\
\dim(U)+\dim(V)-\dim(U\cap V) &\leq \dim(U)+\dim(W)-\dim(U\cap W) \\
&\leq \dim(U)+\dim(W) \\
\dim(U\cap V) &\geq \dim(V)-\dim(W) \\
&\geq \rk(M)-r
\end{align*}

Hence, whenever the row rigidity of a matrix $M$ for rank $r$ is at most $t$, there exists a $t$-sparse vector space $U$ that intersects $\colsp(M)$ in {\em a large number of dimensions}. This precisely leads us to the definition of {\em inner-dimension} of a vector space. The {\em inner-dimension} of a vector space $V$ with respect to sparsity parameter $t$ (denoted by $\indim_{V}(t)$) is defined as $\max\limits_{U}\{\dim(U\cap V)\mid \dim(U)\leq \dim(V), \text{$U$ is $t$-sparse}  \}$. In this article, for ease of notation we denote by $\indim_{M}(t)$ to denote the inner-dimension of vector space $V$ where $V$ is $\colsp(M)$. Before we move on, we make a remark on the complexity of computing the inner dimension of a given matrix (we will use this to prove Theorem \ref{thm:rigidity-ds}).

\begin{obs}
\label{obs:inner-dim}
Let ${\sf InnerDim}(M,d,t)$ denote the problem of deciding if $\indim_{M}(t)\geq d$. It is not very difficult to observe that ${\sf InnerDim}(M,d,t)$ is in $\NP$. Let $V=\colsp(M)$ and $\dim(V)=\rk(M)$. Given a witness $N$ in $\mathbb{F}^{m\times n}$ that is a $t$-row sparse matrix, the $\NP$ algorithm ${\cal A}$ verifies if $\dim(U\cap V) \geq d$ where $U=\colsp(N)$. That is, ${\cal A}$ computes $\dim(U)+\dim(V)-\dim(U+V) = \rk(M)+\rk(N)-\rk(NM)$ and test if this is at least $d$. This verification can be done in polynomial time implying that ${\sf InnerDim}(M,d,t)\in \NP$. \end{obs}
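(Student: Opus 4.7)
The plan is to exhibit a polynomial-time verifier that certifies membership in ${\sf InnerDim}(M,d,t)$. On input $(M,d,t)$ with $M \in \mathbb{F}^{m \times n}$, the intended witness is a matrix $N \in \mathbb{F}^{m \times k}$ (for some $k \leq \rk(M) \leq \min(m,n)$) such that every row of $N$ contains at most $t$ non-zero entries. The columns of $N$ are meant to span the subspace $U$ that realizes the inner dimension. The total number of non-zero entries in $N$ is at most $tm$, so (assuming field elements have polynomial-size representation, as is the case over $\mathbb{F}_q$ or $\mathbb{Q}$ with bounded-bit entries) the witness has polynomial size in the input.

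Given the witness $N$, the verifier performs two checks: (i) confirm that $N$ is $t$-row sparse and that $k \leq \rk(M)$, both of which are immediate in polynomial time; and (ii) verify that $\dim(U \cap V) \geq d$, where $U = \colsp(N)$ and $V = \colsp(M)$. For the dimension check I would invoke the standard identity
\[
\dim(U \cap V) \;=\; \dim(U) + \dim(V) - \dim(U + V).
\]
Here $\dim(U) = \rk(N)$ and $\dim(V) = \rk(M)$, while $U + V$ is precisely the column span of the horizontal concatenation $[N \mid M] \in \mathbb{F}^{m \times (k+n)}$, so $\dim(U+V) = \rk([N \mid M])$. Thus the verifier computes the three ranks $\rk(N)$, $\rk(M)$, $\rk([N \mid M])$ by Gaussian elimination and accepts iff $\rk(N) + \rk(M) - \rk([N \mid M]) \geq d$.

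Correctness is immediate from the definition of $\indim_V(t)$: if $\indim_V(t) \geq d$, then there exists a $t$-sparse subspace $U$ with $\dim(U) \leq \dim(V)$ and $\dim(U \cap V) \geq d$; choosing $N$ to be any $t$-row sparse matrix whose column span equals $U$ yields an accepting witness. Conversely, any $N$ accepted by the verifier certifies a witnessing subspace $U = \colsp(N)$.

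There is no substantive obstacle here — the only mild subtlety is ensuring the witness is polynomially bounded in bit-length. Over finite fields this is automatic. Over $\mathbb{Q}$ one would observe that if some $t$-sparse $U$ of the requisite dimension exists, then a basis of $U$ with polynomially bounded bit-complexity also exists (by a standard argument using Cramer's rule on a non-singular $\dim(U) \times \dim(U)$ minor of a basis matrix). This keeps the verification squarely inside \NP under the usual computational model for field arithmetic.
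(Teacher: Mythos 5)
Your proposal is correct and follows the same route as the paper: guess a $t$-row sparse matrix $N$ spanning a candidate subspace $U$ and verify $\dim(U\cap V)\ge d$ via $\rk(N)+\rk(M)-\rk([N\mid M])$. You are in fact a bit more careful than the paper's terse sketch --- you explicitly enforce $\dim(U)\le\dim(V)$, write the concatenation $[N\mid M]$ where the paper's $\rk(NM)$ is clearly a typo for it, and note the bit-complexity issue over $\mathbb{Q}$ --- but these are refinements of the same argument, not a different one.
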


From the preceding discussion, it is clear that if $M$ is not a row rigid matrix then $M$ has a high inner-dimension. Apparently, the converse is also true.

Suppose $\indim_{M}(t)> \rk(M)- r$ for some $r$. Then by definition, there exists a $t$-sparse vector space $U \subseteq\mathbb{F}^m$ with $\dim(U)\leq \dim(V)$ and $\dim(U\cap V)> \rk(M) - r$ where $V$ is $\colsp(M)$. This means that there exists a subspace $W\subseteq \mathbb{F}^m$ with $\dim(W)<r$ such that $V=U+W$. As $U$ is a $t$-sparse vector space there is a a $t$-row sparse
matrix $A$ such that the columns of $A$ span the space $U$. Since $V=U+W$ there is a matrix $B$ of rank less than $r$ satisfying $M=AT+B$ for some $T\in GL(n,\mathbb{F})$. As $T$ is invertible, $MT^{-1}=A+BT^{-1}$ and the rank of $MT^{-1}$ can be reduced to $r$ by changing at most $t$ entries in each row. Thus, $RR_M(r)\leq  t$ as $\rk(MT^{-1})=\rk(M)$.

At the end of the above discussion on inner dimension of spaces we observe the following: 

\begin{obs}
\label{obs:inndim}
Let $M\in\mathbb{F}^{m\times n}$ be a matrix. $RR_M(r)> t$ if and only if $\indim_M(t) \leq \rk(M)- r$.
\end{obs}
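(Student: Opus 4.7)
The plan is to prove both directions of the biconditional by translating between the matrix-decomposition viewpoint (row rigidity) and the subspace-geometry viewpoint (inner dimension), mirroring the discussion immediately preceding the observation. Throughout, let $V = \colsp(M)$; I will argue the contrapositive form, namely $RR_M(r) \leq t \iff \indim_M(t) > \rk(M) - r$.

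For the forward direction, I would start from a witness decomposition $M = S + L$ with $S$ being $t$-row sparse and $\rk(L) \leq r$, and set $U = \colsp(S)$ and $W = \colsp(L)$. Since every column of $M$ lies in $U + W$, we have $V \subseteq U + W$, and the modular-law computation $\dim(U \cap V) = \dim(U) + \dim(V) - \dim(U+V) \geq \dim(V) - \dim(W) \geq \rk(M) - r$ shows that $U$ intersects $V$ in dimension at least $\rk(M) - r$. The subspace $U$ is $t$-sparse by construction, and after trimming to a suitable subspace to meet the definitional constraint $\dim(U) \leq \dim(V)$, it witnesses $\indim_M(t) \geq \rk(M) - r$.

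For the converse, suppose $\indim_M(t) > \rk(M) - r$, so there is a $t$-sparse $U$ with $\dim(U \cap V) > \rk(M) - r$. I would complete a basis of $U \cap V$ to a basis of $V$ by adjoining vectors from a subspace $W \subseteq V$ with $\dim(W) < r$. Writing $U = \colsp(A)$ for some $t$-row sparse $A$, each column of $M$ admits a decomposition via $V = (U \cap V) + W$, assembling into an identity $M = AT + B$ for an invertible $T \in GL(n,\mathbb{F})$ (encoding the basis change) and a $B$ with $\rk(B) \leq \dim(W) < r$. Right-multiplying by $T^{-1}$ yields $MT^{-1} = A + BT^{-1}$, where $A$ is $t$-row sparse and $\rk(BT^{-1}) < r$; since $\rk(MT^{-1}) = \rk(M)$ and the relevant row-rigidity notion is intrinsic to $\colsp(M)$, this translates back to $RR_M(r) \leq t$.

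The main obstacle is the careful bookkeeping required to move between the matrix picture and the subspace picture: a subspace of a $t$-sparse vector space need not itself be $t$-sparse, so the trimming step in the forward direction must produce an explicit $t$-row sparse generator rather than simply truncating one, and the final invocation that row rigidity is unchanged by the invertible column operation $T$ must be justified directly from the definitions rather than assumed. A secondary subtlety is the strict versus non-strict boundary at $\dim(U \cap V) = \rk(M) - r$, which is handled by exploiting the integrality of both $RR_M$ and $\indim_M$ to align the two sides of the equivalence.
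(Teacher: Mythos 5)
Your proposal tracks the paper's own argument closely in both directions: the forward implication uses the modular-law computation on $\dim(U \cap V)$ with $U = \colsp(S)$, and the converse completes a basis of $U \cap V$ to a basis of $V$, assembles a decomposition $M = AT + B$ with $A$ being $t$-row sparse and $\rk(B) < r$, and then right-multiplies by $T^{-1}$. The two subtleties you flag at the end are also the right ones to flag, so the reconstruction is essentially faithful.

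Where I would push back is on how you characterize the second subtlety. You write that the invariance of row rigidity under the invertible column operation $T$ ``must be justified directly from the definitions rather than assumed.'' With the definition of $RR_M$ as actually given in the paper---the minimum per-row sparsity of an $S$ with $\rk(M - S) \leq r$---this invariance simply fails. Take $M$ to have a single nonzero entry in position $(1,1)$ and let $T$ be any invertible matrix whose first row is dense: then $RR_M(0) = 1$ while $RR_{MT}(0) = n$, because right-multiplying a $t$-row-sparse witness $S$ by $T$ destroys row sparsity. So the paper's concluding sentence (``Thus, $RR_M(r)\leq t$ as $\rk(MT^{-1})=\rk(M)$'') is not a valid deduction from those definitions, and your plan inherits the same gap. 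The fix is not a justification but a redefinition: the quantity used in \cite{DGW19} is intrinsic to $\colsp(M)$ (one minimizes over invertible column changes $T$), which is consistent with the data-structure motivation, since right-multiplying $M$ by $T$ amounts to preprocessing the database by $T^{-1}$ at no cost. Once that convention is made explicit, the $T^{-1}$ step is immediate rather than needing proof. Relatedly, integrality alone does not upgrade the forward bound $\dim(U\cap V) \geq \rk(M) - r$ to the strict inequality $\indim_M(t) > \rk(M)-r$ demanded by the contrapositive; you also need to fix the convention in $RR_M(r)\leq t$ (does the rank drop to $\leq r$ or to $< r$?), and only the latter makes the boundary case come out clean.
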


In summary, there is no efficient $(s,t)$-linear data structure for $M$ if and only if $M$ has high outer-dimension and $M$ is a strongly rigid matrix if and only if $M$ has low inner-dimension. Hence, in order to prove Theorem \ref{thm:rigidity-ds}, it is enough to show that high outer-dimension of a matrix $M$ implies the existence of a sub-matrix of $M$ having low inner-dimension. \\

\noindent\textit{Proof Sketch of Theorem \ref{thm:rigidity-ds}.} We begin with the following claim that matrices having {\em large} outer-dimension have large enough sub-matrices of {\em small} inner-dimension.

\begin{claim}
\label{claim:indim-outdim}
Let $t,k\in \mathbb{Z}^{+}$ and $\epsilon\in (0,1)$ and $M\in \mathbb{F}^{m\times n}$. If $\outdim_M(tk+n\epsilon^k)\geq \frac{n}{1-\epsilon}$ then for some $n'\geq n\epsilon^k$
there exists an $m\times n'$ submatrix $M'$ of $M$ computable in $\P^{\NP}$ such that $\indim_{M'}(t)\leq \rk(M')-\epsilon n'$. 
\end{claim}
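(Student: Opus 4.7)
The plan is an iterative peeling argument with an $\NP$ oracle, run in contrapositive style. I will maintain a submatrix $M_i$ of $M$ on a column set $S_i \subseteq [n]$, starting with $M_0 = M$ and $S_0 = [n]$. At step $i$, the algorithm queries the $\NP$ oracle (justified by Observation~\ref{obs:inner-dim}, which places deciding $\indim$ in $\NP$) to test whether $\indim_{M_i}(t) \leq \rk(M_i) - \epsilon|S_i|$; if yes, halt and return $M' := M_i$ and $n' := |S_i|$. Otherwise the oracle also yields a $t$-sparse witness $U_i \subseteq \mathbb{F}^m$ with $\dim(U_i) \leq \rk(M_i)$ and $\dim(U_i \cap \colsp(M_i)) > \rk(M_i) - \epsilon|S_i|$. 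The quotient $\colsp(M_i)/(U_i \cap \colsp(M_i))$ then has dimension less than $\epsilon|S_i|$, so straightforward linear algebra picks a subset $T_i \subseteq S_i$ with $|T_i| < \epsilon|S_i|$ whose corresponding columns of $M_i$ descend to a basis of this quotient; equivalently, every column of $M_i$ lies in $U_i + \mathrm{span}\{M_i[:,c] : c \in T_i\}$. I then set $S_{i+1} := T_i$, $M_{i+1} := M_i[:,T_i]$, and repeat.

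The key bookkeeping concerns what happens if the iteration runs through steps $0, 1, \ldots, i^* - 1$ without halting. Telescoping the inclusion $\colsp(M_i) \subseteq U_i + \colsp(M_{i+1})$ gives
\[
\colsp(M) \ \subseteq \ U_0 + U_1 + \cdots + U_{i^*-1} + \colsp(M_{i^*}).
\]
The sum $U_0 + \cdots + U_{i^*-1}$ is $(i^*t)$-sparse (a sum of $t$-sparse spaces), and $\colsp(M_{i^*})$ is trivially $|S_{i^*}|$-sparse: the matrix $M_{i^*}$ itself is a generator with only $|S_{i^*}|$ columns, so each of its rows has at most $|S_{i^*}|$ nonzero entries. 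Hence the right-hand side is $(i^*t + |S_{i^*}|)$-sparse and has dimension at most $\sum_{i=0}^{i^*-1} \dim(U_i) + \dim(\colsp(M_{i^*})) \leq \sum_{i=0}^{i^*} |S_i|$. Since $|S_{i+1}| < \epsilon |S_i|$ forces $|S_i| \leq \epsilon^i n$, this sum is strictly less than $n \sum_{i \geq 0} \epsilon^i = n/(1 - \epsilon)$.

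Combining these: if the iteration ever reaches an index $i^* \leq k$ with $|S_{i^*}| < n\epsilon^k$ without having halted, the previous display exhibits an outer cover witnessing $\outdim_M(kt + n\epsilon^k) < n/(1 - \epsilon)$, contradicting the hypothesis. But $|S_k| < \epsilon^k n$, so such an $i^*$ must exist unless the algorithm halts earlier; either way the iteration must halt at some step $i^* \leq k$ with $|S_{i^*}| \geq n\epsilon^k$ and $\indim_{M_{i^*}}(t) \leq \rk(M_{i^*}) - \epsilon|S_{i^*}|$, producing the desired $M'$. Each of the at-most-$k$ iterations uses one $\NP$-oracle call and polynomial-time linear algebra, so $M'$ is constructible in $\P^{\NP}$. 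I expect the main subtlety to be (a) converting the inner-dimension witness $U_i$ into the explicit column-elimination step that produces $T_i$---columns of $M_i$ themselves are not in $U_i$, only their images in the quotient are controlled---and (b) the observation that the residual space $\colsp(M_{i^*})$ already counts as $|S_{i^*}|$-sparse, so no extra sparsity budget beyond $n\epsilon^k$ is needed to absorb it into the final $(kt + n\epsilon^k)$-sparse outer cover.
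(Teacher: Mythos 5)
Your proposal is correct and follows essentially the same iterative-decomposition argument the paper uses: both peel off a $t$-sparse inner-dimension witness at each step, keep a shrinking column submatrix, and telescope into a $(tk+n\epsilon^k)$-sparse cover of $\colsp(M)$ with dimension below $n/(1-\epsilon)$ for the contradiction. The paper phrases the per-step decomposition as a matrix identity $M = AB + M'C$ (with $M'$ a column-submatrix spanning the residual $r$-dimensional complement), which is exactly the column-space inclusion $\colsp(M_i) \subseteq U_i + \colsp(M_{i+1})$ you derive via the quotient argument, and it unrolls this into a single factorization $M = PQ$ rather than a sum of subspaces; your bookkeeping and theirs give the identical sparsity and dimension bounds.
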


Let us complete the proof of Theorem \ref{thm:rigidity-ds} assuming Claim \ref{claim:indim-outdim}. Let $\epsilon,\delta>0$ be constants and $M\in \mathbb{F}^{m\times n}$. Suppose there is no $(\frac{n}{1-\epsilon},(\log n)^c)$ linear data structure for a matrix $M$ then by Observation \ref{obs:outdim} we know that $\outdim_{M}((\log n)^c) > \frac{n}{1-\epsilon}$. Observe that by setting $k=\frac{\log(n/t)}{\log(1/\epsilon)}$ and $t=\frac{(\log n)^{c-1}}{\log(1 /\epsilon)}-1$, we get $n\epsilon^k = n\epsilon^{\frac{\log(n/t)}{\log(1/\epsilon)}} = t$ and hence $tk+n\epsilon^k = (k+1)t$. This implies that $\outdim_{M}(tk + n\epsilon^k) > \frac{n}{1-\epsilon}$ for values of $t,k$ chosen above. Now, by Claim \ref{claim:indim-outdim} for some $n'\geq n\epsilon^k$
there exists an $m\times n'$ submatrix $M'$ of $M$ computable in $P^{\NP}$ such that $\indim_{M'}(t)\leq \rk(M')-\epsilon n'$. From Observation \ref{obs:inndim} we get $RR_{M'}(\epsilon n') \geq (\log n)^{c-1}$. $\qedh$ \\

Now, let us briefly sketch the proof of Claim \ref{claim:indim-outdim}. Let us begin by observing that matrices with {large} inner-dimension have a {\em decomposition property}  that can be obtained efficiently given access to an $\NP$ oracle. That is, given an $m\times n$ matrix $M \in \mathbb{F}^{m\times n}$ with $\indim_M(t) \geq \rk(M)- r$ we can obtain matrices $A\in \mathbb{F}^{m\times n},B\in \mathbb{F}^{n\times n},C\in \mathbb{F}^{r\times n},M'\in \mathbb{F}^{m \times r}$ such that $A$ is $t$-row sparse, $M'$ is a sub-matrix of $M$ and $M=A\cdot B+ M'\cdot C$. Over large enough finite fields $\mathbb{F}$, such a decomposition can be obtained in polynomial time given an oracle computing inner-dimension of a matrix. As ${\sf InnerDim}(M,d,t)\in \NP$ from Observation \ref{obs:inner-dim}, we have that this decomposition can be computed in $\P^{\NP}$.

Given the above decomposition property we will argue Claim \ref{claim:indim-outdim} that if all the {\em useful} sub-matrices of $M$ have {\em large} inner dimension then $M$ has {\em small} outer-dimension which is a contradiction.

That is, suppose 
$\outdim_M(tk+n\epsilon^k)\geq \frac{n}{1-\epsilon}$  and  $\indim_{M}(t)\leq \rk(M)-\epsilon n$(here $r=\epsilon n$). Then, by the decomposition property, $M= A\cdot B + M' \cdot C$ for some $A\in \mathbb{F}^{m\times n},B\in \mathbb{F}^{n\times n},C\in \mathbb{F}^{r\times n},M'\in \mathbb{F}^{m \times r}$ where $A$ is $t$-row sparse and $M'$ is a sub-matrix of $M$. Further, if $M'$ also does not have the requisite inner-dimension then by recursively applying the decomposition procedure we get: \begin{align*}
M &=  A\cdot B + M' \cdot C \\
&= AB+ A'B'C + M''C'C & [\because M''&=A'B' + M''C']\\
&= AB+ A'B'C + (A''B''+M'''C'')C'C & [\because M'''&=A''B'' + M'''C'']\\
&= AB+ A'B'C + A''B''C'C + M'''C''C'C \\
M &= \begin{bmatrix}
A & A' & A'' & M'''       
\end{bmatrix} \cdot 
\begin{bmatrix}
B \\
B' \\
B''C \\
C''C       
\end{bmatrix}
\end{align*}
assuming none of $M,M',M'',\dots$ and so on have {\em low} inner-dimension, Now after $k$ steps of the decomposition procedure we obtain:

$$M = \begin{bmatrix}
A_0 & A_1 & A_1 & \cdots & A_{k-1} & M_k       
\end{bmatrix} \cdot 
\begin{bmatrix}
N_0 \\
N_1 \\
\vdots \\
\vdots \\
 N_{k} 
\end{bmatrix}$$
where $A_0,\ldots ,A_{k-1}$ are all $t$-row sparse matrices, $M_k \in \mathbb{F}^{m\times n\epsilon^k}$ and $N_0,\ldots ,N_{k}$ are obtained from $B$'s and $C$'s accordingly. It is not difficult to observe that from the above decomposition we get $M=P\cdot Q$ where $P$ has at most $tk+n\epsilon^k$ non-zero entries in each row as the $k$ matrices $A_0,\ldots ,A_{k-1}$ have $t$ non-zero entries per row and $M_k$ has at most $n\epsilon^k$ columns. Further, note that each matrix $N_i$ has dimension $n\epsilon^i\times n$. Hence $Q \in \mathbb{F}^{s\times n}$ where $s=n(1+\epsilon+\epsilon^2 + \cdots +\epsilon^k)$ which is less than $\frac{n}{1-\epsilon}$ for any positive integer $k$ and $\epsilon\in (0,1)$. Thus, from the definition of outer-dimension $\outdim_M(tk+n\epsilon^k)< \frac{n}{1-\epsilon}$ which is a contradiction. \hspace*{108 mm} {\small (End of Claim \ref{claim:indim-outdim})}$\qedh$

\section{Matrix rigidity and error-correcting codes}
\label{sec:codes}
\paragraph*{}
{\em Coding theory} essentially deals with detecting and correcting errors in messages transmitted over a noisy channel thereby ensuring reliable communication. Suppose there are two parties {\em Alice} and {\em Bob} and Alice wants to send a message $m\in\{0,1\}^k$ to Bob. Alice encodes the message $m$ using an encoding function $E:\{0,1\}^k \rightarrow \{0,1\}^n$ and send the $c=E(m)$ in $\{0,1\}^n$ over a transmission channel that could potentially be noisy. Here the word $c=E(m)\in \{0,1\}^n$ is called the {\em codeword}. Let $C$ denote the set of all possible codewords in $\{0,1\}^n$.  Now, Bob receives a word $c'\in \{0,1\}^n$ called the {\em received word} and uses a decoding function $D:\{0,1\}^n\rightarrow \{0,1\}^k$ to obtain $m'=D(c')$. In an ideal channel with no noise, $c'=c$.

In other cases, if Bob is able to identify the codeword $c$ from the received word $c'$, then he can get hold of the message $m$ by using $D(c)$. One intuitive way to do this is by designing the encoding algorithm to repeat the message $m$ several times (here $n \gg k$). This redundancy in the codeword is captured in the value $k/n$ called {\em rate} of the code (denoted by $R(C)$). For any code $C$, $R(C)\leq 1$ and is inversely proportional to the actual redundancy. Further,  {\em distance} between two codewords is another important parameter which is the hamming distance (denoted by $\Delta$) between them. Observe that as the distance between two codewords increases, it is unlikely to confuse one codeword for another which intuitively helps detect errors in the received codeword. The relative distance $\delta(C)$ of a code $C$ is $d/n$ where $d= \min\limits_{c\in C} \Delta(0,c)$. An immediate question would be to understand the optimal trade-off between $R(C)$ and $\Delta(C)$. There is huge body of work revolving around this question and we refer the reader to \cite{Pra07} for more details. In this article, we will be interested particularly in {\em linear codes}.

An $[n,k,d]_{q}$ linear code is one where the set $C$ of codewords is a linear subspace of $\mathbb{F}_q^n$ of dimension $k$ and distance of the code is $d$. Observe that every codeword of a linear code can be obtained as a linear combination of the rows of an $n\times n$ {\em generator} matrix $G_C$. Now that we have associated matrices with codes, it is natural to ask how rigid the generator matrices of  codes are?

To begin with, we demonstrate a connection between coding theory({\em asymptotically good codes}) and matrix rigidity.

\subsection{Rigidity of generator matrices of asymptotically good codes}

Asymptotically good codes are family of codes whose rate and relative distance are both constant in the asymptotic sense.

\begin{definition}
\label{def:alg-codes}
A family of codes ${\cal C} = \{C_i\}_{i\geq 1}, C_i = [n_i,k_i,d_i]_{q}$  is said to be {\em asymptotically good} if there exists constants $R_0,\delta_0>0$ such that $\lim\limits_{n\rightarrow \infty} \frac{k_i}{n_i}\geq R_0$ and $\lim\limits_{n\rightarrow \infty} \frac{d_i}{n_i}\geq \delta_0$. 
\end{definition}

Using algebraic geometric codes\cite{TLP98}, we can prove the existence of asymptotically good error correcting codes.  We state the lemma about the existence of asymptotically good error correcting codes without giving a proof. For a proof see Theorem 2.81 in \cite{TLP98}. 

\begin{lemma} 
\label{lem:asym-good-code}
Let $\mathbb{F}_q$ be a finite field.  For infinitely many $n$, there exists $[2n,n,d]_{q}$ code with rate $1/2$ and relative distance at least $1-\frac{2}{\sqrt{q}-1}$.
\end{lemma}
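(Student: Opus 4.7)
The plan is to invoke the Tsfasman--Vladut--Zink theorem from algebraic geometry, which states that for every prime power $q$ that is a perfect square there exists an infinite sequence of smooth projective algebraic curves $\{X_i\}_{i \geq 1}$ defined over $\mathbb{F}_q$ with genera $g_i \to \infty$ and numbers of $\mathbb{F}_q$-rational points $N_i$ satisfying $\lim_{i \to \infty} N_i / g_i = \sqrt{q} - 1$, saturating the Drinfeld--Vladut upper bound. The ratio $\sqrt{q} - 1$ is precisely what drives the constant appearing in the lemma, and using such a sequence of curves I would construct the desired codes as algebraic-geometric (Goppa) codes.

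For each curve $X_i$ of genus $g = g_i$, I would pick $N = 2n$ distinct $\mathbb{F}_q$-rational points $P_1, \ldots, P_{2n}$ on $X_i$ (possible once $N_i \geq 2n + 1$) together with a divisor $D$ on $X_i$ of degree $\ell$ whose support is disjoint from these points. The associated AG code is then $C = \{(f(P_1), \ldots, f(P_{2n})) : f \in L(D)\}$, where $L(D)$ is the Riemann--Roch space of rational functions on $X_i$ with pole divisor bounded by $D$. The Riemann--Roch theorem gives $\dim_{\mathbb{F}_q} L(D) \geq \ell - g + 1$, with equality once $\ell \geq 2g - 1$, and a standard counting argument shows that the minimum distance of $C$ is at least $N - \ell$, since any nonzero $f \in L(D)$ has at most $\ell$ zeros on $X_i$ and therefore vanishes on at most $\ell$ of the chosen evaluation points.

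To land on the parameters of the lemma, I would set $\ell = n + g - 1$, which guarantees $\dim_{\mathbb{F}_q} C \geq n$ (rate $1/2$) and forces $d \geq N - \ell = n - g + 1$. Since the TVZ sequence satisfies $g/N \to 1/(\sqrt{q} - 1)$, for infinitely many $n$ the index $i$ can be chosen so that $g_i \leq 2n/(\sqrt{q} - 1)$, and then $d \geq n \cdot \left(1 - \tfrac{2}{\sqrt{q}-1}\right) + 1$, matching the lemma's stated relative-distance bound up to lower-order terms. The main technical obstacle is the construction of the TVZ tower itself: producing explicit infinite sequences of curves that meet the Drinfeld--Vladut bound, whether via reductions of modular curves or via the explicit recursive Garcia--Stichtenoth towers, is the deep piece of algebraic geometry that the cited result packages, while the manipulations with Riemann--Roch and point-evaluation codes are entirely routine once the curves are in hand.
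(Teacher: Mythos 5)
The paper does not actually supply a proof of this lemma---it defers entirely to Theorem~2.81 of \cite{TLP98}---and your argument is precisely the standard algebraic-geometry-codes derivation that the cited reference packages: Goppa codes on a Tsfasman--Vladut--Zink/Garcia--Stichtenoth tower of curves meeting the Drinfeld--Vladut bound, with the divisor degree tuned via Riemann--Roch to force dimension $n$ and distance at least $2n-\ell$. Two small points worth flagging: you correctly note the hypothesis that $q$ must be a square prime power (a hypothesis the paper's statement silently omits, and which is also needed for $1-\frac{2}{\sqrt q -1}$ to be positive, i.e.\ $q\ge 16$); and the lemma's phrase ``relative distance'' is really $d/n$ rather than $d/(2n)$, which is exactly the normalization your bound $d\ge n\bigl(1-\tfrac{2}{\sqrt q-1}\bigr)$ produces and the one the paper uses in the theorem that immediately follows.
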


For the above code let $G$ denote the generator matrix and  the generator matrix can be brought to the standard form $G_C=[ I_n \mid A]$  where $I_n$ is the $n\times n$ identity matrix and $A$ is a $n\times n$ matrix. In the following theorem, we prove that the matrix $A$ has high rigidity over $\mathbb{F}_q$:

\begin{theorem}
Let $A\in\mathbb{F}_q^{n\times n}$ be a  the matrix obtained from the standard form of the generator matrix of the $[2n,n,(1-\epsilon)n]$ code for $\epsilon = \frac{2}{\sqrt{q}-1}$ as in Lemma \ref{lem:asym-good-code}. Then $R_A(r) = \Omega(\frac{n^2}{r}\log\frac{n}{r})$ for $\epsilon n \leq r \leq n/2$.
\end{theorem}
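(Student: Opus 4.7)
The plan is a direct averaging argument: I will show that if $A$ has rigidity below $\Omega(n^2/r)$ for rank $r$, then $C$ contains a nonzero codeword of weight strictly less than $(1-\epsilon)n$, contradicting the minimum distance.

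Suppose toward contradiction $R_A(r) \leq s$, and write $A = L + S$ with $\rk(L) \leq r$ and $\spar(S) \leq s$. The average row sparsity of $S$ is $s/n$, so the $r+1$ rows of $S$ with smallest sparsity have combined sparsity at most $(r+1)s/n$; call this index set $I$. The submatrix $L|_I$ is $(r+1) \times n$ with rank at most $r$, so its rows admit a nontrivial linear dependence: pick a nonzero $x \in \mathbb{F}_q^{r+1}$ with $xL|_I = 0$, and extend to $\bar{x} \in \mathbb{F}_q^n$ supported on $I$. Then $\bar{x}L = 0$, hence $\bar{x}A = \bar{x}S$, and $\text{wt}(\bar{x}A) = \text{wt}(\bar{x}S) \leq \sum_{i \in I} \text{wt}(S_i) \leq (r+1)s/n$.

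The codeword $c := [\bar{x} \mid \bar{x}A] \in C$ is nonzero (because $\bar{x} \neq 0$) and has weight at most $\text{wt}(\bar{x}) + \text{wt}(\bar{x}A) \leq (r+1)(1 + s/n)$. The code property $\text{wt}(c) \geq (1-\epsilon)n$ rearranges to $s \geq n\bigl((1-\epsilon)n/(r+1) - 1\bigr)$, which is $\Omega(n^2/r)$ as long as $r \leq (1-\epsilon)n/2$; since $\epsilon = 2/(\sqrt{q}-1) < 1/2$ for $q$ large enough, this covers the entire interval $r \in [\epsilon n,\, n/2]$.

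To reconcile with the stated form $\Omega(n^2\log(n/r)/r)$, note that on the range $r \in [\epsilon n,\, n/2]$ we have $n/r \in [2,\, 1/\epsilon]$, and therefore $\log(n/r) \leq \log(1/\epsilon) = \log\bigl((\sqrt{q}-1)/2\bigr)$ is a positive constant depending only on $q$. So $n^2/r$ and $n^2 \log(n/r)/r$ agree up to the multiplicative constant hidden in $\Omega(\cdot)$, and the averaging bound already supplies the theorem. The only delicate point is the averaging inequality itself---the $r+1$ rows of $S$ with smallest sparsity really do contribute total sparsity at most $(r+1)s/n$, which one verifies by sorting $w_1 \leq \cdots \leq w_n$ and noting $(n-k)\sum_{i \leq k} w_i \leq (n-k) k\, w_k \leq k(n-k) w_{k+1} \leq k \sum_{i > k} w_i$. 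I anticipate no deeper obstacle here: the $\log(n/r)$ factor of the statement is essential for the totally regular bound of Theorem~\ref{thm:totally-regular} across the full range of $r$, but collapses to a $q$-dependent constant in the specific regime $r \in [\epsilon n, n/2]$ considered here, so the Zarankiewicz / untouched-minor machinery invoked there is not needed.
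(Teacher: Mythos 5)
Your proof is correct and takes a genuinely different route from the paper's. The paper proceeds by showing that every $2(r+1)\times 2(r+1)$ submatrix of $A$ has rank at least $r+1$ --- otherwise it produces a nonzero codeword of weight below $(1-\epsilon)n$ --- and then invokes the untouched-minor / Zarankiewicz argument used for totally regular matrices in Theorem~\ref{thm:totally-regular}. Your argument instead attacks a hypothetical decomposition $A = L+S$ directly: sort the rows of $S$ by weight, take the $r+1$ lightest (total weight $\leq (r+1)s/n$ by the sorting inequality you supply), find a nonzero left-kernel vector $\bar{x}$ of the corresponding $(r+1)\times n$ block of $L$, and read off the codeword $[\bar{x}\mid\bar{x}A]=[\bar{x}\mid\bar{x}S]$, whose weight the minimum distance forces to be large. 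This avoids the Zarankiewicz machinery entirely, at the price of obtaining $\Omega(n^2/r)$ without the $\log(n/r)$ factor. As you correctly observe, that factor is a $q$-dependent constant on $r\in[\epsilon n, n/2]$, so the two bounds agree; the untouched-minor route would in principle produce a genuine $\log$ factor for much smaller $r$, but the algebraic-geometry code of Lemma~\ref{lem:asym-good-code} has relative distance only $1-\epsilon$, which caps the usable range at $r\gtrsim\epsilon n$ anyway, so nothing is lost here.

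One small inaccuracy: you establish $\Omega(n^2/r)$ ``as long as $r\leq(1-\epsilon)n/2$'' and then assert this covers all of $[\epsilon n,n/2]$, but $(1-\epsilon)n/2<n/2$, so the top of the interval is not literally included. The conclusion still holds there: for $r\leq n/2$ one has $\tfrac{(1-\epsilon)n}{r+1}-1\geq 1-2\epsilon-o(1)>0$ for $q$ large enough, giving $s\geq\Omega(n)=\Omega(n^2/r)$ on the remainder of the range, merely with a smaller hidden constant. Estimating directly against the threshold $r\leq n/2$ cleans this up.
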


\begin{proof}
Let $\epsilon n \leq r \leq n/2$ and $A'$ be a $2(r+1)\times 2(r+1)$ submatrix of $A$. We claim that $\rk(A') \geq r+1$. Suppose not, $\rk(A') < r+1$. Then, there exists a codeword of weight $n-(r+1) < n-\epsilon n$. Hence, minimum distance of the code is at most $n(1-\epsilon)$, a contradiction. This implies that every $2(r+1)\times 2(r+1)$ sub-matrix of $A$ has rank at least  $r+1$. Now, by following an argument similar to the untouched minor argument, we get the required lower bound.
\end{proof}

\begin{remark}
\label{rem:dvir}
Although matrices of high rigidity can be obtained from generator matrices of asymptotically
good linear codes, \cite{Dvir16} obtained a distribution ${\cal D}$ of matrices such that for $G \sim D$, $G$ generates a good linear code but with high probability $R_G(r) \leq  O(n^2/r)$ for any $r \leq  O(r \log(\frac{n}{r}))$.
\end{remark}

Next, we review a result of Dvir\cite{Dvir16} which states that if the generating matrix $G_C$ of any locally decodable code $C$ is not row rigid then there exists a locally self-correctable code $C'$ with rate of $C'$ is $\approx$ 1.

\subsection{Locally self-correctable codes and rigid matrices}

The focus of this subsection is to review the connections between {\em locally decodable codes} or {\em locally self-correctable codes} and matrix rigidity which is the main result of \cite{Dvir11}. Informally a {\em locally decodable code}(LDC) is an error-correcting code that enables probabilistically decoding a particular symbol of the message by querying a {\em small} number of locations of the corresponding codeword even when the codeword is corrupted in a {\em few} locations while a {\em locally self correctable code}(LCC) is an error-correcting code that enables probabilistically decoding bits of the codeword rather than the message which can be viewed as self-correcting the corrupted codeword. For any vector $v$, we denote by $w(v)$ the Hamming weight of the vector $v$. We give the formal definitions below:

\begin{definition}
[Locally decodable code.]
A $(q,\delta,\epsilon)$-LDC $C$ is a linear map $C:\mathbb{F}_p^n\rightarrow \mathbb{F}_p^m$ such that there is a randomized decoding algorithm $D:\mathbb{F}_p^m\times [n]\rightarrow \mathbb{F}_p$ that on input $(c+u,i)$ queries at most $q$ locations in $c+u$ and recovers with probability at least $1-\epsilon$, the $i^{th}$ bit of message $x$ from $c+u$  where $c =C(x)$ and $w(u)\leq \delta\cdot n$ (i.e.,  codeword $c$ is corrupted in at most $\delta\cdot n$  locations). 
\end{definition}

\begin{definition}[Locally self-correctable code.] A $(q,\delta,\epsilon)$-LCC is a linear map $C':\mathbb{F}_p^n\rightarrow \mathbb{F}_p^m$
such that there is a randomized (self-correcting) algorithm $D':\mathbb{F}_p^m\times [n]\rightarrow \mathbb{F}_p$ that on input $(c+u,i)$ queries at most $q$ locations in $c+u$ and recovers with probability at least $1-\epsilon$, the $i^{th}$ bit of codeword $c$ from $c+u$  where $u\in \mathbb{F}_p^n$ with $w(u)\leq \delta\cdot n$ (i.e.,  codeword $c$ is corrupted in at most $\delta\cdot n$  locations). 
\end{definition}

We say an error-correcting code $C$ is {\em explicit} if every entry of the generator matrix can be obtained in deterministic polynomial time. It is interesting to note the following  explicit constructions of locally decodable code from \cite{Dvir11} which will be useful for our purpose. We do not prove this construction here(for proof, see Corollary 3.3 in \cite{Dvir11})

\begin{theorem}
\label{thm:ldc-existance}
For any $\epsilon,a>0$, there exists an explicit family of codes  $C_n:\mathbb{F}_p^n\rightarrow \mathbb{F}_p^m$ such that $C_n$ is a $(n^a,\delta,\epsilon)$-LDC with $m=O(n)$ and $\delta=\delta(\epsilon)>0$.
\end{theorem}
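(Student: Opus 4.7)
The plan is to instantiate the standard Reed--Muller low-degree local decoder. Fix $\epsilon, a > 0$, set $r = \lceil 1/a \rceil + 1$ so that $1/r < a$, choose $\gamma \in (0,1)$, and let $\delta := \gamma\epsilon/2$. For each sufficiently large prime power $q$ (taken to be a power of $p$ so that after a standard constant-blowup alphabet reduction the code lies over $\mathbb{F}_p$, if $p$ is small), let $C = \mathrm{RM}_q(r, d)$ be the $r$-variate Reed--Muller code of total degree $d := \lfloor (1-\gamma)q \rfloor$ over $\mathbb{F}_q$. Its dimension is $K = \binom{r+d}{r} = \Theta(q^r)$ (with constants depending on $r$ and $\gamma$) and its block length is $q^r$, so the rate is $\Omega(1)$ and the block length is $O(K)$. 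Since $\dim C = K$, I fix $K$ evaluation points forming an information set, put the code in systematic form, and identify each message index $i \in [K]$ with the corresponding point $\alpha_i \in \mathbb{F}_q^r$.

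The local decoder, on input $(y, i)$, samples a uniformly random nonzero $\beta \in \mathbb{F}_q^r$, queries the $q$ values of $y$ on the affine line $\ell := \{\alpha_i + t\beta : t \in \mathbb{F}_q\}$, and runs Reed--Solomon unique decoding on $y|_\ell$; this is valid because the restriction of any codeword of $C$ to $\ell$ is a univariate polynomial of degree at most $d$. For any fixed $\alpha_i$, each of the other $q-1$ points on $\ell$ is marginally uniform over $\mathbb{F}_q^r \setminus \{\alpha_i\}$, so if $y = c + u$ with $w(u) \leq \delta q^r$, linearity of expectation gives at most $\delta q$ expected errors on $\ell$. Markov's inequality then bounds the probability that $\ell$ carries more than $(q-d)/2 = \gamma q/2$ errors by $2\delta/\gamma = \epsilon$; on the complement, Reed--Solomon unique decoding recovers the restriction exactly, and its evaluation at $\alpha_i$ equals $c_{\alpha_i}$, which by systematicity is the desired message bit.

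Counting parameters, the query complexity is $q = \Theta(K^{1/r}) \leq K^a$ for all sufficiently large $K$ (since $1/r < a$), the block length is $\Theta(K) = O(n)$ in the notation of the theorem, and the encoder and decoder run in polynomial time given an efficient Reed--Solomon decoder, so the family is explicit. The key geometric fact making the Markov argument go through is that every fixed point $\alpha_i$ lies on $\Theta(q^{r-1})$ distinct affine lines, forcing an adversary's $\delta q^r$ errors to spread out. This is precisely what fails for the simpler tensor-product construction $C_0^{\otimes r}$: only $r$ axis-aligned fibers pass through any point, so an adversary with budget $O(r q)$ can ruin every fiber through a designated message coordinate, which is why I avoid that route. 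The one technical caveat I would double-check is the alphabet reduction from $\mathbb{F}_q$ to $\mathbb{F}_p$, where each $\mathbb{F}_q$-symbol is stored as $\log_p q = O(1)$ symbols of $\mathbb{F}_p$, multiplying block length by a constant and preserving query complexity, rate, and $\delta$ up to constants.
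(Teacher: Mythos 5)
The paper does not prove this theorem itself --- it defers to Corollary~3.3 of Dvir's paper \cite{Dvir11} --- so your argument has to be checked on its own terms. The core construction you give (the $r$-variate Reed--Muller code over $\mathbb{F}_q$ with total degree $d=\lfloor(1-\gamma)q\rfloor$, decoded by restricting to a random affine line through the query point and running Reed--Solomon unique decoding) is the standard route, and the parameter bookkeeping is in order: dimension $K=\binom{r+d}{r}=\Theta(q^r)$ giving constant rate, marginal uniformity of the $q-1$ non-anchor points on the line, the Markov bound $2\delta/\gamma\leq\epsilon$, and the query count $q=\Theta(K^{1/r})\leq K^a$ once $r>1/a$. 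The digression on why axis-aligned tensor codes fail is also correct.

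The gap is precisely the caveat you flag, and it is a real one. Since $r$ is a fixed constant and the block length is $q^r$, you are forced to take $q=\Theta(n^{1/r})$, so $\log_p q=\Theta(\log n)$, not $O(1)$. Worse, the naive base-$p$ representation of $\mathbb{F}_q$-symbols does not preserve $\delta$ up to constants: a $\delta'$-fraction of $\mathbb{F}_p$-coordinate corruptions can, by placing one error into each of $\delta' m$ distinct $\mathbb{F}_q$-blocks, ruin a $\delta'\log_p q=\Theta(\delta'\log n)$-fraction of the outer $\mathbb{F}_q$-symbols, so keeping the outer decoder's error fraction below the constant $\gamma\epsilon/2$ would force $\delta'=O(1/\log n)\to 0$, contradicting the requirement $\delta=\delta(\epsilon)>0$. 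The repair is standard code concatenation: encode each $\mathbb{F}_q$-symbol with a fixed $\mathbb{F}_p$-linear inner code of constant rate $R_{\mathrm{in}}$ and constant relative distance $\delta_{\mathrm{in}}$. Then spoiling one outer symbol requires corrupting more than a $\delta_{\mathrm{in}}/2$-fraction of that inner block, so a $\delta'$-fraction of $\mathbb{F}_p$-errors damages at most a $2\delta'/\delta_{\mathrm{in}}$-fraction of outer symbols; taking $\delta'=\delta\delta_{\mathrm{in}}/2$ keeps it constant, the rate remains $\Omega(1)$, the query count becomes $q\cdot O(\log q/R_{\mathrm{in}})=o(n^a)$ since $1/r<a$, and $\mathbb{F}_p$-linearity of the whole code is preserved. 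With that modification your argument is complete.
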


We now state the main theorem of \cite{Dvir11} showing that if the generating matrix $G_C$ of any locally decodable code $C$ is not row rigid then there exists a locally self-correctable code $C'$ with dimension close to $n$. We first give a sketch of the proof and then move on to the details.

\begin{theorem}
\label{thm:rigidity-selfcorrectable-codes}
Let $C:\mathbb{F}_p^n\rightarrow \mathbb{F}_p^m$ be a $(q,\delta,\epsilon)$ locally decodable code whose generator matrix $G_C$ has $RR_{G_C}(r)\leq s$. Then for any $\rho >0$, there exists a $(q',\delta',\epsilon)$-LCC $C'$ (a subspace of $\mathbb{F}_p^n$) with 
$q'=qs, \delta' = (\rho\delta)/s$ and dimension of $C'$ being $n(1-\rho)-r$.
\end{theorem}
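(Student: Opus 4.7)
My plan is to exploit the rigidity decomposition of $G_C$ to manufacture $C'$ as a carefully chosen subspace of the message space $\mathbb{F}_p^n$, with the encoding being the identity map (so codewords are the messages themselves). I will start by writing $G_C = S + L$ with $\rk(L) \le r$ and every row of $S$ carrying at most $s$ nonzero entries, and set $V := \ker(L) \subseteq \mathbb{F}_p^n$, which has dimension at least $n-r$. On $V$ the LDC-encoding $G_C x$ agrees with $Sx$, so a single entry of the LDC-codeword of $x \in V$ can be ``probed'' by reading at most $s$ entries of $x$ itself. I will then build the LCC decoder by simulating the LDC decoder of $C$ on a virtual LDC-codeword $\tilde c$ with $\tilde c_j \approx \sum_k S_{jk}(c+u)_k$, at a cost of $qs$ real queries per LDC query.

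The main obstacle is that the effective noise fed into the simulated LDC decoder is $Su$, and row-sparsity of $S$ gives no control on $w(Su)$: a single corrupted coordinate of $c$ could lie in a very dense column of $S$ and corrupt many entries of $\tilde c$. To sidestep this, I will force codewords to be supported only on the sparse columns of $S$. Since $S$ has at most $sm$ nonzero entries in total, taking $t := sm/(\rho n)$ and $B := \{j \in [n] : \text{column } j \text{ of } S \text{ has at most } t \text{ nonzeros}\}$, a Markov-style count gives $|B| \ge (1-\rho)n$. I will then set $W := \{x \in V : \operatorname{supp}(x) \subseteq B\}$ and $C' := W$; a standard dimension-of-intersection count yields $\dim W \ge \dim V + |B| - n \ge n(1-\rho) - r$, matching the claim.

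With this choice the LCC decoder on input $(c+u, i)$ with $c \in W$ and $w(u) \le \delta' n$ will proceed as follows: if $i \notin B$ output $0$, which is correct since every $c \in W$ is $B$-supported; otherwise simulate the LDC decoder of $C$ on the virtual codeword defined by $\tilde c_j := \sum_{k \in B} S_{jk}(c+u)_k$, reading each virtual entry with at most $s$ real probes into $c+u$ (since row $j$ of $S$ is $s$-sparse). Because $c$ is $B$-supported and $Lc = 0$, one verifies $\tilde c = G_C c + S(u|_B)$, so the noise seen by the simulated LDC decoder is $S(u|_B)$, which I will bound by $w(S(u|_B)) \le t \cdot w(u|_B) \le t \delta' n$ using the sparsity of the $B$-columns of $S$. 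Choosing $\delta' := \rho\delta/s$ turns this into $\le \delta m$, so the LDC decoder still succeeds with probability $\ge 1-\epsilon$, and the total query complexity is $qs$, delivering all three advertised parameters $q' = qs$, $\delta' = \rho\delta/s$, $\dim C' \ge n(1-\rho)-r$.
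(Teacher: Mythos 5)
Your proof is correct and reaches the same parameters, and the high-level strategy matches the paper: decompose $G_C = S+L$ with $\rk(L)\le r$ and $s$-sparse rows, observe that on $\ker L$ the map $c \mapsto G_C c$ equals $c \mapsto Sc$, simulate the LDC decoder on $S(c+u)$ at a cost of $s$ real probes per virtual query, and control the noise $Su$ by neutralizing the $\lesssim \rho n$ dense columns of $S$. The one place you diverge from the paper is in how the dense columns are handled, and hence in the definition of $C'$. The paper's Observation~\ref{obs:col-sparse} moves the dense columns of $S$ into $L$, producing a new decomposition $G_C = S' + L'$ with $\rk(L')\le r+\rho n$ and \emph{column}-sparse $S'$, and then sets $C' := \ker(L')$. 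You instead keep $S,L$ fixed, let $B$ be the set of sparse columns, and set $C' := \{x \in \ker L : \operatorname{supp}(x)\subseteq B\}$. A quick check shows your $C'$ is contained in the paper's $\ker(L')$ (if $\operatorname{supp}(x)\subseteq B$ then the added dense-column part of $L'$ annihilates $x$), and in general it can be a strict subspace; both satisfy the same dimension lower bound $n(1-\rho)-r$ via essentially the same counting. The trade-off is small: your version packages the dense-column argument into a support restriction and an intersection-dimension count, avoiding an explicit re-decomposition lemma, at the cost of taking a slightly smaller $C'$ (and requiring the extra, but trivial, case ``$i\notin B$, output $0$''). Either route is fine; the key lemma, the decoder simulation, the noise bound $t\cdot\delta' n = \delta m$, and the query count $qs$ are all identical.
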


\noindent\textit{Proof Sketch.} Suppose $G_C$ has row rigidity at most $s$ for rank $r$ then $G_C=S+L$ where $\rk(L)$ is {\em low} and every row of $S$ has at most $s$ non-zero entries. Since $\rk(L)$ is {\em low} to construct an LCC $C'$ of sufficiently large dimension a natural candidate for $C'$ is the ${\sf nullspace}(L)$. When $C'={\sf nullspace}(L)$, dimension of $C$ is $n-\rk(L)$ which is {\em large}(as $\rk(L)$ is low). We need to ensure that $C'$ is $(q',\delta',\epsilon)$ locally self-correctable. That is, to decode the $i^{th}$ symbol of the codeword $c$ which is corrupted in at most $\delta'$ locations we need a decoding algorithm $D'$ that on input $D(c+v)$ ($w(v)\leq \delta'\cdot n$) outputs $c_i$ with probability $1-\epsilon$. Observe that for every $c\in C'$, $$C(c)=G_C\cdot c = S\cdot c + L\cdot c = S\cdot c$$
 as $L\cdot c =0$ for $C'={\sf nullspace}(L)$.

 Now, it is sufficient to invoke the local decoding algorithm for the LDC $C$ with $(C(c)+v',i)$  as input where $v'\triangleq S\cdot v$. Here, weight of $v'$ is {\em small} as matrix $S$ is $s$-row-sparse. the algorithm $D$ that locally decodes the LDC $C$ returns $c_i$ with probability $1-\epsilon$ by querying a small number of locations as $C(c)+v' = S(c+v)$.(For technical reasons we cannot quite work with the matrix $S$ but we will construct a slightly modified matrix $S'$ from $S$ obtained in Observation \ref{obs:col-sparse}.) 
 
 $\qedh$

We now explain all the details mentioned in the above proof idea. We will need the following simple observation that for any row sparse matrix, the columns can also be made fairly sparse without increasing the rank by much. The proof appeals to the intuition that if too many columns of an $s$ row-sparse matrix are {\em dense} then we can find a row that is not $s$-sparse.

\begin{obs}
\label{obs:col-sparse}
Let $\rho>0$ and $A\in\mathbb{F}^{m\times n}$ be 
any matrix with $RR_A(r)\leq s$ (i.e., $A=S+L$ where $\rk(L)\leq r$ and $S$ is $s$-row-sparse). Then, $A=S'+L'$ where $\rk(L')\leq r+\rho\cdot n$ and every column of $S'$ has at most $(s\cdot m)/
(\rho\cdot n)$ non-zero entries.  
\end{obs}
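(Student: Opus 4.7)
The plan is to take the given decomposition $A = S + L$ and push the ``dense columns'' of $S$ into the low-rank part. First I would count: since $S$ is $s$-row-sparse and has $m$ rows, the total number of nonzero entries in $S$ is at most $sm$. Call a column of $S$ \emph{dense} if it has more than $k := sm/(\rho n)$ nonzero entries. By a simple averaging argument, the number of dense columns is at most $sm/k = \rho n$.

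Let $D \subseteq [n]$ denote the set of dense columns, so $|D| \le \rho n$. Define $T \in \mathbb{F}^{m \times n}$ to agree with $S$ on columns indexed by $D$ and be zero on the remaining columns, and set $S' := S - T$. Then $T$ has at most $\rho n$ nonzero columns, so $\mathrm{rk}(T) \le |D| \le \rho n$. On the other hand, $S'$ is zero on the columns in $D$ and equals $S$ on the columns outside $D$; the latter are by definition not dense, so each has at most $k = sm/(\rho n)$ nonzero entries. Hence every column of $S'$ has at most $sm/(\rho n)$ nonzero entries.

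Finally, set $L' := T + L$. Then
\[
A \;=\; S + L \;=\; (S' + T) + L \;=\; S' + L',
\]
and by subadditivity of rank,
\[
\mathrm{rk}(L') \;\le\; \mathrm{rk}(T) + \mathrm{rk}(L) \;\le\; \rho n + r,
\]
which gives exactly the decomposition claimed. There is no real obstacle here; the only thing to be careful about is the direction of the averaging, namely that one must count nonzeros by rows (to use $s$-row-sparsity) but threshold by columns (to control column sparsity of $S'$), and then the bound $|D| \le \rho n$ falls out immediately.
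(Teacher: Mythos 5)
Your proof is correct and takes essentially the same approach as the paper: bound the total number of nonzeros in $S$ by $sm$, declare a column \emph{dense} if it exceeds $sm/(\rho n)$ nonzeros, use averaging to bound the number of dense columns by $\rho n$, zero those columns out of $S$ to get $S'$, and absorb them into $L$ to get $L'$ with a rank increase of at most $\rho n$. Your bookkeeping is in fact a touch more careful than the paper's (which has an apparent typo counting nonzeros of $A$ as at most $sn$ rather than counting nonzeros of $S$ as at most $sm$), but the argument is the same.
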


\noindent\textit{Proof of Observation \ref{obs:col-sparse}.} The number of non-zero entries in $A$ is at most $s\cdot n$. For any $\rho>0$, the number of columns with at least $(s\cdot m)/(\rho\cdot n)$ non-zero entries is at most $\rho\cdot n$. Let $C_{i_1},C_{i_2},\ldots,C_{i_j}, j\in [\rho\cdot n]$ be the columns in $S$ with at least $(s\cdot m)/(\rho\cdot n)$ non-zero entries. Let $S'$ be the matrix obtained by replacing columns $C_{i_1},C_{i_2},\ldots,C_{i_j}$ in $S$ with all zeros vectors. Let $L'$ be the matrix obtained by adding to the $i_j^{th}$ column of  $L$ the column vector $C_{i_j}$ for all $j\in [\rho\cdot n]$. Then $A=L'+S'$ where $\rk(L')\leq r+\rho\cdot n$ and every row of $S'$ has at most $s$ non-zero entries and every column of $S'$ has at most $(s\cdot m)/(\rho\cdot n)$ non-zero entries. 

$\qedh$

Now, we complete proof of theorem \ref{thm:rigidity-selfcorrectable-codes}.

\begin{proof}
Let $C:\mathbb{F}_p^n\rightarrow \mathbb{F}_p^m$ be a $(q,\delta,\epsilon)$-LDC and $G_C\in \mathbb{F}^{m\times n}$ be its generator matrix. Suppose $G_C$ has row rigidity at most $s$ for rank $r$ then by Observation \ref{obs:col-sparse}, $G_C=S'+L'$ where $
\rk(L')\leq r+\rho\cdot n$ and every row of $S'$ has at most $s$ non-zero entries and every column of $S'$ has at most $(s\cdot m)/(\rho\cdot n)$ non-zero entries. Let $C'\triangleq {\sf nullspace(L')}$. Then dimension of $C'$ as a subspace of $\mathbb{F}_p^n$ is $n-\rk(L') \geq n-(r+\rho\cdot n) = n(1-\rho)-r$.

It remains to show that $C'$ is a $(q',\delta',\epsilon)$-LCC where $q'=qs$ and $\delta'=\rho\delta/s$. In particular, we need a randomized algorithm $D':\mathbb{F}_p^n \times [n]\rightarrow \mathbb{F}_p$ that decodes (with probability at least $1-\epsilon$) a particular symbol of a codeword $c\in \mathbb{F}_p^n$ that is corrupted in at most $\delta'\cdot n$ locations by querying at most $q'$ locations of the corrupted codeword. Since $C$ is a $(q,\delta,\epsilon)$-LDC, we have at our disposal a randomized algorithm $D:\mathbb{F}_p^m \times [n]\rightarrow \mathbb{F}_p$ that decodes (with probability at least $1-\epsilon$) a particular symbol  of a message $x\in\mathbb{F}_p^n$
by querying at most $q$ locations in the corresponding codeword $C(x)$ which is corrupted in at most $\delta\cdot m$ locations. The main idea is to make $D'$ run $D$ on appropriate inputs.  Note that $D$
can correct message symbols only when the codeword is corrupted in at most $\delta m$ locations. The input to $D'$ is $(c+v,i)$ where $i\in [n], c\in \mathbb{F}_p^n$ and $v\in \mathbb{F}_p^n$ with $w(v)\leq \delta'\cdot n$. \\

The idea is to encode $c\in \mathbb{F}_p^n$ using the LDC $C$ and then use decoding algorithm $D$ on $C(c)$ to correct the $i^{th}$ bit of codeword $c_i$. Let $v'=S'\cdot v$ be a vector in $\mathbb{F}_p^m$. Observe the following:
\begin{itemize}
\item The weight of vector $v'$ is at most $\delta\cdot m$ as $w(v)\leq \delta'\cdot n$ and every column of $S'$ has at most $(s\cdot m)/
(\rho\cdot n)$ non-zero entries.
\item $D(C(c)+v',i)$ outputs $c_i$ (the $i^{th}$ bit of $c\in \mathbb{F}_p^n$) with probability $1-\epsilon$ by making at most $q$ queries to $C(c)+v'$.
\item For every $c\in C'$, $C(c)+v'= S'\cdot m + v' = S'\cdot (c+v)$. As $S'$ has at most $s $non-zero entries in every row, $D'$ makes at most $qs$ queries overall before returning $c_i$.
\end{itemize}
Thus, $C'$ is a $(q',\delta',\epsilon)$-LCC where $q'=qs$ and $\delta'=\rho\delta/s$. 
 \end{proof}

\section{Discussion and Open problems}
\label{sec:discussion}
This article is entirely based on the problem of matrix rigidity and its multiple connections to other central problems in theoretical computer science such as static data structure lower bounds, error-correcting codes and communication complexity. By now, the reader is probably convinced of the harsh $\mathbb{R}$eality of rigid matrices. Now, we mention a few open questions:

\begin{enumerate}
\item One of the foremost open problems is to answer Valiant's Question \ref{que:rigidity-Valiant} or even Razborov's Question \ref{que:rigidity-Razborov} by constructing explicit matrices of high rigidity.
We have thus far been able to obtain explicit constructions of rigid matrices in the class $\P^{\NP}$. 

%\item  The inability to prove even mild hardness on the complexity of explicit combinatorial/algebraic objects is often justified by establishing that even mild hardness if achievable can be boosted to strong hardness results, a phenomenon called {\em hardness amplification}. In this context, it is natural to ask if rigidity can be amplified? More formally, are there matrix operations  $\odot$ such that if $R_M^{\mathbb{F}}(\epsilon n) = \Omega(n)$ then $R_{M\odot M}^{\mathbb{F}}(\epsilon' n) = \Omega(n^{1+\delta})$ for some $\delta>0$ where $\epsilon'$ is a function of $\epsilon$ with $\epsilon'\gg \epsilon$? Such matrix operations are particularly worthwhile and possible achievable by concatenation of good codes.

\item One of the matrix families that we have not analysed so far is the {\em incidence matrices of projective planes} from the conjecture on Page 2. In \cite{DSW14}, the authors show that the {\em monotone rigidity} of  incidence matrices of projective planes is $\alpha n$ for rank $\alpha\sqrt{n}$(for some $\alpha>0$) where monotone rigidity means that only non-zero entries can be changed to reduce the rank of $A$. Obtaining upper or lower bounds on the rigidity of such matrices remains largely open.

\item On the computational front, what is the complexity of ${\sf RIGID}(A,\mathbb{Q},s,r)$?

\item The {\em matrix factorization problem} is seemingly the dual of matrix rigidity where the goal is to construct an explicit matrix that cannot be expressed as a product of sparse matrices. That is, we want an explicit matrix $A\in \mathbb{F}^{n\times n}$ such that if $A = A_1 \cdot A_2 \cdots A_d$ then $\spar(A_i) = \Omega(n^{1+\delta})$ for some $i \in [d]$ and $\delta > 0$. The best known lower bound for matrix factorization is  $\Omega(n\cdot \lambda_d(n))$ for some small-growing function $\lambda_d(n)$. In \cite{KV19}, authors obtain $\Omega(n^2)$ lower bounds for matrix factorization when $d = 2$ and $A_i$'s are symmetric or invertible matrices. It would be interesting to study the matrix factorization problem for other special matrices as well as in total generality.

\item In connection with error-correcting codes, can we obtain explicit constructions of good linear error-correcting codes whose generator matrices have low rigidity? A standard methodology is to use techniques from {\em derandomization toolkit} to derandomize the result of \cite{Dvir16} mentioned in Remark \ref{rem:dvir}.

\end{enumerate}

\paragraph*{Acknowledgements} I am grateful to Ramprasad Saptharishi for introducing to me the concept of matrix rigidity. I thank Ramprasad Saptharishi, Anamay Tengse and Prerona Chatterjee for numerous technical discussions on the various papers presented in this article. I thank Prahladh Harsha for providing several clarifications on the results in subsection \ref{subsec:pcp-constructions}.

\bibliographystyle{plainurl}
\bibliography{refbib}

\end{document}